%
%
%
\documentclass[12pt,a4paper]{article}

\usepackage[english]{babel}
\usepackage[utf8x]{inputenc}
\usepackage[T1]{fontenc}
 
\usepackage{geometry}
 \geometry{
 a4paper,
 total={170mm,257mm},
 left=20mm,
 top=20mm,
 }

\usepackage{amsmath,amsthm,amsfonts}
\usepackage{amssymb}
\usepackage{mathtools}
\usepackage{graphicx}
\usepackage[colorlinks=true]{hyperref}

\newtheorem{theorem}{Theorem}[section]
\newtheorem{lemma}[theorem]{Lemma}
\newtheorem{proposition}[theorem]{Proposition}

\theoremstyle{definition}
\newtheorem{definition}[theorem]{Definition} 

\newtheorem{remark}[theorem]{Remark}


\newcommand{\C}{\mathbb{C}}
\newcommand{\R}{\mathbb{R}}
\newcommand{\N}{\mathbb{N}}
\newcommand{\Z}{\mathbb{Z}}
\newcommand{\SOw}[1]{CHANGE TO M} 
\newcommand{\M}[2]{\mbox{M}^{#1}_{#2}}
\newcommand{\Schw}{\mathcal{S}}
\newcommand{\twist}{\,\natural\,}
\newcommand{\tr}{\mathrm{tr}}
\newcommand{\lhs}[2]{\prescript{}{\bullet}\!\langle #1, #2 \rangle}
\newcommand{\rhs}[2]{\langle #1, #2 \rangle_{\! \bullet}}

\newcommand{\energy}{\mbox{E}}
\newcommand{\projection}{\mathcal{P}_{s}}

\newcommand{\dd}{\mathrm{d}}
\newcommand{\nn}{\nonumber}

\title{Solitons of general topological charge \\ over noncommutative tori}

\author{Ludwik Dabrowski
\thanks{SISSA (Scuola Internazionale Superiore di Studi Avanzati),
via Bonomea 265, 34136 Trieste, Italy, \mbox{E-mail: \protect\url{dabrow@sissa.it}}}, Mads S.\ Jakobsen\thanks{Norwegian University of Science and Technology, Department of Mathematical Sciences, Trondheim, Norway, \mbox{E-mail: \protect\url{mads.jakobsen@ntnu.no}; \protect\url{franz.luef@ntnu.no}}}, Giovanni Landi\thanks{Matematica, Universita di Trieste, Via A.~Valerio~12/1, I-34127 Trieste, Italy,
and INFN, Sezione di Trieste, Trieste, Italy, \mbox{E-mail: \protect\url{landi@units.it}}}, Franz Luef\footnotemark[2]}

\begin{document}
\maketitle
\begin{abstract}
We continue the study of solitons over noncommutative tori from the perspective of time-frequency analysis and treat the case of general topological charge. Solutions are associated with vector bundles of higher rank over noncommutative tori. We express these vector bundles in terms of vector-valued Gabor frames and apply the duality theory of Gabor analysis to show that Gaussians are solitons of general topological charge over noncommutative tori. An energy functional for projections over noncommutative tori is the basis for the self and anti-self duality equations of the solitions which turns out to have a reformulation in terms of Gabor atoms and we prove that projections generated by Gaussians minimize this energy functional. Finally we comment on the case of the Moyal plane and the associated continuous vector-valued Gabor frames and show that Gaussians are the only class of solitons.
\end{abstract} 

\setcounter{tocdepth}{1}
\tableofcontents
\parskip = 1 ex

\section{Introduction}

Solitons over noncommutative tori of topological charge one were treated in \cite{dalalu15} via Gabor frames, a well-known object of time-frequency analysis. The equivalence of the construction of projections in noncommutative tori and (tight) Gabor frames \cite{lu11} indicates a potential relation between solitons over noncommutative tori and Gabor frames. 

In this paper we discuss the case of solitons of general topological charge by interpreting the relevant projective modules over noncommutative tori as vector-valued Gabor frames. Furthermore we show that Gaussians are solutions of self-duality equations (or anti-self duality equations) on noncommutative tori, and can be termed (noncommutative) sigma-model solitons. These equations are derived from Euler-Lagrange equations for an energy functionals for projections in noncommutative tori that have been introduced in \cite{dakrla00,dakrla03} and further studied in \cite{ro08-3,maro11,le16-2}. 

The energy functional for projections in noncommutative tori becomes a functional for functions generating Gabor frames in $L^{2}(\R\times\Z_{q})$, were $\Z_{q}$, with $q\in \N$ is the finite cyclic group $\{0, 1, 2, \ldots, q-1\}$ under addition modulo $q$. In our reformulation of higher-rank vector bundles over noncommutative tori in terms of Gabor frames, we also develop some of the results in \cite{dalalu15} in this setting, such as the computation of the topological charge of solitons in terms of the Connes-Chern number of projections in noncommutative tori. Our focus in this investigation is the time-frequency aspects of solitons and so we refer the reader interested into the operator algebra and noncommutative geometry aspects to \cite{dalalu15,la06-3}. We hope that this makes our exposition of this intriguing link between Gabor frames and noncommutative geometry accessible to a wider audience.

For now our results are best explained for $q=1$. In this case we consider Gabor frames for $L^{2}(\R)$ of the familiar form $\{E_{m\beta}T_{n\alpha} g\}_{m,n\in\Z}$, where $g$ is a function in the Schwartz class or, more generally, in the Banach space $\M{1}{s}(\R)$ for some $s\ge 2$, and where $\alpha$ and $\beta$ are parameters in $\R\backslash\{0\}$ such that $\vert \alpha\beta \vert< 1$.
We then show that the energy functional,
\[ E(g) = \frac{\pi}{\vert \alpha\beta\vert} \, \sum_{n,m\in\Z} ((\alpha n)^{2} + (\beta m)^{2} ) \ \vert \langle g, E_{m\beta} T_{n\alpha} S^{-1}_{g} g \rangle \vert^{2} \]
is bounded from below by the constant $q=1$ (here $S_{g}^{-1}$ is the inverse of the Gabor frame operator generated by the above Gabor system) and that the (generalized) Gaussian $g(x) = e^{-\pi x^{2} - i\lambda x}$, $\lambda\in \C$ attains this minimum. A similar result holds for $q\ne 1$. We do not know if there are other Gabor frame generators that obtain this minimum.

In the final section we discuss solitons of general topological charge over the Moyal plane and prove that also in the general case, Gaussians are the only minimizers of the energy functional for projections in the Moyal plane algebra and consequently the only solitons (for our sigma-model) on the Moyal plane.

\section{Subspaces of Feichtinger's algebra}
A space which turns out to be very well suited for our purposes is the Banach space of functions known as $\M{1}{s}(\R)$. This is a weighted modulation space introduced by Feichtinger in the '80s. 
In this section we recall some facts about it (see e.g. \cite{gr01} and \cite{fe06}). 

In the following we let $(T_{x})g(t) = g(t-x)$, $x,t\in\R$ be the translation operator and 
$(E_{\omega})g(t) = e^{2\pi i t \omega}g(t)$, $\omega,t\in\R$ be the modulation operator.

\begin{definition}
Fix any function $g$ in the Schwartz space $\Schw(\R)$ and let $s$ be a non negative real number. 
The \emph{weighted modulation space of order} $s$ is defined to be,
\[\M{1}{s}  (\R) = \big\{ f \in L^{2}(\R) \, : \, \int_{\R^{2}} \vert \langle f, E_{\omega} T_{x} g \rangle \vert \, (1+\vert x \vert + \vert \omega \vert)^{s} \ \dd(x,\omega) < \infty\big\}.\]
\end{definition}
\noindent
For $s=0$ the space $\M{1}{0}(\R)$ is the Feichtinger algebra. 
The norm 
\[ \Vert f \Vert_{\M{1}{s},g } = \int_{\R^{2}} \vert \langle f, E_{\omega} T_{x} g \rangle \vert \, (1+\vert x \vert + \vert \omega \vert)^{s} \ \dd(x,\omega) \]
turns $\M{1}{s}(\R)$ into a Banach space. One can show that different choices of $g$ define the same space and moreover yield equivalent norms. 
Let $\mathcal{F}$ denote the Fourier transform. If $0 \le s_{1} < s_{2}$, there are 
dense and continuous inclusions
\[ \Schw(\R) = \bigcap\limits_{s\in\N} \M{1}{s}(\R) \subseteq \M{1}{s_{2}}(\R) \subsetneq \M{1}{s_{1}}(\R) \subsetneq L^{1}(\R)\cap \mathcal{F}L^{1}(\R) \subsetneq C_{0}(\R).\]
Furthermore $\M{1}{s}(\R)$, $s\ge 0$, is dense in and continuously embedded into $L^{2}(\R)$.
The translation and modulation operators $T_{x}$ and $E_{\omega}$ are bounded on $\M{1}{s}(\R)$, where the operator norm depends on the order $s$ and on $x$ and $\omega$, respectively.  


It is well known that the differential operator and multiplication by polynomials map the Schwartz space $\Schw(\R)$ into itself. A similar result holds for the spaces $\M{1}{s}(\R)$.
\begin{proposition} \label{pr:der-and-mul-on-M} For any $s\ge 0$, the operators
\begin{align*}
& D : \M{1}{s+1}(\R) \to \M{1}{s}(\R), \quad D f(t) = \tfrac{d}{dt}f(t) \\
& M : \M{1}{s+1}(\R)\to \M{1}{s}(\R), \quad M f(t) = t\cdot f(t),
\end{align*}
are well-defined, linear and bounded. Moreover, 
\[ \mathcal{F} D f = 2\pi i \, M \mathcal{F} f , \qquad \text{for all} \ \ f\in \M{1}{s}(\R), \ s\ge 1.\]
\end{proposition}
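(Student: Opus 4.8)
The plan is to pass everything through the short-time Fourier transform $V_g f(x,\omega) := \langle f, E_\omega T_x g\rangle$, in terms of which $\|f\|_{\M{1}{s},g} = \int_{\R^2} |V_g f(x,\omega)|\,(1+|x|+|\omega|)^s\,\dd(x,\omega)$, and to establish, for a fixed nonzero window $g\in\Schw(\R)$ (say a Gaussian, so that $Dg$ and $Mg$ are nonzero as well) and for $f\in\Schw(\R)$, the two covariance identities
\[
V_g(Df) = 2\pi i\,\omega\, V_g f - V_{Dg}f, \qquad V_g(Mf) = x\, V_g f + V_{Mg}f
\]
(arguments $(x,\omega)$ suppressed). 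Both would follow by manipulating the integral $V_g f(x,\omega) = \int f(t)\,\overline{g(t-x)}\,e^{-2\pi i t\omega}\,\dd t$: for $D$ one integrates by parts in $t$ (no boundary terms, since $f$ and $g$ are Schwartz), and for $M$ one writes $t = (t-x)+x$ inside the integral. The key structural point is that the two ``correction'' windows $Dg$ and $Mg$ again lie in $\Schw(\R)$.

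Granting the identities, the boundedness statements would follow from the elementary pointwise weight estimates $2\pi|\omega|\,(1+|x|+|\omega|)^{s}\le 2\pi\,(1+|x|+|\omega|)^{s+1}$ and $|x|\,(1+|x|+|\omega|)^{s}\le (1+|x|+|\omega|)^{s+1}$. Indeed, multiplying $|V_g(Df)|$ by $(1+|x|+|\omega|)^s$, integrating over $\R^2$ and using the triangle inequality gives $\|Df\|_{\M{1}{s},g}\le 2\pi\,\|f\|_{\M{1}{s+1},g} + \|f\|_{\M{1}{s},Dg}$, and analogously $\|Mf\|_{\M{1}{s},g}\le \|f\|_{\M{1}{s+1},g} + \|f\|_{\M{1}{s},Mg}$. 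Since all choices of (nonzero) window give equivalent norms on $\M{1}{s}(\R)$ and the inclusion $\M{1}{s+1}(\R)\hookrightarrow\M{1}{s}(\R)$ is continuous, the last summand in each inequality is $\lesssim \|f\|_{\M{1}{s+1},g}$. This proves the bounds on the dense subspace $\Schw(\R)$; as $\M{1}{s}(\R)$ is complete, $D$ and $M$ extend uniquely to bounded linear maps $\M{1}{s+1}(\R)\to\M{1}{s}(\R)$, and these extensions coincide with the usual (distributional) derivative and multiplication operators, because convergence in $\M{1}{s+1}(\R)$ entails convergence in $L^2(\R)$, hence in $\Schw'(\R)$, on which both operators are continuous.

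For the intertwining relation, I would start from the classical identity $\mathcal{F}Df = 2\pi i\,M\mathcal{F}f$ on $\Schw(\R)$. For $s\ge 1$ both sides are bounded linear maps $\M{1}{s}(\R)\to\M{1}{s-1}(\R)$: the map $D:\M{1}{s}(\R)\to\M{1}{s-1}(\R)$ is bounded by the first part, $M:\M{1}{s}(\R)\to\M{1}{s-1}(\R)$ likewise, and $\mathcal{F}$ is bounded (indeed isometric, with the window replaced by its Fourier transform) on every $\M{1}{r}(\R)$, since under $\mathcal{F}$ the STFT is carried to itself precomposed with the rotation $(x,\omega)\mapsto(-\omega,x)$ up to a unimodular factor, and the weight $(1+|x|+|\omega|)^r$ is invariant under that rotation. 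Two bounded maps into the Banach space $\M{1}{s-1}(\R)$ that coincide on the dense subspace $\Schw(\R)$ coincide everywhere, which yields the identity on all of $\M{1}{s}(\R)$.

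The only points needing genuine care are the derivation of the two covariance identities — handled by restricting first to $\Schw(\R)$, where the integration by parts is unproblematic — and the bookkeeping around the window-dependence of the norms; I do not expect a serious obstacle beyond this.
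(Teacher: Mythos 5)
Your proposal is correct and follows essentially the same route as the paper: the same two covariance identities for $V_g(Df)$ and $V_g(Mf)$ (your signs are in fact the right ones; the paper's displayed identity for $D$ carries a harmless sign slip that is irrelevant once absolute values are taken), the same weight inequalities $2\pi\vert\omega\vert(1+\vert x\vert+\vert\omega\vert)^{s}\le 2\pi(1+\vert x\vert+\vert\omega\vert)^{s+1}$, $\vert x\vert(1+\vert x\vert+\vert\omega\vert)^{s}\le(1+\vert x\vert+\vert\omega\vert)^{s+1}$, and the same appeal to the equivalence of the window norms for $g$, $Dg$, $Mg$. The only differences are bookkeeping: you first argue on $\Schw(\R)$ and extend by density (identifying the extension with the distributional operators), and you get $\mathcal{F}Df=2\pi i\,M\mathcal{F}f$ from the Schwartz-level identity via boundedness and the $\mathcal{F}$-invariance of $\M{1}{s}(\R)$, where the paper integrates by parts directly on $\M{1}{s}(\R)$ — if anything your version is slightly more careful.
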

\begin{proof} It is straightforward to verify that
\[ \langle M f, E_{\omega} T_{x} g\rangle = \langle f, E_{\omega}T_{x} Mg\rangle + x \, \langle f,E_{\omega}T_{x} g \rangle \]
and, by use of partial integration,
\[ \langle Df, E_{\omega}T_{x}g\rangle = 2\pi i \omega \, \langle f, E_{\omega} T_{x}g\rangle + \langle f, E_{\omega}T_{x}Dg\rangle. \]
For the operator $D$, for $f\in \M{1}{s}(\R)$, 
\begin{multline*} 
  \int_{\R^{2}} \vert \langle Df, E_{\omega}T_{x}g\rangle \vert \ (1+\vert x\vert +\vert \omega\vert)^{s} \, \dd(x,\omega
) \\
  \le 2\pi \int_{\R^{2}} \vert \omega \vert \, \vert \langle f, E_{\omega}T_{x}g\rangle \vert \ (1+\vert x\vert +\vert \omega\vert)^{s} \, \dd(x,\omega) + \int_{\R^{2}} \vert \langle f, E_{\omega}T_{x} D g\rangle \vert \ (1+\vert x\vert +\vert \omega\vert)^{s} \, \dd(x,\omega) \\
  \le 2\pi \int_{\R^{2}} \vert \langle f, E_{\omega}T_{x}g\rangle \vert \ (1+\vert x\vert +\vert \omega\vert)^{s+1} \, \dd(x,\omega) + \int_{\R^{2}} \vert \langle f, E_{\omega}T_{x} D g\rangle \vert \ (1+\vert x\vert +\vert \omega\vert)^{s+1} \, \dd(x,\omega). 
\end{multline*}
Since the Schwartz functions $g$ and $Dg$ induce equivalent norms on $\M{1}{s+1}(\R)$, we conclude that there is a constant $C>0$ such that
\[ \Vert Df\Vert_{\M{1}{s}} = \int_{\R^{2}} \vert \langle Df, E_{\omega}T_{x}g\rangle \vert \ (1+\vert x\vert +\vert \omega\vert)^{s} \, \dd(x,\omega
) \le C \, \Vert f \Vert_{\M{1}{s+1}}.\]
Similarly, for the operator $M$,
\begin{multline*}
 \int_{\R^{2}} \vert \langle Mf, E_{\omega}T_{x}g\rangle \vert \ (1+\vert x\vert +\vert \omega\vert)^{s} \, \dd(x,\omega) \\
 \le \int_{\R^{2}} \vert \langle f, E_{\omega}T_{x}Mg\rangle \vert \ (1+\vert x\vert +\vert \omega\vert)^{s} \, \dd(x,\omega) + \int_{\R^{2}} \vert x \vert \, \vert \langle f, E_{\omega}T_{x}g\rangle \vert \ (1+\vert x\vert +\vert \omega\vert)^{s} \, \dd(x,\omega) \\
\le \int_{\R^{2}} \vert \langle f, E_{\omega}T_{x}Mg\rangle \vert \ (1+\vert x\vert +\vert \omega\vert)^{s+1} \, \dd(x,\omega) + \int_{\R^{2}} \vert \langle f, E_{\omega}T_{x}g\rangle \vert \ (1+\vert x\vert +\vert \omega\vert)^{s+1} \, \dd(x,\omega) .
\end{multline*}
As before, the Schwartz functions $Mg$ and $g$ induce equivalent norms on $\M{1}{s+1}(\R)$ and so, for some $C>0$, we have that
\[ \Vert f \Vert_{\M{1}{s}} = \int_{\R^{2}} \vert \langle Mf, E_{\omega}T_{x}g\rangle \vert \ (1+\vert x\vert +\vert \omega\vert)^{s} \, \dd(x,\omega
) \le C \, \Vert f \Vert_{\M{1}{s+1}}.\]
Lastly, by use of partial integration, we establish that
\begin{align*} \mathcal{F}Df(\omega) & = \int_{\R} \Big( \tfrac{d}{dt}f(t) \Big) \, e^{-2\pi i t \omega } \, \dd t = f(t) e^{-2\pi i t\omega } \Big\vert_{t=+\infty}^{-\infty} - \int_{\R} f(t) \, (-2\pi i \omega) e^{-2 \pi i t \omega } \, \dd t \\ 
& = 2\pi i \omega \int_{\R} f(t) \, e^{-2\pi i t \omega } \, \dd t = 2\pi i M \mathcal{F}f(\omega) \ \ \text{for all} \ \ f\in \M{1}{s}(\R), \ s\ge 1. \end{align*}
This concludes the proof.
\end{proof} 

 

\section{Gabor frames and non-commutative tori}

We need to review some theory on Gabor analysis and non-commutative tori. 

Consider the space $\R\times\Z_{q}$, where $\Z_q$ is the finite abelian cyclic group of order $q$.
We first define the translation and modulation operators on function on $\R\times\Z_{q}$.

\noindent 
For every $(\lambda,l)\in \R\times \Z_{q}$ we define the \emph{translation operator} (time shift) as
\[ 
T_{\lambda,l} : L^{2}(\R\times\Z_{q}) \to L^{2}(\R\times\Z_{q}), \ (T_{\lambda,l} f)(x,j) = f(x-\lambda,j-l).
\]
For every $(\gamma,c)\in \widehat{\R}\times \widehat{\Z}_{q}$ (the $\widehat{\phantom{m}}$ above the $\R$ and $\Z_{q}$ indicate that $\gamma\in \R$ and $c\in \Z_{q}$ are variables in the frequency domain) we define the \emph{modulation operator} (frequency shift)
\[ E_{\gamma,c} : L^{2}(\R\times\Z_{q}) \to L^{2}(\R\times\Z_{q}), \ (E_{\gamma,c} f)(x,j) = e^{2\pi i (x\cdot \gamma + j c/q)} f(x,j).\]
For $\nu=(\lambda,l,\gamma,c)\in \R\times\Z_{q}\times\widehat{\R}\times\widehat{\Z}_{q}$, 
we then have the \emph{time-frequency shift operator}
\begin{align*} 
 & \pi(\nu) : L^{2}(\R\times\Z_{q}) \to L^{2}(\R\times\Z_{q}), \\ 
 & \pi(\nu) f(x,j)= (E_{\gamma,c}T_{\lambda,l} f) (x,j) = e^{2\pi i (x\cdot \gamma + j c /q)}f(x-\lambda,j-l). 
\end{align*} 
Observe that the time and frequency shift operators commute up to a phase factor, 
$E_{\gamma,c} T_{\lambda,l} = e^{-2\pi i (\lambda\cdot\gamma+lc/q)} T_{\lambda,l}E_{\gamma,c}$. This phase factor is \emph{irrelevant} in the theory of Gabor frames. However, for the theory of the non-commutative torus this phase factor is paramount. Because of this, we also introduce the \emph{frequency-time shift operator}
\begin{align*} 
& \pi^{\circ}(\nu) : L^{2}(\R\times\Z_{q}) \to L^{2}(\R\times\Z_{q}), \\
& \pi^{\circ}(\nu) f(x,j) = T_{\lambda,l} E_{\gamma,c} f(x,j) = e^{2\pi i ((x-\lambda)\cdot \gamma + (j-l) c /q)}f(x-\lambda,j-l).\end{align*}
For $\nu_{1} = (\lambda_{1},l_{1},\gamma_{1},c_{1})$ and $\nu_{2} = (\lambda_{2},l_{2},\gamma_{2},c_{2})$ it is useful to define the $2$-cocycle 
\[ 
\varphi(\nu_{1},\nu_{2}) = e^{-2\pi i (\lambda_{1}\gamma_{2} + l_{1}c_{2}/q)}.
\]
Note that $\varphi(-\nu_{1},\nu_{2}) = \varphi(\nu_{1},-\nu_{2}) = \overline{\varphi(\nu_{1},\nu_{2})}$ (this $2$-cocycle is co-homologous to the anti-symmetrised one
$ 
\varphi ' (\nu_{1},\nu_{2}) = e^{-\pi i \big( (\lambda_{1}\gamma_{2} - \lambda_{2}\gamma_{1}) 
+ (l_{1}c_{2} - l_{2}c_{1}) /q \big)}
$).
Furthermore, some little algebra shows that
\begin{align*} & \pi(\nu) = \overline{\varphi(\nu,\nu)} \, \pi^{\circ} (\nu), \\ 
& \pi(\nu_{1})\pi(\nu_{2}) = \varphi(\nu_{1},\nu_{2})\, \pi(\nu_{1}+\nu_{2}) \ \quad \text{and} \quad \ \pi(\nu)^{*} = \pi^{\circ}(-\nu) = \varphi(\nu,\nu) \, \pi(-\nu) , \\
& \pi^{\circ}(\nu_{1})\pi^{\circ}(\nu_{2}) = \overline{\varphi(\nu_{2},\nu_{1})} \, \pi^{\circ}(\nu_{1}+\nu_{2}) \ \quad \text{and} \quad \ (\pi^{\circ}(\nu))^{*} =  \pi(-\nu) = \overline{\varphi(\nu,\nu)} \, \pi^{\circ}(-\nu). \end{align*}

The space $\R\times\Z_{q}\times\widehat{\R}\times\widehat{\Z}_{q}$ is the \emph{time-frequency plane} or \emph{phase space}. The real line $\R$ and its frequency domain $\widehat{\R}$ are equipped with the usual Lebesgue measure. The group $\Z_{q}$ is equipped with the counting measure, whereas $\widehat{\Z}_{q}$ is equipped with the counting measure times $q^{-1}$.


In parallel with the \emph{weighted modulation space of order} $s\geq 0$ introduced in the previous section, the space $\M{1}{s}(\R\times\Z_{q})$ is defined as follows: 
\[ \M{1}{s}(\R\times\Z_{q}) = \{ f \in L^{2}(\R\times\Z_{q}) \, : \, f(\,\cdot\,,k) \in \M{1}{s}(\R) \ \text{for all} \ k\in \Z_{q} \}, \]
endowed with a norm $\M{1}{s}(\R\times\Z_{q})$ given by
\[ \Vert f \Vert_{\M{1}{s}(\R\times\Z_{q})} = \sum_{k\in \Z_{q}} \Vert f(\,\cdot\, , k) \Vert_{\M{1}{s}(\R)}. \]
Notice that there is no weight in the finite component in $\Z_{q}$.
We also need the space
$\M{1}{s}(\R^{2})$ which consists of all functions $F \in L^{2}(\R^{2})$ that satisfy
\[ \int_{\R^{4}} \vert \langle F , E_{\omega_{1},\omega_{2}} T_{x_{1},x_{2}} G \rangle \vert ( 1 + \vert x_{1} \vert + \vert x_{2} \vert + \vert \omega_{1} \vert + \vert \omega_{2} \vert )^{s} \, d(x_{1},x_{2},\omega_{1},\omega_{2}) 
< \infty
 \]
for some fixed non-zero function $G\in\mathcal{S}(\R^{2})$. And the space 
\begin{multline*} 
\M{1}{s}(\R\times\Z_{q}\times\widehat{\R}\times\widehat{\Z}_{q}) \\
= \{ F \in L^{2}(\R\times\Z_{q}\times\widehat{\R}\times\widehat{\Z}_{q}) \, : \, F(\,\cdot\,, k, \,\cdot\,,l) \in \M{1}{s}(\R^{2}) \ \ \mbox{for all} \ \ (k,l)\in \Z_{q}\times\widehat{\Z}_{q}\}. 
\end{multline*}

\begin{definition}
Given two functions $f,g\in \M{1}{s}(\R\times\Z_{q})$ we define the  \emph{short-time Fourier transform} of $f$ with respect to $g$ to be the function 
\[ \mathcal{V}_{g}f : \R\times\Z_{q}\times\widehat{\R}\times\widehat{\Z}_{q} \to \C, \quad \mathcal{V}_{g}f(\lambda,l,\gamma,c) = \langle f, E_{\gamma,c}T_{\lambda,l}g\rangle.\]
\end{definition}
\noindent
Then, if $f,g\in\M{1}{s}(\R\times\Z_{q})$, one can show that 
$\mathcal{V}_{g}f \in \M{1}{s}(\R\times\Z_{q}\times\widehat{\R}\times\widehat{\Z}_{q})$.

In order to define the Gabor systems that we will be working with, and equivalently the non-commutative tori that we will be considering, we need to define lattices in the time-frequency plane $\R\times\Z_{q}\times\widehat{\R}\times\widehat{\Z}_{q}$. Then, let $r$ and $s$ be some integers in $\{ 1, 2, \ldots, q-1\}$ that are co-prime to $q$ (if $q=1$ take $r=s=0$) and let $\alpha$ and $\beta$ be two non-zero real parameters. In the time domain we then define the lattice 
\[ \Lambda = \{ (\lambda,l) \in \R\times\Z_{q} \, : \, \lambda = \alpha n,\, l = rn \mbox{ mod } q 
\quad \mbox{ for all } n\in \Z\}.\]
Whereas in the frequency domain we consider the lattice
\[ \Gamma = \{ (\gamma,c) \in \widehat{\R}\times\widehat{\Z}_{q} \, : \, \gamma = \beta m,\, c = sm \mbox{ mod } q \quad \mbox{ for all } m\in\Z\}. \]

With the normalizations of the measures as described above we find that the measure of a fundamental domain of $\Lambda$ is $\mu(\Lambda) = q \vert \alpha \vert$, whereas $\mu(\Gamma) = \vert \beta \vert$. Note that $\mu(\Lambda)$ does not depend on $r$, nor does $\mu(\Gamma)$ depend on $s$. 
If $q=1$ then we recover the usual lattices $\Lambda = \alpha\Z$ and $\Gamma=\beta\Z$ used in Gabor analysis on $L^{2}(\R)$.
\begin{remark} \label{ex:connes1} The lattices $\Lambda$ and $\Gamma$ considered here are inspired by \cite{co80}. There, for some $a\in \R$, one takes
\[ \alpha = a-r/q, \ \beta=1 \ \text{and} \ s=1,\]
such that $\alpha\ne 0$.
\end{remark}

In the following paragraphs we detail how these lattices $\Lambda$ and $\Gamma$ are used to construct the non-commutative tori and the Gabor systems that we are interested in. 

Let us consider the space of weighted $\ell^{1}$-sequences indexed by the lattice $\Lambda\times\Gamma$ of the time-frequency plane. That is we define  
\[ 
\ell^{1}_{s}(\Lambda\times\Gamma) = \Big\{ a \in \ell^{1}(\Lambda\times\Gamma) \, : \, \sum_{(\lambda,l,\gamma,c)\in \Lambda\times\Gamma} \vert a(\lambda,l,\gamma,c) \vert \, (1+\vert \lambda\vert + \vert \gamma\vert)^{s} <\infty \Big\}.
\]
This vector space becomes an involutive Banach algebra under the norm
\[ \Vert a \Vert_{\ell^{1}_{s}} = \sum_{(\lambda,l,\gamma,c)\in \Lambda\times\Gamma} \vert a(\lambda,l,\gamma,c) \vert \, (1+\vert \lambda\vert + \vert \gamma\vert)^{s} \ \quad \text{for all} \ \ a\in \ell^{1}_{s}(\Lambda\times\Gamma),\]
the twisted involution
\[ 
\phantom{m}^{*} : \ell^{1}_{s}(\Lambda\times\Gamma) \to \ell^{1}_{s}(\Lambda\times\Gamma), \quad 
(a^{*})(\nu) = \varphi(\nu,\nu) \overline{a(-\nu)} \ \quad \text{for all} \ \ \nu\in\Lambda\times\Gamma,\]
and with respect to the twisted convolution
\begin{align}
& \twist : \ell^{1}_{s}(\Lambda\times\Gamma)\times \ell^{1}_{s}(\Lambda\times\Gamma) \to \ell^{1}_{s}(\Lambda\times\Gamma), \nn \\ (a_{1} &\twist a_{2})(\nu) = \sum_{\nu'\in \Lambda\times\Gamma} a_{1}(\nu') \, a_{2}(\nu-\nu') \, \varphi(\nu',\nu-\nu'). 
\end{align}
One can show that the map
\[ 
I : a \mapsto \sum_{\nu\in\Lambda\times\Gamma} a(\nu) \pi(\nu), \quad a \in \ell^{1}_{s}(\Lambda\times\Gamma) 
\]
is an isometric isomorphism from $\ell^{1}_{s}(\Lambda\times\Gamma)$ onto the involutive Banach algebra 
\begin{align*} & \mathcal{A}_{s}  = \Big\{ T:\M{1}{s}(\R\times\Z_{q})\to \M{1}{s}(\R\times\Z_{q}) \, : \, T= \sum_{\nu\in\Lambda\times\Gamma} a(\nu) \pi(\nu) , \ 
 a\in \ell^{1}_{s}(\Lambda\times\Gamma) \Big\}.
\end{align*}
It is clear that all elements in $\mathcal{A}_{s}$ are linear and bounded operators on $L^{2}(\R\times\Z_{q})$.
Indeed, $\mathcal{A}_{s}$ is an involutive Banach algebra under the norm $\Vert T\Vert_{A_{s}} = \Vert a \Vert_{\ell^{1}_{s}}$, the composition of operators, and the involution of $T\in \mathcal{A}_{s}$ being its $L^{2}$-Hilbert space adjoint $T^{*}$. 

The enveloping $C^{*}$-algebra ${A}_{\theta}$ of $\ell^{1}_{s}(\Lambda\times\Gamma)$ is the \emph{non-commutative torus} generated, by the two unitaries $U=E_{\beta,s}$, and $V=T_{\alpha,r}$ satisfying then
\[ 
U V = e^{2\pi i \theta} VU, \ \quad \theta = \alpha\beta+rs/q.
\]
\begin{remark} 
If the parameters $\alpha, \beta, r$ and $s$ are chosen as in Remark \ref{ex:connes1}, then $\theta = a$.
\end{remark}
By a celebrated result of Gr\"ochenig and Leinert in \cite{grle04} it follows that $\mathcal{A}_{s}$ is inverse closed in $A_\theta$. More concretely, if $T\in \mathcal{A}_{s}$ and $T^{-1}\in A_\theta$, then  $T^{-1}\in \mathcal{A}_{s}$.

Since a sequence $a\in \ell^{1}_{s}(\Lambda\times\Gamma)$ corresponds to the operator $I(a) \in \mathcal{A}_{s}$, it is natural to define the (left) action of $a$ on a function $f\in \M{1}{s}(\R\times\Z_{q})$ as
\begin{equation} \label{eq:left-action} a \cdot f = I(a) f = \sum_{\nu\in\Lambda\times\Gamma} a(\nu) \pi(\nu) f.\end{equation}
We next construct an $\ell^{1}(\Lambda\times\Gamma)$-valued inner-product on $\M{1}{s}(\R\times\Z_{q})$ in the following way.
\begin{lemma} \label{le:1702a} For any $s\ge 0$ the operator
\begin{align*} 
\lhs{\cdot}{\cdot} : \M{1}{s}(\R\times\Z_{q})\times \M{1}{s}(\R\times\Z_{q}) \to \ell^{1}_{s}(\Lambda\times\Gamma),
\quad \lhs{f}{g} = \mathcal{V}_{g}f \big\vert_{\Lambda\times\Gamma}
 \end{align*}
is well-defined. Moreover, there exists a constant $C>0$ such that
 \[ \Vert \lhs{f}{g} \Vert_{\ell^{1}_{s}} \le C \, \Vert f \Vert_{\M{1}{s}} \, \Vert g \Vert_{\M{1}{s}}.  \]
\end{lemma}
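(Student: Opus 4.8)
The plan is to peel off the two finite components $\Z_q$ and $\widehat{\Z}_q$, reduce the statement to the classical one‑variable estimate on $\R$, and then prove that estimate by combining Moyal's identity, the covariance of the short‑time Fourier transform, and the ``self‑smoothing'' property of Gabor coefficients. Throughout, write $v_s(z)=(1+\vert z_1\vert+\vert z_2\vert)^s$ on $\R^2$; this weight is submultiplicative, $v_s(z+w)\le v_s(z)\,v_s(w)$, which is the only property of $v_s$ we shall use.

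\textbf{Step 1: reduction to $q=1$.} Write $f_k=f(\,\cdot\,,k)$ and $g_k=g(\,\cdot\,,k)$ in $\M{1}{s}(\R)$. A direct computation with the definitions of $E_{\gamma,c}$ and $T_{\lambda,l}$ gives
\[ \mathcal{V}_g f(\lambda,l,\gamma,c)=\sum_{k\in\Z_q}e^{-2\pi i k c/q}\,\mathcal{V}_{g_{k-l}}f_k(\lambda,\gamma), \]
where on the right $\mathcal{V}_h p(\lambda,\gamma)=\langle p,E_\gamma T_\lambda h\rangle$ denotes the ordinary STFT on $L^2(\R)$. Restricting to $\Lambda\times\Gamma$ (i.e.\ putting $\lambda=\alpha n$, $l=rn\bmod q$, $\gamma=\beta m$, $c=sm\bmod q$), taking absolute values, and crudely over‑counting by replacing the single residue $k-rn\bmod q$ by all of $\Z_q$, one obtains
\[ \Vert\lhs{f}{g}\Vert_{\ell^1_s}\ \le\ \sum_{j,k\in\Z_q}\ \sum_{n,m\in\Z}\vert\mathcal{V}_{g_k}f_j(\alpha n,\beta m)\vert\,(1+\vert\alpha n\vert+\vert\beta m\vert)^s . \]
Hence it suffices to produce a constant $C_0>0$ such that $\sum_{n,m\in\Z}\vert\mathcal{V}_h p(\alpha n,\beta m)\vert\,v_s(\alpha n,\beta m)\le C_0\Vert p\Vert_{\M{1}{s}(\R)}\Vert h\Vert_{\M{1}{s}(\R)}$ for all $p,h\in\M{1}{s}(\R)$; summing this over $j,k\in\Z_q$ and using $\Vert f\Vert_{\M{1}{s}(\R\times\Z_q)}=\sum_k\Vert f_k\Vert_{\M{1}{s}(\R)}$ then yields simultaneously that $\lhs{f}{g}\in\ell^1_s(\Lambda\times\Gamma)$ and the asserted norm bound.

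\textbf{Step 2: the one‑variable estimate.} Fix the $L^2$‑normalized Gaussian $\phi$, so that $\vert\mathcal{V}_\phi\phi\vert$ is a Gaussian on $\R^2$ (in particular even and Schwartz). From Moyal's identity $\langle\mathcal{V}_\phi\psi_1,\mathcal{V}_\phi\psi_2\rangle=\langle\psi_1,\psi_2\rangle$ (valid on $\M{1}{s}(\R)\subseteq L^2(\R)$) together with the covariance $\vert\mathcal{V}_\phi(E_\gamma T_\lambda h)(z)\vert=\vert\mathcal{V}_\phi h(z-(\lambda,\gamma))\vert$, one gets the two pointwise bounds
\[ (\mathrm{i})\quad \vert\mathcal{V}_h p(z)\vert\le\bigl(\vert\mathcal{V}_\phi p\vert\ast\widetilde{\vert\mathcal{V}_\phi h\vert}\bigr)(z),\qquad (\mathrm{ii})\quad \vert\mathcal{V}_\phi h(w)\vert\le\bigl(\vert\mathcal{V}_\phi h\vert\ast\vert\mathcal{V}_\phi\phi\vert\bigr)(w), \]
with $\widetilde F(z)=F(-z)$, while $\Vert\,\vert\mathcal{V}_\phi p\vert\,\Vert_{L^1_{v_s}}=\Vert p\Vert_{\M{1}{s},\phi}$ and similarly for $h$, by the very definition of $\M{1}{s}(\R)$. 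From (ii), since $\vert\mathcal{V}_\phi\phi\vert$ is a Gaussian, one extracts for any fixed bounded neighbourhood $Q$ of $0$ a Schwartz function $\Psi$ with $\sup_{w\in z+Q}\vert\mathcal{V}_\phi h(w)\vert\le(\vert\mathcal{V}_\phi h\vert\ast\Psi)(z)$, the $Q$‑shift being absorbed into the Gaussian. Now start from (i) with $z\mapsto(\alpha n,\beta m)$, insert $v_s(\alpha n,\beta m)\le v_s(z)\,v_s((\alpha n,\beta m)-z)$ under the convolution integral, dominate the translated factor $v_s\cdot\vert\mathcal{V}_\phi h\vert$ by its supremum over a $Q$‑neighbourhood and hence by $\bigl((v_s\vert\mathcal{V}_\phi h\vert)\ast(v_s\Psi)\bigr)$, and interchange the (now absolutely convergent) sum and integral; the lattice sum collapses to $\sup_u\sum_{n,m}(v_s\Psi)\bigl((\alpha n,\beta m)-u\bigr)$, which is finite because a Schwartz function summed over any translate of the lattice $\alpha\Z\times\beta\Z$ is uniformly bounded. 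What remains is exactly $C_0\,\Vert p\Vert_{\M{1}{s},\phi}\Vert h\Vert_{\M{1}{s},\phi}$, with $C_0$ depending only on $s,\alpha,\beta$ and $\phi$, completing the reduction in Step 1.

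\textbf{Main obstacle.} Everything except one point is routine bookkeeping with convolutions and the submultiplicative weight. The delicate step is passing from $\mathcal{V}_\phi h\in L^1_{v_s}(\R^2)$ to summability of its samples on a lattice: a continuous $L^1_{v_s}$ function need not restrict to an $\ell^1_{v_s}$ sequence on a lattice, and one must genuinely use that $\mathcal{V}_\phi h$ reproduces itself against the Gaussian (inequality (ii)) — equivalently, that $\mathcal{V}_\phi h$ lies in the weighted Wiener amalgam space $W(C_0,\ell^1_{v_s})(\R^2)$. If one prefers to invoke this as a black box, Step 2 can be replaced by the known sampling embedding $\M{1}{s}(\R^{d})\hookrightarrow W(C_0,\ell^1_{v_s})(\R^{d})$ applied to the already‑quoted fact that $\mathcal{V}_g f\in\M{1}{s}(\R\times\Z_q\times\widehat{\R}\times\widehat{\Z}_q)$ depends boundedly and bilinearly on $(f,g)$; then Step 1 is not even needed, as restriction to $\Lambda\times\Gamma$ only samples the two Euclidean variables while acting on the finite components $\Z_q,\widehat{\Z}_q$ through a bounded finite sum.
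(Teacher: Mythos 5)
Your proposal is correct, but it takes a more self-contained route than the paper. The paper's proof is essentially two citations: (a) the bilinear boundedness of $(f,g)\mapsto \mathcal{V}_{g}f$ from $\M{1}{s}(\R\times\Z_{q})\times\M{1}{s}(\R\times\Z_{q})$ into $\M{1}{s}(\R\times\Z_{q}\times\widehat{\R}\times\widehat{\Z}_{q})$, and (b) the boundedness of the restriction operator $R_{\Lambda\times\Gamma}:\M{1}{s}\to\ell^{1}_{s}(\Lambda\times\Gamma)$ for any discrete subgroup --- which is precisely the ``black box'' alternative you sketch in your final paragraph. Your main argument instead unpacks these facts: you peel off the finite components $\Z_{q},\widehat{\Z}_{q}$ (the computation $\mathcal{V}_{g}f(\lambda,l,\gamma,c)=\sum_{k}e^{-2\pi i kc/q}\mathcal{V}_{g_{k-l}}f_{k}(\lambda,\gamma)$ and the nonnegative over-counting are fine, and the absence of a weight on the finite variables matches the paper's definition of $\ell^{1}_{s}$), and then prove the one-variable lattice-sampling estimate directly from the change-of-window inequality $\vert\mathcal{V}_{h}p\vert\le\vert\mathcal{V}_{\phi}p\vert\ast\widetilde{\vert\mathcal{V}_{\phi}h\vert}$ together with the self-reproducing bound $\vert\mathcal{V}_{\phi}h\vert\le\vert\mathcal{V}_{\phi}h\vert\ast\vert\mathcal{V}_{\phi}\phi\vert$, submultiplicativity of $v_{s}$, and Tonelli; this is in effect a proof of the Wiener-amalgam embedding $\M{1}{s}\hookrightarrow W(C_{0},\ell^{1}_{v_{s}})$ that the paper (implicitly) cites, and you correctly flag that the only genuinely delicate point is that an $L^{1}_{v_{s}}$ function need not have summable lattice samples without this reproducing property. (Minor remarks: the local-supremum-over-$Q$ step is not strictly needed --- the pointwise weighted self-domination already suffices before summing over the lattice --- and in the weighted domination one should use $\Psi_{s}(t)=\sup_{\tau\in Q}(v_{s}\vert\mathcal{V}_{\phi}\phi\vert)(t+\tau)$ rather than literally $v_{s}\Psi$; both are cosmetic.) What each approach buys: the paper's proof is two lines long and leans on standard modulation-space theory; yours is longer but elementary, makes the constant's dependence on $\alpha,\beta,s,q$ explicit, and exhibits exactly where the frame-independent structure (Moyal identity and Gaussian reproducing kernel) enters.
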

\begin{proof} As mentioned earlier, if $f,g\in \M{1}{s}(\R\times\Z_{q})$, then $\mathcal{V}_{g}f\in \M{1}{s}(\R\times\Z_{q}\times\widehat{\R}\times\widehat{\Z}_{q})$. In fact,
for some $C>0$, one has $\Vert \mathcal{V}_{g}f \Vert_{\M{1}{s}} \le C \, \Vert f \Vert_{\M{1}{s}} \, \Vert g \Vert_{\M{1}{s}}$.
Furthermore, if $\Lambda\times\Gamma$ is a discrete subgroup of $\R\times \Z_{q}\times \widehat{\R} \times \widehat{\Z}_{q}$, then the restriction operator
\[ R_{\Lambda\times\Gamma} : \M{1}{s}(\R\times \Z_{q}\times \widehat{\R} \times \widehat{\Z}_{q}) \to \ell^{1}_{s}(\Lambda\times\Gamma), \quad R_{\Lambda\times\Gamma} F(\nu) = F(\nu), \ \ \nu\in\Lambda\times\Gamma \]
is linear and bounded. Hence $\mathcal{V}_{g}f\big\vert_{\Lambda\times\Gamma}$ is a sequence in $\ell^{1}_{s}(\Lambda\times\Gamma)$ and the norm estimates follow. \end{proof}

One can show that the inner-product is compatible with the action defined in \eqref{eq:left-action} and the twisted convolution and involution on $\ell^{1}_{s}(\Lambda\times\Gamma)$ given above. 
That is, for all $f,g\in\M{1}{s}(\R\times\Z_{q})$ and $a\in \ell^{1}_{s}(\Lambda\times\Gamma)$,
\begin{align} 
\label{eq:inner-p-rule} 
a \twist \lhs{f}{g} &= \lhs{a \cdot f}{g} \ , \quad \lhs{f}{g}\twist a^{*} = \lhs{f}{a \cdot g} \ , 
\quad (\lhs{f}{g})^{*} = \lhs{g}{f} , \nn \\ ~\nn \\
\lhs{f}{f} & \ge 0 \ \ \text{and} \ \ \lhs{f}{f} = 0 \ \ \Leftrightarrow \ \ f = 0.
\end{align}

Now, a function $g\in L^{2}(\R\times\Z_{q})$ is said to generate a \emph{Gabor frame} for $L^{2}(\R\times\Z_{q})$ if there exists constants $A,B>0$ such that
\begin{equation} \label{eq:frame} A\, \Vert f \Vert_{2}^{2} \le \sum_{\nu\in\Lambda\times\Gamma} \vert \langle f, \pi(\nu) g \rangle \vert^{2} \le B \, \Vert f \Vert_{2}^{2} \ \ \mbox{for all} \ \ f\in L^{2}(\R\times\Z_{q}).\end{equation}
With such a $g$, the collection of functions $\{\pi(\nu) g\}_{\nu\in\Lambda\times\Gamma}$ is the $\emph{Gabor system}$ generated by $g$ and the lattice $\Lambda\times\Gamma$. From now on we will always assume  $g\in\M{1}{s}(\R\times\Z_{q})$ for some $s\ge 0$. In this case the Gabor frame operator
\[ S_{g} : \M{1}{s}(\R\times\Z_{q})\to \M{1}{s}(\R\times\Z_{q}), \quad S_{g} f = \sum_{\nu\in\Lambda\times\Gamma} \langle f, \pi(\nu) g\rangle \, \pi(\nu) g = \lhs{f}{g} \cdot  g\]
is well-defined, linear and bounded; it is also positive. The lower inequality in \eqref{eq:frame} implies that the frame operator $S_{g}$ is invertible on $L^{2}(\R\times\Z_{q})$. The aforementioned result by Gr\"ochenig and Leinert on the invertibility implies that $S_{g}$ is also invertible on $\M{1}{s}(\R\times\Z_{q})$ if $g\in \M{1}{s}(\R\times\Z_{q})$. In particular, this invertibility allows for series representations of any function $f\in \M{1}{s}(\R\times\Z_{q})$ (in fact, for all functions in $L^{2}(\R\times\Z_{q})$) of the form
\begin{equation} \label{eq:dual0}
f = \sum_{\nu\in\Lambda\times\Gamma} \langle f, \pi(\nu) g \rangle \, \pi(\nu) S^{-1}_{g}g 
= \lhs{f}{g} \cdot S^{-1}_{g}g  \qquad \text{for all} \ \ f\in \M{1}{s}(\R\times\Z_{q}).
\end{equation}
There may be other functions $h\in \M{1}{s}(\R\times\Z_{q})$, $h\ne S^{-1}_{g}g$ such that 
\begin{equation} \label{eq:dual} 
f = \sum_{\nu\in\Lambda\times\Gamma} \langle f, \pi(\nu) g \rangle \, \pi(\nu) h = \lhs{f}{g}\cdot h \qquad \text{for all} \ \ f\in \M{1}{s}(\R\times\Z_{q}).
\end{equation}
In general, if a pair of functions $g$ and $h$ in $\M{1}{s}(\R\times\Z_{q})$ allow for series representations as in \eqref{eq:dual0} and \eqref{eq:dual} we call them a \emph{dual pair} and the pair $g$ and $S^{-1}_{g}g$ is the \emph{canonical} dual pair.
Note that in \eqref{eq:dual} the role of $g$ and $h$ can be interchanged. 

In order to go further with the theory of Gabor frames we need to describe the annihilators $\Lambda^{\perp}$ and $\Gamma^{\perp}$ of the lattices $\Lambda$ and $\Gamma$. The annihilators $\Lambda^{\perp}$ and $\Gamma^{\perp}$ are lattices of the frequency and time domain, respectively,
\begin{align*} \Lambda^{\perp} & = \{ (\xi, \tau) \in \widehat{\R}\times\widehat{\Z}_{q} \, : \, e^{2\pi i(\lambda \xi + l \tau/q)} = 1 \mbox{ for all } (\lambda,l)\in \Lambda \}, \\
 \Gamma^{\perp} & = \{ (\xi, \tau) \in \R\times\Z_{q} \, : \, e^{2\pi i(\gamma \xi + c \tau/q)} = 1 \mbox{ for all } (\gamma,c)\in \Gamma \}. \end{align*}
To conveniently describe these lattices we use the following notation. If $r$ is co-prime to $q$ and $r=\{1,2,\ldots,q-1\}$, we define $r^{\circ}$ to be the unique element in $\{ 1, 2, \ldots, q-1\}$ such that $r r^{\circ} + lq = 1$ for some $l\in\Z$ (that this is possible follows from the Chinese remainder theorem). 
If $q=1$, we take $r=r^{\circ}=0$. If $r=1$, then $r^{\circ} = 1$. If $r=q-1$, then $r^{\circ} = q-1$. Furthermore, $(r^{\circ})^{\circ} = r$. Similarly we define $s^{\circ}$ for the parameter $s$.
\begin{lemma} \label{le:annihilators} 
If $\Lambda$ and $\Gamma$ are as above, then
\begin{align*} & \Lambda^{\perp} = \{ (\xi, \tau) \in \widehat{\R}\times \widehat{\Z}_{q} \, : \, \xi = \frac{n}{\alpha q}, \ \tau = - r^{\circ} n\mbox{ mod }q \quad \mbox{ for all } \, n\in\Z\}, \\ 
 & \Gamma^{\perp} = \{ (\xi, \tau) \in \R\times \Z_{q} \, : \, \xi = \frac{n}{\beta q}, \ \tau = - s^{\circ} 
 n \mbox{ mod }q \quad \mbox{ for all } \, n\in\Z\} \end{align*}
and the measure of a fundamental domain of $\Lambda^{\perp}$ is $\mu(\Lambda^{\perp}) = (q\vert \alpha\vert)^{-1}$, whereas that of a fundamental domain of $\Gamma^{\perp}$ is $\mu(\Gamma^{\perp}) = \vert\beta\vert^{-1}$. 
\end{lemma}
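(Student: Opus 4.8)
The plan is to turn the character conditions that define $\Lambda^{\perp}$ and $\Gamma^{\perp}$ into a single linear congruence modulo $q$ and then solve that congruence using the modular inverses $r^{\circ}$ and $s^{\circ}$. I would carry out the computation for $\Lambda^{\perp}$ in detail; the statement for $\Gamma^{\perp}$ then follows verbatim after the substitution $(\alpha,r,\widehat{\Z}_{q})\rightsquigarrow(\beta,s,\Z_{q})$.

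First I would fix $(\xi,\tau)\in\widehat{\R}\times\widehat{\Z}_{q}$ with $\tau\in\{0,1,\dots,q-1\}$ and simplify the defining relation. Since $rn-(rn\bmod q)\in q\Z$ and $\tau\in\Z$, one has $(rn\bmod q)\,\tau/q\equiv rn\tau/q\pmod{\Z}$, so that
\[ e^{2\pi i(\alpha n\xi+(rn\bmod q)\tau/q)}=e^{2\pi i n(\alpha\xi+r\tau/q)}\qquad(n\in\Z). \]
Hence $(\xi,\tau)\in\Lambda^{\perp}$ if and only if $\alpha\xi+r\tau/q\in\Z$ (necessity from $n=1$, sufficiency immediate), i.e. $\alpha q\xi+r\tau\in q\Z$; in particular $n:=\alpha q\xi\in\Z$ and $n\equiv-r\tau\pmod q$. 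Multiplying through by $r^{\circ}$ and using $rr^{\circ}\equiv1\pmod q$ I would deduce $\tau\equiv-r^{\circ}n\pmod q$, giving $\xi=n/(\alpha q)$ and $\tau=-r^{\circ}n\bmod q$. For the converse I would simply plug the pair $(n/(\alpha q),-r^{\circ}n\bmod q)$ back in: $\alpha q\xi+r\tau=n+r\tau\equiv n-n=0\pmod q$, so it lies in $\Lambda^{\perp}$; and $n\mapsto n/(\alpha q)$ is injective, so this enumerates $\Lambda^{\perp}$ exactly as claimed. Replacing $(\alpha,r)$ by $(\beta,s)$ gives the description of $\Gamma^{\perp}$.

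For the covolumes, I would note that in the above parametrization $\xi$ runs bijectively over the lattice $\tfrac1{\alpha q}\Z=\tfrac1{|\alpha|q}\Z\subset\widehat{\R}$ and $\tau$ is determined by $n$, so that $[0,\tfrac1{|\alpha|q})\times\widehat{\Z}_{q}$ is a fundamental domain for $\Lambda^{\perp}$; its measure is $\tfrac1{|\alpha|q}$ times the total mass of $\widehat{\Z}_{q}$, which is $q\cdot q^{-1}=1$, yielding $\mu(\Lambda^{\perp})=(q|\alpha|)^{-1}$. Similarly $[0,\tfrac1{|\beta|q})\times\Z_{q}$ is a fundamental domain for $\Gamma^{\perp}$, but now $\Z_{q}$ carries the plain counting measure of total mass $q$, so $\mu(\Gamma^{\perp})=\tfrac1{|\beta|q}\cdot q=|\beta|^{-1}$. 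I expect the only step requiring genuine care to be this last bookkeeping of the measure normalizations — the factor $q^{-1}$ sitting on $\widehat{\Z}_{q}$ but not on $\Z_{q}$ is exactly what makes the two covolumes carry the $q$ in different places — while everything else is routine modular arithmetic together with the existence of $r^{\circ},s^{\circ}$ supplied by the Chinese remainder theorem.
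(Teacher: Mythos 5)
Your proposal is correct, and it reaches the equality $\Lambda^{\perp}=\{(n/(\alpha q),\,-r^{\circ}n \bmod q)\}$ by a genuinely different route than the paper. The paper only verifies the easy inclusion (the candidate set annihilates $\Lambda$, using $rr^{\circ}+lq=1$) and then gets equality indirectly: it invokes the general duality fact $\mu(\Lambda)\mu(\Lambda^{\perp})=1$ together with $\mu(\Lambda)=q|\alpha|$, and argues that a lattice of covolume $(q|\alpha|)^{-1}$ sitting inside $\Lambda^{\perp}$ must be all of $\Lambda^{\perp}$. You instead prove both inclusions directly: you reduce the character condition to the single relation $\alpha\xi+r\tau/q\in\Z$ (making explicit the harmless replacement of $rn\bmod q$ by $rn$ in the exponent, which the paper uses tacitly), and then solve the congruence $n\equiv-r\tau \pmod q$ with the modular inverse $r^{\circ}$ to get necessity, with sufficiency by substitution. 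Your covolume computation is likewise self-contained: you exhibit $[0,\tfrac{1}{|\alpha|q})\times\widehat{\Z}_{q}$ (resp.\ $[0,\tfrac{1}{|\beta|q})\times\Z_{q}$) as a fundamental domain and carefully track the measure normalizations ($\widehat{\Z}_{q}$ has total mass $1$, $\Z_{q}$ has total mass $q$), which is exactly where the asymmetry $\mu(\Lambda^{\perp})=(q|\alpha|)^{-1}$ versus $\mu(\Gamma^{\perp})=|\beta|^{-1}$ comes from — a point the paper also flags. What your approach buys is independence from the quoted identity $\mu(\Lambda)\mu(\Lambda^{\perp})=1$ (which your argument in fact re-verifies for these lattices); what the paper's approach buys is brevity, since it never has to solve the congruence in the forward direction. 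Both arguments are sound.
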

\begin{proof}
For $(\xi,\tau)\in \widehat{\R}\times \widehat{\Z}_{q}$ to be in $\Lambda^{\perp}$ we need, by definition, that
\[ e^{2\pi i (\alpha m \xi +rm \tau/q)} = 1 \qquad \text{for all} \ \ m\in \Z.\]
that is  $\alpha m \xi +rm \tau/q\in \Z$ for all $m\in\Z$,
If $l\in\Z$ is such that  $rr^{\circ}+lq = 1$, then
\[ \frac{1}{q} = \frac{rr^{\circ}}{q} + l.\]
Indeed, 
using this relation it is straightforward to show that for $(\xi,\tau)$ as in the lemma,
\[ \frac{\alpha n m}{\alpha q} - \frac{r r^{\circ} n m }{q} =  {nm} \big( \frac{rr^{\circ}}{q}+l\big) - \frac{rr^{\circ}nm}{q} = nml \in\Z \]
Hence
\[ 
\big\{ 
(\xi, \tau) \in \widehat{\R}\times \widehat{\Z}_{q} \, : \, \xi = \frac{n}{\alpha q}, \ \tau 
= - r^{\circ} n \mbox{ mod }q \quad \mbox{ for all } n\in\Z \big\} \, \subseteq \, \Lambda^{\perp}. \]
In order to show equality we argue as follows: It is a general fact that for a lattice $\Lambda$ and its adjoint $\Lambda^{\perp}$ it holds that $\mu(\Lambda)\mu(\Lambda^{\perp}) = 1$. As remarked earlier, $\mu(\Lambda) = q \, \vert\alpha\vert$. It is not hard to see that the lattice 
\[ 
\big\{ (\xi, \tau) \in \widehat{\R}\times \widehat{\Z}_{q} \, : \, \xi = \frac{n}{\alpha q}, \ \tau = - a_{\Lambda} n\mbox{ mod }q \mbox{ for all } n\in\Z \big\} \subseteq \Lambda^{\perp}. 
\]
has size $(q\vert \alpha\vert )^{-1}$. If it was a lattice strictly contained in $\Lambda^{\perp}$, its size would have to be strictly larger than $(q\vert\alpha\vert)^{-1}$. being this not the case we conclude that it must be $\Lambda^{\perp}$. The calculation for $\Gamma^{\perp}\subseteq \R\times \Z_{q}$ is similar. Note that in order to compute the lattice size, that is, the measure of a fundamental domain of the lattices, it is important to check whether the adjoint lies in the time or in the frequency domain as we have different measures on $\R\times\Z_{q}$ and $\widehat{\R}\times \widehat{\Z}_{q}$ (as described earlier in this section). 
\end{proof}

With the lattice $\Gamma^{\perp}\times \Lambda^{\perp}$ of the time-frequency plane we proceed in the same way as before and consider, for $s\ge 0$, the space of all weighted $\ell^{1}$-sequences, 
\[ \ell^{1}_{s}(\Gamma^{\perp}\times\Lambda^{\perp}) = \Big\{ b \in \ell^{1}(\Gamma^{\perp}\times\Lambda^{\perp}) \, : \, \sum_{(\lambda,l,\gamma,c)\in \Gamma^{\perp}\times\Lambda^{\perp}} \vert b(\lambda,l,\gamma,c) \vert \, (1+\vert \lambda\vert + \vert \gamma\vert)^{s} <\infty \Big\}
\]
and the algebra of operators
\[ \mathcal{A}^{\circ}_{s} = \Big\{ T:\M{1}{s}(\R\times\Z_{q})\to \M{1}{s}(\R\times\Z_{q}) \, : \, T= \sum_{\nu^{\circ}\in\Gamma^{\perp}\times\Lambda^{\perp}} b(\nu^{\circ}) \pi^{\circ}(\nu^{\circ}) , \ 
 b\in \ell^{1}_{s}(\Gamma^{\perp}\times\Lambda^{\perp}) \Big\}. 
\]
Naturally $\mathcal{A}^{\circ}_{s}$ becomes an involutive Banach algebra just as before with $\mathcal{A}_{s}$. The twisted convolution and twisted involution on $\ell^{1}_{s}(\Gamma^{\perp}\times\Lambda^{\perp})$ are defined in such a way that $\mathcal{A}_{s}^{\circ}$ and $\ell^{1}_{s}(\Gamma^{\perp}\times\Lambda^{\perp})$ become isometric isomorphic involutive Banach algebras under the identification
\[ I^{\circ} : \ell^{1}_{s}(\Gamma^{\perp}\times\Lambda^{\perp}) \to \mathcal{A}^{\circ}_{s}, \quad I^{\circ}(b) = \sum_{\nu^{\circ}\in\Gamma^{\perp}\times\Lambda^{\perp}} b(\nu^{\circ}) \pi^{\circ}(\nu^{\circ}). \]
Note that the algebra $\mathcal{A}_{s}^{\circ}$ is generated by frequency-time shifts $\pi^{\circ}$ rather than time-frequency shifts $\pi$ (as was the case with $\mathcal{A}_{s}$). In particular, $\mathcal{A}_{s}^{\circ}$ is generated by the two unitary operators $U^{\circ}=T_{1/\beta q, -s^{\circ}}$ and $V^{\circ}=E_{1/\alpha q, -r^{\circ}}$.
These generators satisfy
\[ 
U^{\circ}V^{\circ} = e^{2\pi i \theta^{\circ}} V^{\circ}U^{\circ} , \ \quad \theta^{\circ} = r^{\circ}s^{\circ}/q - (\alpha\beta q^2)^{-1}.\]

\begin{remark} 
If the parameters $\alpha,\beta, r$ and $s$ are chosen as in Remark \ref{ex:connes1}, one has $\theta^{\circ} = (l+ar^{\circ})/(r-aq)$, where $l\in\Z$ is such that $rr^{\circ} + lq = 1$. 
\end{remark}
Similarly to what we did for the lattice $\Lambda\times\Gamma$, we define the \emph{right} action of an element $b\in \ell^{1}_{s}(\Gamma^{\perp}\times\Lambda^{\perp})$ on a function $f\in \M{1}{s}(\R\times\Z_{q})$ as 
\begin{equation} \label{eq:right-action} f \cdot b =  I(b)f = \sum_{\nu^{\circ}\in\Gamma^{\perp}\times\Lambda^{\perp}} b(\nu^{\circ}) \pi^{\circ}(\nu^{\circ}) f.\end{equation}
For a function $f$ let $f^{\dagger}$ be the involution $t\mapsto \overline{f(-t)}$. We then define the $\ell^{1}_{s}(\Gamma^{\perp}\times\Lambda^{\perp})$-valued inner-product $\M{1}{s}(\R\times\Z_{q})$
\[ \rhs{\cdot}{\cdot} : \M{1}{s}(\R\times\Z_{q})\times \M{1}{s}(\R\times\Z_{q})\to \ell^{1}_{s}(\Gamma^{\perp}\times\Lambda^{\perp}),  \quad \rhs{f}{g} = (q\vert\alpha\beta\vert)^{-1} \, (\mathcal{V}_{g}{f})^{\dagger}\big\vert_{\Gamma^{\perp}\times\Lambda^{\perp}}.\]
 
Note that this inner-product is linear in the second entry and 
$(\mathcal{V}_{g}{f})^{\dagger}(\nu^{\circ}) = \langle g,\pi^{\circ}(\nu^{\circ})f\rangle$.
Its properties are as in Lemma \ref{le:1702a}
, that is, there exists a constant $C>0$ such that
\[ \Vert \rhs{f}{g} \Vert_{\ell^{1}_{s}} \le C \, \Vert f \Vert_{\M{1}{s}} \, \Vert g \Vert_{\M{1}{s}}.  \]

Furthermore, for all $f,g\in\M{1}{s}(\R\times\Z_{q})$ and $b\in \ell^{1}_{s}(\Gamma^{\perp}\times\Lambda^{\perp})$ the inner-product $\rhs{\cdot}{\cdot}$ satisfies
\begin{align*}
\rhs{f}{g} \twist b & = \rhs{f}{g\cdot b} \ , \quad  b^{*} \twist \rhs{f}{g}  
= \rhs{f\cdot b}{g} \ , \quad (\rhs{f}{g})^{*} = \rhs{g}{f} , \nn \\ ~\nn \\
\rhs{f}{f} & \ge 0 \ \ \text{and} \ \ \rhs{f}{f} = 0 \ \ \Leftrightarrow \ \ f= 0.
\end{align*}

The now established notation allows us to formulate well-known results in Gabor analysis in the following way.

\medskip
The \textbf{fundamental identity of Gabor analysis}. \\
This  states that for all $f,g,h\in \M{1}{s}(\R\times\Z_{q})$ one has that
\begin{equation} \label{eq:janssen} 
\lhs{f}{g} \cdot h = f \cdot \rhs{g}{h} \, ,
\end{equation}
that is 
$$
\sum_{\nu\in\Lambda\times\Gamma} \langle f, \pi(\nu) g \rangle \,  \pi(\nu) h = \frac{1}{q\vert\alpha\beta\vert} \sum_{\nu^{\circ}\in \Gamma^{\perp}\times\Lambda^{\perp}} \langle h, \pi^{\circ}(\nu^{\circ})g\rangle \, \pi^{\circ}(\nu^{\circ})f .
$$
If the involved functions are ``nice'' enough, then \eqref{eq:janssen} follows by an application of the Poisson summation formula. This is the case for functions in the Schwartz space \cite{ri88}, or functions in $\M{1}{s}(\R\times\Z_{q})$ \cite{felu06}. The equality does in general not hold for  arbitrary functions 
$f,g,h\in L^{2}(\R\times\Z_{q})$ (see \cite{gr01} and \cite{jale16-2}).

These statements have as consequence the Morita equivalence of $\mathcal{A}_{s}$ and $\mathcal{A}_{s}^{\circ}$, which extends the result of Luef for $q=1$ in \cite{lu09}. 
\begin{proposition}\label{pr:Morita}
The algebras
$\mathcal{A}_{s}$ and $\mathcal{A}_{s}^{\circ}$ are Morita equivalent and $\M{1}{s}(\R\times\Z_{q})$ is an equivalence bimodule. Consequently, $\M{1}{s}(\R\times\Z_{q})$ is a projective finitely generated $\mathcal{A}_{s}$-module, i.e. there exist finitely many $g_1,...,g_n$ in 
$\M{1}{s}(\R\times\Z_{q})$ such that any $f\in \M{1}{s}(\R\times\Z_{q})$ may be written as 
\[ f=\lhs{f}{g_1}g_1+\cdots+\lhs{f}{g_n}g_n.\] 
\end{proposition}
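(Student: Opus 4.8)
The plan is to recognise $\M{1}{s}(\R\times\Z_{q})$ as an imprimitivity (equivalence) bimodule in the sense of Rieffel, first for the Banach algebras $\mathcal{A}_{s}$, $\mathcal{A}_{s}^{\circ}$ and then, by $C^{*}$-completion, for $A_{\theta}$, $A_{\theta^{\circ}}$. Most of the required data is already in place: by \eqref{eq:left-action} and \eqref{eq:right-action}, $\M{1}{s}(\R\times\Z_{q})$ is a left $\mathcal{A}_{s}$-module and a right $\mathcal{A}_{s}^{\circ}$-module; by Lemma \ref{le:1702a} and its right-sided analogue the two inner products $\lhs{\cdot}{\cdot}$, $\rhs{\cdot}{\cdot}$ are $\ell^{1}_{s}$-valued, bounded, positive and definite, and they satisfy the $*$-module identities \eqref{eq:inner-p-rule} and their $\rhs{\cdot}{\cdot}$-counterparts; and the single axiom tying the two sides together, $\lhs{f}{g}\cdot h = f\cdot\rhs{g}{h}$, is exactly the fundamental identity of Gabor analysis \eqref{eq:janssen}. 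Moreover, by those same module identities the linear spans of $\{\lhs{f}{g}\}$ and of $\{\rhs{f}{g}\}$ are two-sided ideals of $\mathcal{A}_{s}$ and $\mathcal{A}_{s}^{\circ}$ respectively. Hence the whole proposition reduces to showing that these ideals are everything, i.e.\ that both inner products are \emph{full}.

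Fullness is where Gabor frame theory and the Gr\"ochenig--Leinert theorem are used. First I would choose a Parseval multi-window Gabor frame $\{\pi(\nu)h_{i}\}_{\nu\in\Lambda\times\Gamma,\,1\le i\le n}$ for $L^{2}(\R\times\Z_{q})$ with windows $h_{1},\dots,h_{n}\in\M{1}{s}(\R\times\Z_{q})$, equivalently $\sum_{i}S_{h_{i}}=\mathrm{Id}$; such windows exist by the usual canonical-tight-frame construction (start from any finite family of Gaussian-type atoms in $\M{1}{s}(\R\times\Z_{q})$ redundant enough to form a frame, note that its frame operator $S$ lies in $\mathcal{A}_{s}^{\circ}$ by the fundamental identity, that $S^{-1/2}\in\mathcal{A}_{s}^{\circ}$ by inverse closedness and holomorphic functional calculus, and that $S^{-1/2}$ commutes with the $\pi$-shifts, so $S^{-1/2}$ applied to the atoms gives a Parseval family). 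Taking $f$ arbitrary and $g=h=h_{i}$ in \eqref{eq:janssen} yields $S_{h_{i}}=I^{\circ}(\rhs{h_{i}}{h_{i}})$, so $\sum_{i}\rhs{h_{i}}{h_{i}}$ corresponds under $I^{\circ}$ to $\mathrm{Id}\in\mathcal{A}_{s}^{\circ}$; the $\rhs{\cdot}{\cdot}$-ideal therefore contains an invertible element and is all of $\mathcal{A}_{s}^{\circ}$. Repeating the argument with $\pi$ replaced by $\pi^{\circ}$, with $\Lambda\times\Gamma$ replaced by its adjoint lattice $\Gamma^{\perp}\times\Lambda^{\perp}$, and with the two algebras interchanged --- invoking the fundamental identity of Gabor analysis for the lattice $\Gamma^{\perp}\times\Lambda^{\perp}$ --- produces a Parseval multi-window system $\{\pi^{\circ}(\nu^{\circ})k_{j}\}$ with $k_{j}\in\M{1}{s}(\R\times\Z_{q})$ and shows that $\sum_{j}\lhs{k_{j}}{k_{j}}$ is a nonzero scalar multiple of the unit of $\mathcal{A}_{s}$; hence $\lhs{\cdot}{\cdot}$ is full as well.

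With both inner products full, $\M{1}{s}(\R\times\Z_{q})$ is an $\mathcal{A}_{s}$--$\mathcal{A}_{s}^{\circ}$ equivalence bimodule, and completing in the $C^{*}$-norms gives the Morita equivalence of $A_{\theta}$ and $A_{\theta^{\circ}}$ (which for $q=1$ is Luef's result). For the final claim, the Parseval windows $h_{1},\dots,h_{n}$ above satisfy, for every $f\in\M{1}{s}(\R\times\Z_{q})$,
\[
  f \;=\; \sum_{i=1}^{n} S_{h_{i}} f \;=\; \sum_{i=1}^{n} \lhs{f}{h_{i}}\cdot h_{i},
\]
which is precisely the asserted expansion, so the module is generated by $h_{1},\dots,h_{n}$. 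It is finitely generated \emph{projective} because $P:=(\lhs{h_{i}}{h_{j}})_{i,j=1}^{n}\in M_{n}(\mathcal{A}_{s})$ is a self-adjoint idempotent --- $P^{*}=P$ since $(\lhs{f}{g})^{*}=\lhs{g}{f}$, and $P^{2}=P$ since $\sum_{k}\lhs{h_{i}}{h_{k}}\twist\lhs{h_{k}}{h_{j}}=\lhs{\sum_{k}\lhs{h_{i}}{h_{k}}\cdot h_{k}}{h_{j}}=\lhs{h_{i}}{h_{j}}$ by \eqref{eq:inner-p-rule} and the reconstruction just displayed --- and the mutually inverse $\mathcal{A}_{s}$-linear maps $f\mapsto(\lhs{f}{h_{1}},\dots,\lhs{f}{h_{n}})$ and $(a_{1},\dots,a_{n})\mapsto\sum_{i}a_{i}\cdot h_{i}$ identify $\M{1}{s}(\R\times\Z_{q})$ with the image of $P$.

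The genuine difficulty is not the abstract bimodule bookkeeping, whose only nontrivial ingredient (the compatibility identity) is handed to us as the fundamental identity of Gabor analysis, but keeping everything inside the small algebras $\mathcal{A}_{s}$ and $\mathcal{A}_{s}^{\circ}$. One needs multi-window Parseval Gabor frames with windows in $\M{1}{s}(\R\times\Z_{q})$ for \emph{both} lattices $\Lambda\times\Gamma$ and $\Gamma^{\perp}\times\Lambda^{\perp}$, and this rests on the Gr\"ochenig--Leinert inverse-closedness theorem, which ensures that the relevant frame operators and their square roots stay in $\mathcal{A}_{s}$ (respectively $\mathcal{A}_{s}^{\circ}$). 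It is precisely this spectral invariance that allows the argument to be run at the level of $\M{1}{s}(\R\times\Z_{q})$ rather than only for the $C^{*}$-completions.
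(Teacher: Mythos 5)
Your argument is correct and is essentially the proof the paper has in mind: the paper states the proposition without proof, deferring to the preceding Gabor-duality formalism and to Luef's $q=1$ result \cite{lu09}, and your write-up --- the bimodule structure plus the fundamental identity \eqref{eq:janssen}, fullness of both inner products via canonical tight multi-window Gabor frames obtained from the Gr\"ochenig--Leinert spectral invariance, and the self-adjoint idempotent $\big(\lhs{h_i}{h_j}\big)_{i,j}\in M_n(\mathcal{A}_s)$ for finite generation and projectivity --- is exactly that standard route. The one ingredient you assert rather than prove, namely the existence of finite multi-window Gabor frames with windows in $\M{1}{s}(\R\times\Z_{q})$ for both $\Lambda\times\Gamma$ and its adjoint lattice, is the standard fact underlying \cite{lu09}, so nothing essential is missing.
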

\noindent
In noncommutative geometry one says that $\M{1}{s}(\R\times\Z_{q})$ is a vector bundle over $\mathcal{A}_s$.

\medskip
The \textbf{Wexler-Raz biorthogonality relations}. \\
These characterise when two functions $g,h\in \M{1}{s}(\R\times\Z_{q})$ generate dual Gabor frames, that is,  \eqref{eq:dual} holds. This is the case if and only if
\[ \rhs{g}{h} = 1, \ \text{ i.e.,} \ \ \langle h, \pi^{\circ}(\nu^{\circ}) g\rangle = \begin{cases} q \vert\alpha \beta\vert & \nu^{\circ}=0 \\ 0 &  \nu^{\circ} \ne 0 \end{cases} \, , \qquad \text{for all} \ \ \nu^{\circ}\in\Gamma^{\perp}\times\Lambda^{\perp}. \]

As remarked earlier, if a function $g\in \M{1}{s}(\R\times\Z_{q})$ generates a Gabor frame $\{\pi(\nu)g\}_{\nu\in\Lambda\times\Gamma}$ for $L^{2}(\R\times\Z_{q})$, there exists functions $h\in\M{1}{s}(\R\times\Z_{q})$ such that $\rhs{g}{h}=1$ and therefore
\[ f = \lhs{f}{g} \cdot h \qquad \text{for all} \ \ f\in \M{1}{s}(\R\times\Z_{q}).\]
In particular, one can take the canonical dual generator $h=S^{-1}_{g}g = g \cdot (\rhs{g}{g})^{-1}$. 

\medskip
The \textbf{duality principle for Gabor frames}. \\
This states that $\{\pi(\nu)g\}_{\nu\in\Lambda\times\Gamma}$ is a frame for $L^{2}(\R\times\Z_{q})$ if, and only if, the Gabor system $\{\pi^{\circ}(\nu^{\circ})g\}_{\nu^{\circ}\in\Gamma^{\perp}\times\Lambda^{\perp}}$ 
is a Riesz sequence for $L^{2}(\R\times\Z_{q})$. 
That is, there exist positive constants $c_1,c_2$ such that 
$$
c_1 \sum_{\nu^{\circ}\in\Gamma^{\perp}\times\Lambda^{\perp}} | a_{\nu^{\circ}} |^2 \le 
\Big\Vert \sum_{\nu^{\circ}\in\Gamma^{\perp}\times\Lambda^{\perp}} a_{\nu^{\circ}} \, \pi^{\circ}(\nu^{\circ})g \, \Big\Vert^2 
\le c_{2} \sum_{\nu^{\circ}\in\Gamma^{\perp}\times\Lambda^{\perp}} | a_{\nu^{\circ}} |^2
$$
for all sequences $a\in \ell^{2}(\Gamma^{\perp}\times\Lambda^{\perp})$.
For our purposes the importance of the duality principle is that the Riesz sequence property  implies that 
\begin{equation} \label{eq:riesz-rep} 
f = g \cdot \rhs{h}{f} \qquad\text{for all} \ \ f\in W, 
\end{equation}
where $W$ is the closure of $\mbox{span}\{\pi^{\circ}(\nu^{\circ})g\}$ in $\M{1}{s}(\R\times\Z_{q})$ (in fact it holds for functions in the closure of $\mbox{span}\{\pi^{\circ}(\nu^{\circ})g\}$ in $L^{2}(\R\times\Z_{q})$). Note that $W$ cannot contain all of $\M{1}{s}(\R\times\Z_{q})$ for any $s\ge 0$ nor can $W$ contain the entire Schwartz space. One therefore has to be careful when using the equality in \eqref{eq:riesz-rep}. The duality principle for Gabor frames was proven independently in \cite{dalala95}, \cite{ja95} and \cite{rosh97} for Gabor systems in $L^{2}(\R)$. The duality principle for Gabor systems in $L^{2}(\R\times\Z_{q})$ follows from the general duality principle for Gabor systems on locally compact ablelian groups in \cite{jale16-2}. 

\medskip

The following result shows that generators of Gabor frames for $L^{2}(\R)$ can sometimes be used to generate Gabor frames for $L^{2}(\R\times\Z_{q})$.

\begin{lemma}\label{le:from-1-to-q-frames}
Assume that the parameters $\alpha,\beta,r,s$ and $q$ are such that
\[ 
(\alpha\beta q^{2})^{-1} + r^{\circ}s^{\circ} / q \in\Z.
\]
If $\tilde{g}\in \M{1}{s}(\R)$ generates a Gabor frame for $L^{2}(\R)$ w.r.t.\ the lattice $\alpha\Z\times q\beta\Z$, then the function $g\in \M{1}{s}(\R\times\Z_{q})$ given by $g(\cdot, k) = \tilde{g}$, for $k\in\Z_{q}$ generates a Gabor frame for $L^{2}(\R\times\Z_{q})$ with respect to the lattice $\Lambda$ and $\Gamma$.
\end{lemma}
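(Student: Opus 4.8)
The plan is to reduce the frame inequality for $g$ on $L^{2}(\R\times\Z_{q})$ to $q$ superposed copies of the frame inequality for $\tilde g$ on $L^{2}(\R)$, by diagonalising in the finite variable (one could instead argue through the duality principle, on which I comment at the end). Write $f\in L^{2}(\R\times\Z_{q})$ as $f=(f_{0},\dots,f_{q-1})$ with $f_{j}=f(\,\cdot\,,j)\in L^{2}(\R)$, and index $\Lambda\times\Gamma$ by $(n,m)\in\Z^{2}$ via $\nu=(\alpha n,rn,\beta m,sm)$ — a bijection since $\alpha,\beta\neq 0$. Because $g(x,j)=\tilde g(x)$ does not depend on $j$, the $\Z_{q}$-shift in $\pi(\nu)g$ is invisible and a direct computation gives $\langle f,\pi(\nu)g\rangle=\sum_{j=0}^{q-1}e^{-2\pi i jsm/q}\langle f_{j},E_{\beta m}T_{\alpha n}\tilde g\rangle_{L^{2}(\R)}$. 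Now perform Euclidean division $m=qk+t$ with $k\in\Z$, $t\in\{0,\dots,q-1\}$: the phase collapses to $e^{-2\pi i jst/q}$ (the $qk$-part drops out) and $E_{\beta m}=E_{\beta t}E_{q\beta k}$, so pulling the unitary $E_{\beta t}$ to the other side of the inner product and using linearity in the first slot yields
\[
\langle f,\pi(\nu)g\rangle=\big\langle E_{-\beta t}h_{t},\,E_{q\beta k}T_{\alpha n}\tilde g\big\rangle_{L^{2}(\R)},\qquad h_{t}:=\sum_{j=0}^{q-1}e^{-2\pi i jst/q}f_{j}\in L^{2}(\R).
\]

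Next I would sum the squared moduli over $\nu\in\Lambda\times\Gamma$, i.e.\ over $(n,k)\in\Z^{2}$ and $t\in\{0,\dots,q-1\}$. For each fixed $t$, the family $\{E_{q\beta k}T_{\alpha n}\tilde g\}_{n,k\in\Z}$ is the Gabor system generated by $\tilde g$ over the lattice $\alpha\Z\times q\beta\Z$ (the $T$ versus $E$ ordering only inserts phases, which are immaterial for the frame bounds), which by hypothesis is a frame for $L^{2}(\R)$ with bounds $A\le B$ that do not depend on $t$; since $E_{-\beta t}$ is unitary this gives
\[
A\,\|h_{t}\|_{2}^{2}\le\sum_{n,k\in\Z}\big|\langle E_{-\beta t}h_{t},E_{q\beta k}T_{\alpha n}\tilde g\rangle\big|^{2}\le B\,\|h_{t}\|_{2}^{2}.
\]
Summing over $t$, the middle terms reassemble $\sum_{\nu\in\Lambda\times\Gamma}|\langle f,\pi(\nu)g\rangle|^{2}$, so it only remains to evaluate $\sum_{t=0}^{q-1}\|h_{t}\|_{2}^{2}$. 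Because $\gcd(s,q)=1$, the $q\times q$ matrix $(e^{-2\pi i jst/q})_{t,j}$ is $\sqrt{q}$ times a unitary (a reindexed discrete Fourier matrix), so $\sum_{t}|h_{t}(x)|^{2}=q\sum_{j}|f_{j}(x)|^{2}$ pointwise, whence $\sum_{t=0}^{q-1}\|h_{t}\|_{2}^{2}=q\,\|f\|_{2}^{2}$. Combining the two displays, $Aq\,\|f\|_{2}^{2}\le\sum_{\nu\in\Lambda\times\Gamma}|\langle f,\pi(\nu)g\rangle|^{2}\le Bq\,\|f\|_{2}^{2}$ for every $f\in L^{2}(\R\times\Z_{q})$, which is the claim (with frame bounds $Aq$, $Bq$).

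The only step that needs genuine care is the reduction $m=qk+t$: one has to check that the $\Z_{q}$-phase $e^{-2\pi i jsm/q}$ collapses to a function of $t$ alone, that $E_{\beta m}$ splits off the fractional modulation $E_{\beta t}$ (harmless by unitarity), and that the residual family is exactly the scalar Gabor system over $\alpha\Z\times q\beta\Z$; after that the argument is routine. Note that this direct route does not use the congruence $(\alpha\beta q^{2})^{-1}+r^{\circ}s^{\circ}/q\in\Z$. That hypothesis is what one needs in the alternative proof via the duality principle: there one shows instead that $\{\pi^{\circ}(\nu^{\circ})g\}_{\nu^{\circ}\in\Gamma^{\perp}\times\Lambda^{\perp}}$ is a Riesz sequence, and computing $\pi^{\circ}(\nu^{\circ})g$ for the diagonal $g$ produces the cocycle phase $e^{-2\pi i nm((\alpha\beta q^{2})^{-1}+r^{\circ}s^{\circ}/q)}$; the congruence (equivalently $\theta^{\circ}\in\Z$) trivialises it, and then the same Fourier-matrix and $L^{2}(\R)$-duality bookkeeping applies on the Riesz side.
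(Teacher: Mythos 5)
Your proof is correct, but it follows a genuinely different route from the paper. The paper argues through the duality principle: it passes to the adjoint system $\{\pi^{\circ}(\nu^{\circ})g\}_{\nu^{\circ}\in\Gamma^{\perp}\times\Lambda^{\perp}}$, uses the hypothesis $(\alpha\beta q^{2})^{-1}+r^{\circ}s^{\circ}/q\in\Z$ exactly as you predict (to kill the commutation phase so that the bi-infinite Gram matrix has Laurent structure), and then invokes Janssen's criterion twice, first to convert invertibility of that Laurent matrix into the statement that a certain correlation series is bounded away from zero, and then to identify that condition with the frame property of $\{E_{m\beta q}T_{n\alpha}\tilde g\}_{m,n\in\Z}$ on $L^{2}(\R)$, after observing that the correlations $\langle g, T_{n/\beta q,-s^{\circ}n}E_{m/\alpha q,-r^{\circ}m}g\rangle$ vanish unless $m\in q\Z$. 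Your argument instead verifies the frame inequality directly: since $g$ is constant in the finite variable, the $\Z_{q}$-translations are invisible, the splitting $m=qk+t$ collapses the discrete phase to $e^{-2\pi i jst/q}$, and the coprimality of $s$ and $q$ makes the matrix $(e^{-2\pi i jst/q})_{t,j}$ a permuted DFT matrix, giving $\sum_{t}\Vert h_{t}\Vert_{2}^{2}=q\Vert f\Vert_{2}^{2}$; each fixed $t$ then contributes one copy of the scalar frame inequality over $\alpha\Z\times q\beta\Z$ (the unitary $E_{-\beta t}$ and the $T$/$E$ ordering being harmless). I checked the key identity $\langle f,\pi(\nu)g\rangle=\langle E_{-\beta t}h_{t},E_{q\beta k}T_{\alpha n}\tilde g\rangle$ and the reindexing $(n,m)\leftrightarrow(n,k,t)$; both are sound. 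Your approach buys explicit frame bounds $qA$ and $qB$, is more elementary (no Laurent-operator or Riesz-sequence machinery), and, as you correctly note, it does not use the congruence hypothesis at all, so it actually proves a slightly stronger statement than the lemma; in the paper that hypothesis is an artifact of the duality route, where it is what trivialises the cocycle phase and makes the Gram matrix Laurent. The paper's route, on the other hand, stays inside the duality framework that the rest of the paper (Wexler--Raz, the module structure over the adjoint algebra) is built on, which is why the authors phrase it that way.
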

\begin{proof} 
By 
the duality principle of Gabor frames we know that $\{\pi(\nu) g\}_{\nu\in\Lambda\times\Gamma}$ is a Gabor frame for $L^{2}(\R\times\Z_{q})$ if and only
$\{\pi^{\circ}(\nu^{\circ})g\}_{\nu^{\circ}\in\Gamma^{\perp}\times\Lambda^{\perp}}$ is a Riesz sequence for $L^{2}(\R\times\Z_{q})$. This is the case if and only if the bi-infinite matrix 
\[ \big( \big\langle T_{\tilde{n}/\beta q, -s^{\circ}\tilde{n}} \, E_{\tilde{m}/\alpha q, -r^{\circ}\tilde{m}} g, T_{n/\beta q, -s^{\circ}n} \, E_{m/\alpha q, -r^{\circ}m} g  \big\rangle_{L^{2}(\R\times\Z_{q})} \big)_{m,\tilde{m},n,\tilde{n}\in\Z}\]
is invertible as an operator on $\ell^{2}(\Z^{2})$.
Now
\begin{multline*} \big\langle T_{\tilde{n}/\beta q, -s^{\circ}\tilde{n}} \, E_{\tilde{m}/\alpha q, -r^{\circ}\tilde{m}} g, T_{n/\beta q, -s^{\circ}n} \, E_{m/\alpha q, -r^{\circ}m} g  \big\rangle_{L^{2}(\R\times\Z_{q})}  \\
= \big\langle  g, T_{(n-\tilde{n})/\beta q, -s^{\circ}(n-\tilde{n})} \, E_{(m-\tilde{m})/\alpha q, -r^{\circ}(m-\tilde{m})} g  \big\rangle_{L^{2}(\R\times\Z_{q})} , 
\end{multline*}
since the phase factor coming from commuting the translation and modulation operators disappears due to the 
condition on the parameters that $(\alpha\beta q^{2})^{-1} + r^{\circ}s^{\circ}/q \in\Z$.
This shows that the bi-infinite matrix has Laurent structure. 
As can be found in, e.g. \cite{ja96}, the invertibility of such matrices is equivalent to the fact that the function
\[ F(t_{1},t_{2}) = \sum_{m,n\in\Z} \big\langle  g, T_{n/\beta q, -s^{\circ}n} \, E_{m/\alpha q, -r^{\circ}m} g  \big\rangle_{L^{2}(\R\times\Z_{q})} \, e^{2\pi i (mt_{1}+nt_{2})} , \quad  (t_{1},t_{2})\in\R^{2},\]
is bounded away from zero (and finite, which is automatic for functions in $\M{1}{s}(\R\times\Z_{q})$).
Note that
\[ \big\langle  g, T_{n/\beta q, -s^{\circ}n} \, E_{m/\alpha q, -r^{\circ}m} g  \big\rangle_{L^{2}(\R\times\Z_{q})} = \langle \tilde{g}, E_{m/\alpha q} T_{n/\beta q} \tilde{g}\rangle_{L^{2}(\R)} \, \begin{cases} q 
& m\in q\Z, \\ 0 & \text{otherwise}.\end{cases}.\]
This implies that
\[ F(t_{1},t_{2}) = q \sum_{m,n\in\Z} \langle \tilde{g}, E_{m/\alpha} T_{n/\beta q} \tilde{g}\rangle_{L^{2}(\R)} e^{2\pi i (qmt_{1},nt_{2})}.\]
Again by \cite{ja96} this function is bounded away from zero if and only if $\{E_{m\beta q}T_{n\alpha}\tilde{g}\}_{m,n\in\Z}$ is a Gabor frame for $L^{2}(\R)$, which is true by assumption. The result follows.
\end{proof}

For the following sections it is important to observe that functions that generate dual Gabor frames allow us to construct special elements in the algebra $\ell^{1}_{s}(\Lambda\times\Gamma)$.

\begin{lemma} Let $g$ and $h$ be functions in $\M{1}{s}(\R\times\Z_{q})$, $s\ge 0$, such that the Gabor systems $\{\pi(\nu)g\}_{\nu\in\Lambda\times\Gamma}$ and $\{\pi(\nu)h\}_{\nu\in\Lambda\times\Gamma}$ are dual Gabor frames for $L^{2}(\R\times\Z_{q})$, that is \eqref{eq:dual} is satisfied. Then the following holds.
\begin{enumerate}
\item[(i)] The sequence $a=\lhs{g}{h}\in\ell^{1}_{s}(\Lambda\times\Gamma)$ is idempotent, 
that is, $a^{2} = a \twist a = a$.
\item[(ii)] If $h$ is the canonical dual generator, $h = S^{-1}_{g}g$, then $a = \lhs{g}{S^{-1}_{g}g}$ is a projection, that is, $a^{2} = a \twist a = a$ and $a^{*} = a$.
\end{enumerate}
\end{lemma}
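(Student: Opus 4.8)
The plan is to deduce both statements formally from the compatibility relations \eqref{eq:inner-p-rule} between the left inner product $\lhs{\cdot}{\cdot}$, the left action and the twisted convolution, together with the reproducing formula \eqref{eq:dual} for a dual pair.

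For (i), I would first note that since $g$ and $h$ form a dual pair, \eqref{eq:dual} also holds with $g$ and $h$ interchanged; evaluating that version at $f = g$ gives
\[ a \cdot g \,=\, \lhs{g}{h} \cdot g \,=\, g . \]
Applying the rule $a' \twist \lhs{f}{k} = \lhs{a' \cdot f}{k}$ of \eqref{eq:inner-p-rule} with $a' = a$, $f = g$, $k = h$, it then follows that
\[ a \twist a \,=\, a \twist \lhs{g}{h} \,=\, \lhs{a \cdot g}{h} \,=\, \lhs{g}{h} \,=\, a , \]
which is the claimed idempotency.

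For (ii), the pair $g$, $h = S_g^{-1}g$ is a dual pair by \eqref{eq:dual0}, so part (i) already yields $a \twist a = a$ and only $a^* = a$ remains. Using $(\lhs{f}{k})^* = \lhs{k}{f}$ we have $a^* = \lhs{S_g^{-1}g}{g}$, so it suffices to show $\langle S_g^{-1}g, \pi(\nu) g \rangle = \langle g, \pi(\nu) S_g^{-1} g \rangle$ for every $\nu \in \Lambda\times\Gamma$. Here I would use that $S_g$ is positive, hence self-adjoint, and that, by the fundamental identity \eqref{eq:janssen}, $S_g$ acts as the right multiplication $f \mapsto f \cdot \rhs{g}{g}$; thus $S_g$, and with it $S_g^{-1}$, belongs to $\mathcal{A}_s^\circ$ and therefore commutes with each $\pi(\nu)$, $\nu \in \Lambda\times\Gamma$ (the mutual commutation of $\mathcal{A}_s$ and $\mathcal{A}_s^\circ$ being a consequence of the relations between $\pi$ and $\pi^\circ$ and the definition of the annihilators $\Lambda^\perp$, $\Gamma^\perp$). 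Then
\[ \langle S_g^{-1}g, \pi(\nu) g \rangle \,=\, \langle g, S_g^{-1}\pi(\nu) g \rangle \,=\, \langle g, \pi(\nu) S_g^{-1} g \rangle , \]
which says exactly $a^* = a$.

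I expect the self-adjointness in (ii) to be the only step needing care — concretely the fact that the canonical dual generator interacts symmetrically with the lattice shifts, which ultimately rests on the $2$-cocycle identity for $\varphi$ restricted to $\Lambda\times\Gamma$. A variant staying entirely within the module formalism is also available: write $S_g^{-1}g = g \cdot b_0^{-1}$ with $b_0 = \rhs{g}{g}$ self-adjoint (so $b_0^{-1}$ self-adjoint too), deduce from \eqref{eq:janssen} and the $\rhs{\cdot}{\cdot}$-compatibility relations the identity $\lhs{f}{k \cdot b} = \lhs{f \cdot b^*}{k}$, and apply it with $f = k = g$, $b = b_0^{-1}$ to get $a = \lhs{g}{g \cdot b_0^{-1}} = \lhs{g \cdot b_0^{-1}}{g} = a^*$; this merely trades the cocycle bookkeeping for bookkeeping with the two inner products.
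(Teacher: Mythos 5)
Your proposal is correct and follows essentially the paper's own route: part (i) is the same module computation $a\twist a=\lhs{a\cdot g}{h}$, with the identity $\lhs{g}{h}\cdot g=g$ obtained from the interchanged dual-frame expansion \eqref{eq:dual} rather than, as in the paper, from the fundamental identity \eqref{eq:janssen} plus Wexler--Raz $\rhs{h}{g}=1$ --- an equivalent repackaging; part (ii) is exactly the paper's argument (self-adjointness of $S_g^{-1}$ and its commutation with the lattice time-frequency shifts), where you additionally justify the commutation via $S_g f=f\cdot\rhs{g}{g}\in\mathcal{A}_s^{\circ}$, a fact the paper simply recalls.
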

\begin{proof} (i). Since $g$ and $h$ are dual generators they satisfy the Wexler-Raz relations, 
$\rhs{g}{h} = 1$ and, equivalently, $\rhs{h}{g}=1$. We thus find
\[ a^{2} = \lhs{g}{h} \twist \lhs{g}{h} \stackrel{\eqref{eq:inner-p-rule}}{=} \lhs{\lhs{g}{h} \cdot g}{h}  \stackrel{\eqref{eq:janssen}}{=} \lhs{g\cdot\rhs{h}{g}}{h} = \lhs{g}{h} = a.\]
(ii). Since $a^{2}=a$ from (i), we only need to show that $a=a^{*}$.
Recall that the inverse frame operator $S^{-1}_{g}$ is self-adjoint and commutes with time-frequency shifts $\{\pi(\nu)\}_{\nu\in\Lambda\times\Gamma}$. Thus,
\[ a^{*} = (\lhs{g}{S^{-1}_{g}g})^{*} \stackrel{\eqref{eq:inner-p-rule}}{=} \lhs{S^{-1}_{g}g}{g} = \{ \langle S^{-1}_{g}g,\pi(\nu)g\rangle\}_{\nu\in\Lambda\times\Gamma} = \{ \langle g,\pi(\nu)S^{-1}_{g}g\rangle\}_{\nu\in\Lambda\times\Gamma} = a.\]
This concludes the proof. 
\end{proof}

Finally, consider the group $\mbox{GL}(\mathcal{A}_{s}^{\circ})$, of all invertible elements in $\mathcal{A}_{s}^{\circ}\cong\ell^{1}_{s}(\Gamma^{\perp}\times\Lambda^{\perp})$. Its elements 
are called \emph{gauge transformations} of $\M{1}{s}(\R\times\Z_{q})$. 
Example of gauge transformations are the Gabor frame operator $S_g$ of a function $g\in \M{1}{s}(\R\times\Z_{q})$ with time-frequency shifts along the lattice $\Lambda \times\Gamma$, its square root $S^{1/2}_{g}$ and also their inverses, $S^{-1}_g$ and $S^{-1/2}_g$.

\begin{lemma} \label{le:0712} Let $T$ be a gauge transformation. 
\begin{enumerate}
\item[(i)] If $g\in \M{1}{s}(\R\times\Z_{q})$ generates a Gabor frame 
$\{\pi(\nu)g\}_{\nu\in\Lambda\times\Gamma}$ for $L^{2}(\R\times\Z_{q})$, then
\[ \lhs{f_{1}}{S_{g}^{-1} f_{2}} = \lhs{Tf_{1}}{S_{Tg}^{-1}Tf_{2}} \qquad  \mbox{for all} \ \ f_{1},f_{2}\in \M{1}{s}(\R\times\Z_{q}).\]
\item[(ii)] Furthermore,
\[ \lhs{f_{1}}{f_{2}} = \lhs{Tf_{1}}{(T^{-1})^{*} f_{2}} \qquad \mbox{for all} \ \ f_{1},f_{2}\in \M{1}{s}(\R\times\Z_{q}). \]
\end{enumerate}
\end{lemma}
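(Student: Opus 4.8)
The plan is to isolate a single elementary ``adjoint across the inner product'' identity and to deduce both statements from it. First I would establish that for every $T\in\mathcal{A}^{\circ}_{s}$ and all $f_{1},f_{2}\in\M{1}{s}(\R\times\Z_{q})$,
\[ \lhs{Tf_{1}}{f_{2}} = \lhs{f_{1}}{T^{*}f_{2}}, \]
where $T^{*}$ is the $L^{2}$-adjoint of $T$ (which again belongs to $\mathcal{A}^{\circ}_{s}$, since $I^{\circ}$ is a $*$-isomorphism). Indeed, by the definition of $\lhs{\cdot}{\cdot}$ one has $\lhs{Tf_{1}}{f_{2}}(\nu)=\langle Tf_{1},\pi(\nu)f_{2}\rangle=\langle f_{1},T^{*}\pi(\nu)f_{2}\rangle$ for $\nu\in\Lambda\times\Gamma$, so the claim amounts to $T^{*}\pi(\nu)=\pi(\nu)T^{*}$; this is exactly the statement that the left $\mathcal{A}_{s}$-action and the right $\mathcal{A}^{\circ}_{s}$-action on $\M{1}{s}(\R\times\Z_{q})$ commute, which is part of the bimodule structure of Proposition~\ref{pr:Morita}, and at the level of generators it reduces to the identities $\varphi(\nu,\nu^{\circ})=\varphi(\nu^{\circ},\nu)=1$ for $\nu\in\Lambda\times\Gamma$, $\nu^{\circ}\in\Gamma^{\perp}\times\Lambda^{\perp}$, which are immediate from the descriptions of $\Lambda^{\perp}$ and $\Gamma^{\perp}$ in Lemma~\ref{le:annihilators}.

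Part (ii) is then a one-line consequence. Being a gauge transformation, $T$ is invertible in $\mathcal{A}^{\circ}_{s}$, so $(T^{-1})^{*}=(T^{*})^{-1}\in\mathcal{A}^{\circ}_{s}$ and $T^{*}(T^{-1})^{*}=(T^{-1}T)^{*}=\mathrm{Id}$. Applying $\lhs{Tf_{1}}{f_{2}}=\lhs{f_{1}}{T^{*}f_{2}}$ with $f_{2}$ replaced by $(T^{-1})^{*}f_{2}$ gives
\[ \lhs{Tf_{1}}{(T^{-1})^{*}f_{2}} = \lhs{f_{1}}{T^{*}(T^{-1})^{*}f_{2}} = \lhs{f_{1}}{f_{2}}. \]

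For part (i) I would first prove the operator identity $S_{Tg}=T\,S_{g}\,T^{*}$ on $\M{1}{s}(\R\times\Z_{q})$: using $\pi(\nu)Tg=T\pi(\nu)g$ and $\langle f,\pi(\nu)Tg\rangle=\langle T^{*}f,\pi(\nu)g\rangle$,
\[ S_{Tg}f=\sum_{\nu\in\Lambda\times\Gamma}\langle f,\pi(\nu)Tg\rangle\,\pi(\nu)Tg = T\sum_{\nu\in\Lambda\times\Gamma}\langle T^{*}f,\pi(\nu)g\rangle\,\pi(\nu)g = T S_{g}T^{*}f. \]
Since $g$ generates a Gabor frame, $S_{g}$ is invertible on $\M{1}{s}(\R\times\Z_{q})$; moreover $Tg$ generates a Gabor frame as well (because $\{\pi(\nu)Tg\}_{\nu}=T\{\pi(\nu)g\}_{\nu}$ and $T$ is bounded and invertible on $L^{2}(\R\times\Z_{q})$), so $S_{Tg}$ is invertible on $\M{1}{s}(\R\times\Z_{q})$ too. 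Hence $S_{Tg}^{-1}=(T^{-1})^{*}S_{g}^{-1}T^{-1}$, and therefore $S_{Tg}^{-1}T=(T^{-1})^{*}S_{g}^{-1}$. Feeding this into part (ii) with $f_{2}$ replaced by $S_{g}^{-1}f_{2}$,
\[ \lhs{f_{1}}{S_{g}^{-1}f_{2}} = \lhs{Tf_{1}}{(T^{-1})^{*}S_{g}^{-1}f_{2}} = \lhs{Tf_{1}}{S_{Tg}^{-1}Tf_{2}}, \]
which is exactly (i).

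No step here is deep; what deserves the most care are the supporting facts rather than the algebra. The one genuinely structural input is the commutation $\pi(\nu)T=T\pi(\nu)$ for $\nu\in\Lambda\times\Gamma$ and $T\in\mathcal{A}^{\circ}_{s}$, which is where the defining property of the annihilator lattices $\Lambda^{\perp},\Gamma^{\perp}$ enters. Beyond that one must make sure that $T$, and hence $T^{*},T^{-1},(T^{-1})^{*}$, act as bounded invertible operators on $L^{2}(\R\times\Z_{q})$ and not merely on $\M{1}{s}(\R\times\Z_{q})$, and that $S_{g}$ and $S_{Tg}$ are invertible on $\M{1}{s}(\R\times\Z_{q})$; the latter two points rest on inverse-closedness (the Gr\"ochenig--Leinert theorem, already used in the excerpt) together with the frame hypothesis on $g$. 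Everything else is a two-line manipulation of $\lhs{Tf_{1}}{f_{2}}=\lhs{f_{1}}{T^{*}f_{2}}$ and $S_{Tg}=TS_{g}T^{*}$.
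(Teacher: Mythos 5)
Your proof is correct, and part (ii) is argued exactly as in the paper: both rest on the commutation $\pi(\nu)T=T\pi(\nu)$ for $\nu\in\Lambda\times\Gamma$ and $T\in\mathcal{A}_s^{\circ}$, which is precisely where the annihilator property of $\Gamma^{\perp}\times\Lambda^{\perp}$ enters. For part (i) your route differs in presentation: you prove the operator identity $S_{Tg}=T S_g T^{*}$ straight from the definition of the frame operator and then deduce (i) from (ii) via $S_{Tg}^{-1}T=(T^{-1})^{*}S_g^{-1}$, whereas the paper stays inside the module calculus, writing $Tf=f\cdot b$ with $b\in\ell^{1}_{s}(\Gamma^{\perp}\times\Lambda^{\perp})$, using the Janssen-type representation $S_h f=f\cdot\rhs{h}{h}$ (a consequence of the fundamental identity of Gabor analysis) to compute $S_{Tg}^{-1}Tf_{2}=f_{2}\cdot\big((\rhs{g}{g})^{-1}\twist (b^{*})^{-1}\big)$, and concluding with the compatibility $\lhs{f_{1}\cdot b}{f_{2}}=\lhs{f_{1}}{f_{2}\cdot b^{*}}$ --- which is exactly your adjoint identity $\lhs{Tf_{1}}{f_{2}}=\lhs{f_{1}}{T^{*}f_{2}}$ expressed in the $\ell^{1}_{s}$-picture. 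The content is therefore the same computation in two languages; your version avoids invoking the Janssen representation of $S_g$ and makes explicit two points the paper passes over silently (that $Tg$ is again a frame generator and that $S_{Tg}$, equivalently $\rhs{Tg}{Tg}$, is invertible on $\M{1}{s}(\R\times\Z_{q})$ by the Gr\"ochenig--Leinert inverse-closedness), while the paper's version keeps the argument uniformly within the $\ell^{1}_{s}(\Gamma^{\perp}\times\Lambda^{\perp})$-valued inner-product formalism used throughout the rest of the text.
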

\begin{proof} Since $\mathcal{A}_{s}^{\circ}\cong \ell^{1}_{s}(\Gamma^{\perp}\times\Lambda^{\perp})$, we may write $T f = f \, \cdot \, b$ for some unique $b\in \ell^{1}_{s}(\Gamma^{\perp}\times\Lambda^{\perp})$. Using properties of the inner product $\rhs{\, \cdot\, }{\, \cdot\, }$ we find
\begin{align*}
S_{Tg}^{-1} Tf_{2} & = f_{2} \cdot (b \twist (\rhs{g\cdot b}{g\cdot b})^{-1}) \\
& = f_{2} \cdot (b \twist ( b^{*} \twist \rhs{g}{g} \twist b)^{-1} ) \\
& = f_{2} \cdot (b \twist b^{-1} \twist (\rhs{g}{g})^{-1} \twist (b^{*})^{-1} )\\
& = f_{2} \cdot ( (\rhs{g}{g})^{-1} \twist (b^{*})^{-1} ) .
\end{align*} 
Using this and the fact that $\lhs{f_{1}\cdot b}{f_{2}}=\lhs{f_{1}}{f_{2}\twist b^{*}}$ yields the desired equality:
\[ \lhs{Tf_{1}}{S_{Tg}^{-1}Tf_{2}} = \lhs{f_{1}\cdot b}{f_{2} \cdot ( (\rhs{g}{g})^{-1} \twist (b^{*})^{-1} )} = \lhs{f_{1}}{f_{2}\cdot (\rhs{g}{g})^{-1}} = \lhs{f_{1}}{S_{g}^{-1} f_{2}}.\]
This proves (i). The statement in (ii) follows from the fact that $T$ (and thus also $T^{*}$) commutes with time-frequency shifts from the lattice $\Lambda\times\Gamma$:
\[ \lhs{Tf_{1}}{(T^{-1})^{*} f_{2}} = \{\langle T f_{1}, \pi(\nu) (T^{*})^{-1} f_{2}\rangle\}_{\nu\in\Lambda\times\Gamma} = \{\langle f_{1},\pi(\nu) f_{2}\rangle \}_{\nu\in\Lambda\times\Gamma} =\lhs{f_{1}}{f_{2}}.\]
This concludes the proof
\end{proof}

Lemma \ref{le:0712} implies that the canonical dual pair $g$ and $S^{-1}_gg$ generate the same projection as the canonical tight dual window $S^{-1/2}_gg$, that is, 
\[ \lhs{g}{S^{-1}g} = \lhs{S^{-1/2}g}{S^{-1/2}g}.\]

\section{Derivations, connections and curvature}\label{se:dcc}
In this section we shall detail a few concepts of non-commutative geometry related to non-commutative tori.
In analogy to Riemannian geometry we first consider the following \emph{covariant derivatives} 
on the bundle $\M{1}{s+1}(\R\times\Z_{q})$, $s\ge 0$: 
\begin{itemize}
\item[] $\nabla_{1}: \M{1}{s+1}(\R\times\Z_{q}) \to \M{1}{s}(\R\times\Z_{q}), 
\quad  \nabla_{1}f(\, \cdot \, , k) = 2\pi \, i \, M f(\, \cdot \, , k), \quad k\in \Z_{q}$,
\item[] $\nabla_{2}: \M{1}{s+1}(\R\times\Z_{q}) \to \M{1}{s}(\R\times\Z_{q}), \quad
\nabla_{2}f(\, \cdot \,, k) = D f(\, \cdot\, , k), \quad k\in \Z_{q}$.
\end{itemize}
Note that Proposition \ref{pr:der-and-mul-on-M} implies that these operators are well-defined, linear and bounded. Observe also that $\nabla_{1}$ and $\nabla_{2}$ do not have any action in the discrete variable $k$ and that they do not depend on the parameters $\alpha,\beta, r, s$ nor $q$.

It is straightforward to verify that, for all $f\in \M{1}{s}(\R\times\Z_{q})$ with $s\ge 1$,
\begin{align}
\nabla_{1} (E_{\gamma,c}f) & = E_{\gamma,c} \nabla_{1}f, \qquad \nabla_{1} (T_{\lambda,l} f ) 
= 2\pi i \, \lambda (T_{\lambda,l}f )+ T_{\lambda,l} \nabla_{1}f \label{eq:2102a} \\
\nabla_{2} (T_{\lambda,l}f) & = T_{\lambda,l} \nabla_{2}f, \qquad \nabla_{2} E_{\gamma,c} f 
= 2\pi i \, \gamma (E_{\gamma,c}f) + E_{\gamma,c} \nabla_{2}f \label{eq:2102b}
\end{align}
On $\ell^{1}_{s}(\Lambda\times\Gamma)$ we define \emph{derivations} $\partial_{1}$ and $\partial_{2}$ 
as follows. 
For $s\ge 0$, define
$$
\partial_{j}: \ell^{1}_{s+1}(\Lambda\times\Gamma) \to \ell^{1}_{s}(\Lambda\times\Gamma), \quad j=1,2 
$$
with, for $(\lambda,l,\gamma,c)\in \Lambda\times\Gamma$, 
\[
 (\partial_{1}a)(\lambda,l,\gamma,c) = 2\pi i\lambda \, a(\lambda,l,\gamma,c), \qquad
 (\partial_{2}a)(\lambda,l,\gamma,c) = 2 \pi i \gamma \, a(\lambda,l,\gamma,c). 
\]
Similarly, on $\ell^{1}_{s+1}(\Gamma^{\perp}\times\Lambda^{\perp})$ we define
$$ 
\partial_{j}^{\circ}: \ell^{1}_{s+1}(\Gamma^{\perp}\times\Lambda^{\perp}) 
\to \ell^{1}_{s}(\Gamma^{\perp}\times\Lambda^{\perp}), 
\quad j=1,2 
$$
with, for $(\lambda,l,\gamma,c) \in \Gamma^{\perp}\times\Lambda^{\perp}$,
$$
(\partial_{1}^{\circ}a)(\lambda,l,\gamma,c) = 2\pi i\lambda \, a(\lambda,l,\gamma,c), \qquad 
(\partial_{2}^{\circ}a)(\lambda,l,\gamma,c) = 2\pi i \gamma \, a(\lambda,l,\gamma,c).
$$
Using the isomorphism between $\ell^{1}_{s}(\Lambda\times\Gamma)$ and $\mathcal{A}_{s}$ and between $\ell^{1}_{s}(\Gamma^{\perp}\times\Lambda^{\perp})$ and $\mathcal{A}_{s}^{\circ}$ the derivations can naturally be defined on $\mathcal{A}_{s}$ and $\mathcal{A}_{s}^{\circ}$, such that
\[ \partial_{j} : \mathcal{A}_{s+1} \to \mathcal{A}_{s} \quad \text{and} \quad \partial_{j}^{\circ} : \mathcal{A}_{s+1}^{\circ} \to \mathcal{A}_{s}^{\circ} \qquad \text{for} \ \ j=1,2.\]
 
\begin{remark} 
Note that the derivations depend on the lattices $\Lambda$ and $\Gamma$.
In other literature, see e.g. \cite{co80,cori87,dalalu15,la06-3}, the sequence spaces $\ell^{1}_{s}$ are not indexed by the lattice $\Lambda\times\Gamma$ but rather by $\Z^{2}$. 
One therefore defines, e.g., for $j=1,2$, 
\[ 
\partial_{j} : \ell^{1}_{s+1}(\Z^{2}) \to \ell^{1}_{s} (\Z^{2}), \quad 
(\partial_{j} a)(n_1,n_2) = 2 \pi i \, n_j a(n_1,n_2), \quad (n_1,n_2)\in\Z^{2} . \]
Hence, in this case, the derivations are independent on $\Lambda$ and $\Gamma$. This discrepancy has no implication on the theory, it is just a matter of normalization. 
\end{remark}

The derivations are well-defined, linear and bounded operators. 
Concerning boundedness one easily verifies that 
\[ \Vert \partial_{1} a \Vert_{\ell^{1}_{s}} = \sum_{(\lambda,l,\gamma,c)\in \Lambda\times\Gamma} \vert 2\pi i \lambda a(\lambda,l,\gamma,c) \vert (1+\vert \lambda \vert +\vert \gamma\vert)^{s} \le 2\pi \, \Vert a \Vert_{\ell^{1}_{s+1}}. \]
And similar estimates can be established for $\partial_{2}, \partial_{1}^{\circ}$ and $\partial_{2}^{\circ}$.

From the definition of the operators $\nabla_{j}$ and $\partial_{j}$, $j=1,2$ and equations \eqref{eq:2102a} and \eqref{eq:2102b} one establishes that the Leibniz rule holds, that is 
\begin{equation} \label{eq:0103a}
\nabla_{j}( a \cdot  f ) = (\partial_{j} a) \cdot f + a\cdot \nabla_{j} f 
\qquad \text{for all} \ \ f\in \M{1}{1}(\R\times\Z_{q}), \ a\in \ell^{1}_{1}(\Lambda\times\Gamma), 
\end{equation}
and that the derivations are compatible with the $\ell^{1}_{s}$-sequence valued inner-product, 
\begin{equation} \label{eq:0103b} 
\lhs{\nabla_{j} f}{g} + \lhs{f}{\nabla_{j} g} = \partial_{j} (\lhs{f}{\nabla_{j} g} ) 
\qquad \text{for all} \ \ f,g\in \M{1}{1}(\R\times\Z_{q}) .
\end{equation}
Combining these two equations, we find that, for all $f,g,h\in \M{1}{1}(\R\times\Z_{q})$ that
\[ \nabla_{j}(\lhs{f}{g}\, h) = \lhs{\langle \nabla_{j} f}{g} \, h 
+ \lhs{f}{\nabla_{j}g}\, h + \lhs{f}{g} \, \nabla_{j} h \qquad j=1,2.\]
Similar statements hold with $\partial^{\circ}_{j}$ and $\rhs{\cdot}{\cdot}$ in stead of 
$\partial_{j}$ and $\lhs{\cdot}{\cdot}$.

As in Riemannian geometry, the curvature of the covariant derivatives is given by
\[ F_{12} = \nabla_{1}\nabla_{2} - \nabla_{2}\nabla_{1},\]
since $\partial_1$ and $\partial_2$ are two commuting derivations. 
It turns out that the curvature is constant. 
\begin{lemma} For any $s\ge 0$ the curvature of the covariant derivatives is given by the linear and bounded operator
\[ F_{12}: \M{1}{s}(\R\times\Z_{q}) \to \M{1}{s}(\R\times\Z_{q}), 
\qquad F_{12} f = -2\pi i \, \mathrm{Id}. \]
\end{lemma}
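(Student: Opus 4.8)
The plan is to compute $F_{12}f = (\nabla_1\nabla_2 - \nabla_2\nabla_1)f$ directly from the explicit definitions of $\nabla_1$ and $\nabla_2$ given above. Since neither operator touches the discrete variable $k \in \Z_q$, it suffices to work with a single component $f(\,\cdot\,,k) \in \M{1}{s+1}(\R)$ (in fact $\M{1}{s+2}(\R)$ to apply both operators in succession) and show that $(\nabla_1\nabla_2 - \nabla_2\nabla_1)$ acts as $-2\pi i\,\mathrm{Id}$ there. Recall $\nabla_1 f = 2\pi i\, Mf$ with $(Mf)(t) = t f(t)$ and $\nabla_2 f = Df$ with $(Df)(t) = f'(t)$. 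So I would compute $\nabla_1\nabla_2 f = 2\pi i\, M D f$, i.e. the function $t \mapsto 2\pi i\, t f'(t)$, and $\nabla_2 \nabla_1 f = D(2\pi i\, Mf) = 2\pi i\,(Mf)'$, i.e. the function $t \mapsto 2\pi i\,(f(t) + t f'(t))$ by the ordinary product rule. Subtracting gives $F_{12}f(t) = 2\pi i\, t f'(t) - 2\pi i\, f(t) - 2\pi i\, t f'(t) = -2\pi i\, f(t)$, which is the claimed identity $F_{12} = -2\pi i\,\mathrm{Id}$.

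For well-definedness and boundedness, I would invoke Proposition \ref{pr:der-and-mul-on-M}: each of $D$ and $M$ maps $\M{1}{s+1}(\R) \to \M{1}{s}(\R)$ boundedly, hence the compositions $MD, DM : \M{1}{s+2}(\R) \to \M{1}{s}(\R)$ are bounded, and so is their difference. But the computation shows this difference equals $-2\pi i\,\mathrm{Id}$, which is obviously bounded $\M{1}{s}(\R) \to \M{1}{s}(\R)$ for every $s \ge 0$; so although $\nabla_1$ and $\nabla_2$ individually lose regularity, their commutator does not, and $F_{12}$ extends to a bounded operator on $\M{1}{s}(\R\times\Z_q)$ for all $s \ge 0$ (and likewise on the Schwartz space and on $L^2$). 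This is the reason the statement can assert $F_{12} : \M{1}{s} \to \M{1}{s}$ rather than $\M{1}{s+2} \to \M{1}{s}$.

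There is essentially no obstacle here — the only point requiring a word of care is that the identity $D(Mf) = Mf' + f$ is the classical Leibniz/product rule on $\R$, which is valid for the functions in question since $\M{1}{s+2}(\R) \subset \M{1}{1}(\R)$ consists of $C^1$ functions (indeed one could also read it off from the commutator relation $[D, M] = \mathrm{Id}$ together with $\nabla_1 = 2\pi i\, M$). If one prefers, the same conclusion follows abstractly from the Leibniz rule \eqref{eq:0103a} applied with $a$ running over the generators $U = E_{\beta,s}$, $V = T_{\alpha,r}$ of the noncommutative torus together with the relations \eqref{eq:2102a}--\eqref{eq:2102b}, but the direct pointwise computation on $\M{1}{s+2}(\R)$ is the cleanest route and is what I would write out.
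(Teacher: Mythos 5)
Your proposal is correct and follows essentially the same route as the paper: a direct computation of the commutator $\nabla_1\nabla_2-\nabla_2\nabla_1=2\pi i\,(MD-DM)=-2\pi i\,\mathrm{Id}$ on $\M{1}{s+2}(\R\times\Z_q)$ (the paper leaves this as "straightforward"), followed by the observation that, being a multiple of the identity, the operator extends by density of $\M{1}{s+2}$ in $\M{1}{s}$ to a bounded operator on all of $\M{1}{s}(\R\times\Z_q)$. Your write-up just makes explicit the pointwise product-rule step and the boundedness bookkeeping via Proposition \ref{pr:der-and-mul-on-M}, which matches the paper's intent.
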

\begin{proof}
For any $s\ge 0$, from the definition of $\nabla_{j}$, $j=1,2$  it is clear that $F_{12}$ is a linear and bounded operator from $\M{1}{s+2}(\R\times\Z_{q})$ into $\M{1}{s}(\R\times\Z_{q})$. It is straightforward to show that 
$ F_{12} f = -2 \pi \, i\, f.$
Since $\M{1}{s+2}(\R\times\Z_{q})$ is dense in $\M{1}{s}(\R\times\Z_{q})$ one can extend this operator to all of $\M{1}{s}(\R\times\Z_{q})$ and the result follows.
\end{proof}

\section{Traces and the Connes-Chern number}
In this section we introduce the Connes-Chern (classes and) numbers. In order to do this, we first need to talk about traces. 
A trace on $\ell^{1}_{s}(\Lambda\times\Gamma)$ is a linear and bounded functional $\tr: \ell^{1}(\Lambda\times\Gamma)\to \C$ such that
\[ \tr(a_{1} \, \twist \, a_{2}) = \tr(a_{2} \, \twist \, a_{1}) \qquad \text{for all} \ \ a_{1},a_{2}\in \ell^{1}(\Lambda\times\Gamma), \]
\[ \tr(a^{*}\, \twist\, a) \ge 0, \ \ \text{and} \ \ \tr(a^{*}) = \overline{\tr(a)}\]
If $\tr(a^{*} \,\twist\, a) = 0$ if and only if $a=0$, then the trace $\tr$ is called faithful. The functional
\[ \tr : \ell^{1}_{s}(\Lambda\times\Gamma) \to \C, \ \tr(a) = a(0)\]
is a faithful trace on $\ell^{1}(\Lambda\times\Gamma)$.
Naturally, this trace extends to $\mathcal{A}_{s}$ by the isomorphism $I$. Similarly, 
\[ \tr^{\circ} : \ell^{1}_{s}(\Gamma^{\perp}\times \Lambda^{\perp}) \to \C, \ \tr^{\circ} (b) = q\,\vert\alpha \beta\vert \, b(0) \]
defines a faithful trace on $\ell^{1}_{s}(\Gamma^{\perp}\times\Lambda^{\perp})$. Note the normalization of $\tr^{\circ}$.

\begin{lemma} The following equalities hold:
\begin{enumerate}
\item[(i)] $\tr( \lhs{f}{g}) = \tr^{\circ}( \rhs{g}{f} )$, \qquad for all $f,g\in \M{1}{s}(\R\times\Z_{q})$,
\item[(ii)] $\tr(\partial_{j}a) = 0$, $j=1,2$, \qquad for all $a\in \ell^{1}_{s}(\Lambda\times\Gamma)$
\item[(iii)] $\tr^{\circ}(\partial_{j}^{\circ}b) = 0$, $j=1,2$, \qquad for all $b\in \ell^{1}_{s}(\Gamma^{\perp}\times\Lambda^{\perp})$,.
\end{enumerate}
\end{lemma}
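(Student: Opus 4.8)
The plan is to verify each of the three identities directly from the definitions of the traces, the inner-products, and the derivations, exploiting that all three quantities are ``evaluation at $0$'' functionals up to a normalization.

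First I would prove (ii) and (iii), which are the easy cases. For $a\in\ell^1_s(\Lambda\times\Gamma)$ we have by definition $(\partial_j a)(\nu)=2\pi i\,\xi_j(\nu)\,a(\nu)$, where $\xi_1(\lambda,l,\gamma,c)=\lambda$ and $\xi_2(\lambda,l,\gamma,c)=\gamma$. Evaluating at $\nu=0$ gives $(\partial_j a)(0)=2\pi i\cdot 0\cdot a(0)=0$, so $\tr(\partial_j a)=(\partial_j a)(0)=0$. The identity (iii) follows the same way: $(\partial_j^\circ b)(0)=2\pi i\cdot 0\cdot b(0)=0$, hence $\tr^\circ(\partial_j^\circ b)=q|\alpha\beta|\,(\partial_j^\circ b)(0)=0$. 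One sentence suffices for each; the only thing to note is that the lattice point $0$ indeed corresponds to the zero element of $\R\times\Z_q\times\widehat\R\times\widehat\Z_q$, so its first and third coordinates vanish.

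For (i) the computation is: by Lemma~\ref{le:1702a}, $\lhs{f}{g}=\mathcal{V}_g f\big\vert_{\Lambda\times\Gamma}$, so $\tr(\lhs{f}{g})=\lhs{f}{g}(0)=\mathcal{V}_g f(0)=\langle f,E_{0,0}T_{0,0}g\rangle=\langle f,g\rangle_{L^2(\R\times\Z_q)}$. On the other side, $\rhs{g}{f}=(q|\alpha\beta|)^{-1}(\mathcal{V}_f g)^\dagger\big\vert_{\Gamma^\perp\times\Lambda^\perp}$, and using the remark in the text that $(\mathcal{V}_f g)^\dagger(\nu^\circ)=\langle f,\pi^\circ(\nu^\circ)g\rangle$ together with $\pi^\circ(0)=\mathrm{Id}$, we get $\rhs{g}{f}(0)=(q|\alpha\beta|)^{-1}\langle f,g\rangle_{L^2(\R\times\Z_q)}$. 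Therefore $\tr^\circ(\rhs{g}{f})=q|\alpha\beta|\cdot\rhs{g}{f}(0)=\langle f,g\rangle_{L^2(\R\times\Z_q)}=\tr(\lhs{f}{g})$, which is the claim.

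The only point requiring care — and the one I would write out most explicitly — is the bookkeeping of conjugations and argument orders in (i): the left inner-product $\lhs{\cdot}{\cdot}$ is conjugate-linear in its second slot while $\rhs{\cdot}{\cdot}$ is linear in its second slot, and the dagger involution together with the swap of $f$ and $g$ between the two sides must combine so that both evaluations land on the \emph{same} scalar $\langle f,g\rangle$ rather than $\langle g,f\rangle$ or its conjugate. Matching the normalization constant $q|\alpha\beta|$ from $\tr^\circ$ against the $(q|\alpha\beta|)^{-1}$ in the definition of $\rhs{\cdot}{\cdot}$ is automatic once the conventions are pinned down. No deep input (not even the fundamental identity of Gabor analysis) is needed; everything is a direct evaluation at the lattice origin.
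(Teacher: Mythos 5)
Your proposal is correct and follows the same route as the paper: all three identities are verified by direct evaluation at the lattice origin, with (i) reducing to $\tr(\lhs{f}{g})=\mathcal{V}_{g}f(0)=\langle f,g\rangle$ and $\tr^{\circ}(\rhs{g}{f})=q\vert\alpha\beta\vert\,(q\vert\alpha\beta\vert)^{-1}\overline{\mathcal{V}_{f}g(0)}=\langle f,g\rangle$, exactly as in the paper's proof. Your extra care with the dagger involution and the normalization constant is sound and only makes explicit what the paper leaves as ``straightforward''.
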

\begin{proof}
It is straightforward to establish that
\[ \tr( \lhs{f}{g}  ) = \mathcal{V}_{g}f(0) = \langle f,g\rangle \]
and
\[ \tr^{\circ}(  \rhs{g}{f} )) = q \vert\alpha \beta\vert (q\vert\alpha \beta\vert)^{-1} \overline{\mathcal{V}_{f}g(0)} = \langle f, g\rangle\]
which is (i). The statements (ii) and (iii) are easily verified.
\end{proof}

For any projection $p\in \ell^{1}_{s}(\Lambda\times\Gamma)$, $s\ge 1$, 
its \emph{Connes-Chern number} $c_{1}(p)$ is given by
\[ c_{1}(p) = \frac{1}{2\pi i \, \vert \alpha \beta\vert} \, \tr \big( p [ (\partial_{1}p)(\partial_{2}p) - (\partial_{2}p)(\partial_{1}p)] \big). \]
By general facts \cite{co80} this is an integer number, being the index of a Fredholm operator, that depends only on the class of $p$.
If $p=\lhs{g}{h}$, $g,h\in\M{1}{s}(\R\times\Z_{q})$, $s\ge 1$, then
\[ 
c_{1}(p) =  \frac{2\pi }{ i \, \vert \alpha \beta \vert} \sum_{\nu,\nu'\in\Lambda\times\Gamma} (\lambda'\gamma-\lambda\gamma') \, \mathcal{V}_{h}{g}(\nu) \, \mathcal{V}_{h}{g}(\nu') \, \mathcal{V}_{h}{g}(-\nu-\nu') \overline{\varphi(\nu',\nu'+\nu)}\, \overline{\varphi(\nu,\nu)}, 
\]
where $\nu=(\lambda,\gamma) \in\Lambda\times\Gamma \subset \R \times \widehat{\R}$
and similarly for $\nu'=\lambda', \gamma'$.

We will next show that if $p=\lhs{g}{h}$ and $g$ and $h$ in $\M{1}{s}(\R\times\Z_{q})$, $s\ge 1$, are any pair of functions that generate dual Gabor frames for $L^{2}(\R\times\Z_{q})$ with respect to time-frequency shifts in 
$\Lambda\times\Gamma$,  then $c_{1}(p) = q$.  In order to prove this, we need the following lemma.

\begin{lemma} \label{le:0803a} Let $g,h\in \M{1}{s}(\R\times\Z_{q})$, $s\ge 1$ be a dual (not necessarily the canonical dual) pair of Gabor frame generators with respect to $\Lambda$ and $\Gamma$. Then
\[ 
\lhs{f_{1}}{\nabla_{j}g} \lhs{h}{f_{2}} + \lhs{f_{1}}{g} \lhs{\nabla_{j}h}{f_{2}} = 0 
\qquad\text{for all} \ \ f_{1},f_{2}\in \M{1}{s}(\R\times\Z_{q}).\]
\end{lemma}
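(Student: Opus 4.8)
The plan is to derive the identity from the dual-frame condition in its $\rhs{\cdot}{\cdot}$-form together with the Leibniz rule for $\nabla_j$ and the compatibility of $\nabla_j$ with the inner product $\lhs{\cdot}{\cdot}$. Since $g$ and $h$ generate dual Gabor frames, the Wexler--Raz relations give $\rhs{h}{g} = 1$ (equivalently $\rhs{g}{h} = 1$), and by the fundamental identity of Gabor analysis \eqref{eq:janssen} this is the same as saying $g \cdot \rhs{h}{f} = f$ and, dually, $\lhs{f_1}{g} \cdot h = f_1 \cdot \rhs{g}{h} = f_1$ for all $f_1$. The key observation is that the reconstruction formula $f_1 = \lhs{f_1}{g} \cdot h$ is an identity of operators $\M{1}{s+1} \to \M{1}{s}$ when restricted to the appropriate subspace, so we may apply $\nabla_j$ to both sides.

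Concretely, I would start from the reconstruction identity written as $f_1 = \lhs{f_1}{g}\cdot h$, valid for all $f_1 \in \M{1}{s+1}(\R\times\Z_q)$, and apply the covariant derivative $\nabla_j$. Using the Leibniz rule \eqref{eq:0103a} in the combined form displayed just before Section~\ref{se:dcc}, namely
\[ \nabla_j(\lhs{f_1}{g}\cdot h) = \lhs{\nabla_j f_1}{g}\cdot h + \lhs{f_1}{\nabla_j g}\cdot h + \lhs{f_1}{g}\cdot\nabla_j h, \]
and comparing with $\nabla_j f_1$ on the left-hand side, the first term on the right equals $\nabla_j f_1$ again (by reconstruction applied to $\nabla_j f_1$), so the remaining two terms must cancel:
\[ \lhs{f_1}{\nabla_j g}\cdot h + \lhs{f_1}{g}\cdot\nabla_j h = 0 \qquad \text{for all} \ f_1\in \M{1}{s+1}(\R\times\Z_q). \]
Now I would pair this identity on the right with an arbitrary $f_2$, i.e. apply the map $\phi \mapsto \lhs{\phi}{f_2}$ (or use $\lhs{x \cdot h}{f_2}$-type manipulations): since $\lhs{\cdot}{f_2}$ is $\ell^1_s$-linear in the sense that $\lhs{a\cdot k}{f_2} = a \twist \lhs{k}{f_2}$, applying it to the two-term identity above yields
\[ \lhs{f_1}{\nabla_j g}\twist\lhs{h}{f_2} + \lhs{f_1}{g}\twist\lhs{\nabla_j h}{f_2} = 0, \]
which is exactly the claimed relation. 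A density argument extends the statement from $f_1 \in \M{1}{s+1}$ to all $f_1 \in \M{1}{s}$, using boundedness of all operators involved (Proposition~\ref{pr:der-and-mul-on-M} and Lemma~\ref{le:1702a}).

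The main subtlety — and the step I would be most careful about — is the validity of the reconstruction identity $f_1 = \lhs{f_1}{g}\cdot h$ on the subspace where $\nabla_j$ acts, and the legitimacy of differentiating term by term. The formula \eqref{eq:dual} holds in $\M{1}{s}(\R\times\Z_q)$, but $\nabla_j$ only maps $\M{1}{s+1} \to \M{1}{s}$, so one should work with $f_1 \in \M{1}{s+1}$ and check that $\lhs{f_1}{g}\in \ell^1_{s+1}$ and $h \in \M{1}{s+1}$ make $\nabla_j(\lhs{f_1}{g}\cdot h)$ well-defined with the Leibniz expansion valid; this is guaranteed by the mapping properties already recorded (the Leibniz rule \eqref{eq:0103a} is stated for $f\in\M{1}{1}$, $a\in\ell^1_1$, so one applies it at the appropriate order of weight). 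Everything else is a routine bookkeeping of the $\twist$-bilinearity of $\lhs{\cdot}{\cdot}$ and $\rhs{\cdot}{\cdot}$ and their interplay via \eqref{eq:janssen}.
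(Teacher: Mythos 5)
Your argument is correct, but it takes a route that differs from the paper's. The paper differentiates the Wexler--Raz identity on the adjoint side: since $\rhs{g}{h}=1$ one has $\partial_{j}^{\circ}\rhs{g}{h}=0$, and then the module property \eqref{eq:inner-p-rule} together with the fundamental identity \eqref{eq:janssen} collapses the left-hand side in one line, $\lhs{f_{1}}{\nabla_{j}g}\lhs{h}{f_{2}}+\lhs{f_{1}}{g}\lhs{\nabla_{j}h}{f_{2}}=\lhs{f_{1}\cdot(\rhs{\nabla_{j}g}{h}+\rhs{g}{\nabla_{j}h})}{f_{2}}=\lhs{f_{1}\cdot\partial_{j}^{\circ}\rhs{g}{h}}{f_{2}}=0$. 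You instead differentiate the reconstruction formula $f_{1}=\lhs{f_{1}}{g}\cdot h$ with the combined Leibniz rule for $\lhs{\cdot}{\cdot}$, cancel $\lhs{\nabla_{j}f_{1}}{g}\cdot h=\nabla_{j}f_{1}$, and then pair with $f_{2}$ via $\lhs{a\cdot k}{f_{2}}=a\twist\lhs{k}{f_{2}}$. This stays entirely on the $\Lambda\times\Gamma$ side and makes the cancellation mechanism explicit, whereas the paper exploits the constancy of $\rhs{g}{h}$ and the compatibility of $\partial_{j}^{\circ}$ with $\rhs{\cdot}{\cdot}$; the two arguments are dual to each other through \eqref{eq:janssen}, since the reconstruction formula is precisely the statement $f_{1}\cdot\rhs{g}{h}=f_{1}$, so no new ingredient is involved and neither proof is stronger than the other. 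Two small remarks: your density step is superfluous, because for $s\ge 1$ one already has $f_{1}\in\M{1}{1}(\R\times\Z_{q})$ so the Leibniz rule applies as stated, and the representation $\lhs{\nabla_{j}f_{1}}{g}\cdot h=\nabla_{j}f_{1}$ is available since \eqref{eq:dual} holds for all functions in $L^{2}(\R\times\Z_{q})$, in particular for $\nabla_{j}f_{1}\in\M{1}{s-1}(\R\times\Z_{q})$; and when pairing with $f_{2}$ note that $a=\lhs{f_{1}}{\nabla_{j}g}$ lies in $\ell^{1}_{s-1}(\Lambda\times\Gamma)$ rather than $\ell^{1}_{s}(\Lambda\times\Gamma)$, which is still covered by \eqref{eq:inner-p-rule} since $s-1\ge 0$.
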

\begin{proof}
By the Wexler-Raz relations for dual generators, we know that $\rhs{g}{h}=1$. Therefore $\partial_{j}^{\circ}\rhs{g}{h}=0$. It follows that for all $f_{1},f_{2}\in\M{1}{}(\R\times\Z_{q})$
\begin{multline*} 
\lhs{f_{1}}{\nabla_{j}g} \lhs{h}{f_{2}} + \lhs{f_{1}}{g} \lhs{\nabla_{j}h}{f_{2}} \\ 
= \lhs{f_{1} \cdot (\rhs{\nabla_{j} g}{h} + \rhs{g}{\nabla_{j} h})}{f_{2}} = \lhs{f_{1} \cdot ( \partial_{j}^{\circ}\rhs{g}{h})}{f_{2}} = 0, 
\end{multline*}
as stated. \end{proof}
\begin{proposition} \label{pr:0103a} If $g,h\in \M{1}{s}(\R\times\Z_{q})$, $s\ge 1$ generate dual frames $\{ \pi(\nu) g\}_{\nu\in\Lambda\times\Gamma}$  and $\{ \pi(\nu) h\}_{\nu\in\Lambda\times\Gamma}$ for $L^{2}(\R\times\Z_{q})$, then, for $p=\lhs{g}{h}$, 
\[ c_{1}(p) = \frac{1}{2\pi i \, \vert \alpha \beta\vert} \, \tr ( p \, [(\partial_{1} p )(\partial_{2}p) - (\partial_{2}p)(\partial_{1}p) ]) 
= q. \]
\end{proposition}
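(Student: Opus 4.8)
The plan is to reduce the computation of $c_1(p)$ to the trace pairing with the $\rhs{\cdot}{\cdot}$ inner product, exploit the Leibniz rule together with Lemma~\ref{le:0803a}, and then evaluate an explicit trace. First I would expand $p = \lhs{g}{h}$ and use the idempotency $p^2 = p$ together with $\partial_j(p^2) = (\partial_j p) p + p (\partial_j p)$, which gives the identity $p(\partial_j p) p = 0$; consequently $p[(\partial_1 p)(\partial_2 p) - (\partial_2 p)(\partial_1 p)]$ can be rewritten, under the trace and using $\tr(ab)=\tr(ba)$, in the more symmetric form $\tr(p\,[\partial_1 p, \partial_2 p]) = \tr((\partial_1 p)p(\partial_2 p)) - \tr((\partial_2 p)p(\partial_1 p))$, or equivalently in terms of the off-diagonal pieces $(1-p)(\partial_j p) p$ and $p(\partial_j p)(1-p)$.

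Next I would bring in the concrete structure $p=\lhs{g}{h}$, so that $\partial_j p = \partial_j \lhs{g}{h} = \lhs{\nabla_j g}{h} + \lhs{g}{\nabla_j h}$ by the compatibility of the derivations with the inner product \eqref{eq:0103b}. The key algebraic input is Lemma~\ref{le:0803a}, which says precisely that the ``mixed'' products $\lhs{f_1}{\nabla_j g}\lhs{h}{f_2} + \lhs{f_1}{g}\lhs{\nabla_j h}{f_2}$ vanish; this is the manifestation of $\partial_j^\circ \rhs{g}{h} = 0$ coming from the Wexler--Raz relations. Combining this with the $\twist$-multiplicativity of $\lhs{\cdot}{\cdot}$ (the rules in \eqref{eq:inner-p-rule}) and the fundamental identity \eqref{eq:janssen}, I expect the products $(\partial_1 p)(\partial_2 p)$ appearing inside the trace to collapse, after several applications, to expressions of the form $\lhs{\nabla_1 g}{\nabla_2 h} - \lhs{\nabla_2 g}{\nabla_1 h}$ sandwiched appropriately, or directly to a multiple of $p$ times a scalar built from the curvature $F_{12} = -2\pi i\,\mathrm{Id}$. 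The point is that the commutator $[\nabla_1,\nabla_2]$ acting on the module is constant, so the noncommutative integral of the curvature is forced to be a multiple of $\tr(p)$.

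Then I would compute $\tr(p) = \tr(\lhs{g}{h}) = \langle h, g\rangle$ (by the earlier lemma $\tr(\lhs{f}{g}) = \mathcal V_g f(0) = \langle f,g\rangle$, here with the roles giving $\langle h,g\rangle$ — note that for a dual pair the Wexler--Raz relation $\rhs{g}{h}=1$ combined with $\tr^\circ(\rhs{g}{h}) = q|\alpha\beta|\cdot\rhs{g}{h}(0) = q|\alpha\beta|$ and $\tr(\lhs{g}{h}) = \tr^\circ(\rhs{h}{g})$ forces $\tr(p) = q|\alpha\beta|$). Feeding this, together with the curvature value $-2\pi i$, through the normalization $1/(2\pi i |\alpha\beta|)$ in the definition of $c_1$, yields exactly $c_1(p) = q$. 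An alternative route, which may be cleaner to write, is to use Lemma~\ref{le:0803a} to replace $p$ inside $\tr(p[\partial_1 p,\partial_2 p])$ by $\mathrm{Id}$ (legitimate because the ``$(1-p)$'' contributions vanish by the lemma) and then directly compute $\tr([\partial_1 \lhs{g}{h}, \text{(dual side)}])$ using that $\nabla_1\nabla_2 - \nabla_2\nabla_1 = -2\pi i\,\mathrm{Id}$ on the bundle.

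The main obstacle will be the bookkeeping in the second step: carefully tracking which of the four terms in $(\partial_1 p)(\partial_2 p) - (\partial_2 p)(\partial_1 p)$ survive after using $p(\partial_j p)p = 0$, then correctly applying Lemma~\ref{le:0803a} (which pairs a $\nabla_j g$ term with a $\nabla_j h$ term only for the \emph{same} $j$) to cancel the unwanted cross terms, and finally recognizing the leftover as the curvature operator times $p$. One has to be attentive that $\nabla_1$ and $\nabla_2$ are different, so Lemma~\ref{le:0803a} does \emph{not} directly annihilate the genuinely ``$F_{12}$-type'' combination $\lhs{f_1}{\nabla_1 g}\lhs{h}{\cdot}$ paired with a $\nabla_2$ term — distinguishing the cancelling terms from the surviving ones is where the argument has to be done with care. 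The analogous calculation for $q=1$ is carried out in \cite{dalalu15}, and I would model the computation on that, inserting the extra finite factor of $q$ that appears through $\tr(p) = q|\alpha\beta|$ and the counting on $\Z_q$.
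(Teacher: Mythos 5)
Your main route --- expanding $\partial_j p$ with the compatibility rule \eqref{eq:0103b}, cancelling the mixed terms via Lemma~\ref{le:0803a} and the Wexler--Raz consequence $\lhs{f_1}{h}\lhs{g}{f_2}=\lhs{f_1}{f_2}$, reducing what survives to the constant curvature $F_{12}=-2\pi i\,\mathrm{Id}$, and normalizing with $\langle g,h\rangle=q\vert\alpha\beta\vert$ --- is essentially the paper's own proof, so the approach is correct. Only two side remarks are inaccurate, and neither is load-bearing: cyclicity actually gives $\tr\big(p[\partial_1p,\partial_2p]\big)=\tr\big((\partial_2p)\,p\,(\partial_1p)\big)-\tr\big((\partial_1p)\,p\,(\partial_2p)\big)$, the opposite sign of what you wrote, and the proposed shortcut of replacing $p$ by $\mathrm{Id}$ inside the trace cannot work, since $\tr\big([\partial_1p,\partial_2p]\big)=0$ by cyclicity.
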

\begin{proof} By the Wexler-Raz relations $\rhs{h}{g} = 1$. Hence, for all $f_{1},f_{2}\in \M{1}{}(\R\times\Z_{q})$,  
\begin{align} 
\label{eq:dual-generators-IP-calculation} 
\lhs{f_{1}}{h} \lhs{g}{f_{2}} = \lhs{\lhs{f_{1}}{h} g }{f_{2}} = \lhs{ f_{1} \ \rhs{h}{g}}{f_{2}} = \lhs{f_{1}}{f_{2}}.
\end{align}
In addition, using the linearity of the trace, the cyclic property $\tr(a \cdot b) = \tr(b \cdot a)$ 
for all $a,b\in \mathcal{A}_{s}$ and the result \eqref{eq:0103a} one has the equalities:
\begin{align} 
& \tr ( p \, [(\partial_{1} p )(\partial_{2}p) - (\partial_{2}p)(\partial_{1}p) ]) \nonumber \\
& = \tr \big( \lhs{g}{h} \, [(\partial_{1} \lhs{g}{h} )(\partial_{2}\lhs{g}{h}) - 
(\partial_{2}\lhs{g}{h})(\partial_{1}\lhs{g}{h}) ]\big) \nonumber \\
& \phantom{-} =  \tr \big( \lhs{g}{h} \, [ (\lhs{\nabla_{1}g}{h} +\lhs{g}{\nabla_{1}h}) (\lhs{\nabla_{2}g}{h}+ \lhs{g}{\nabla_{2}h})] \big) \nonumber \\
& \qquad - \tr \big( \lhs{g}{h} \, [(\lhs{\nabla_{2}g}{h}+\lhs{g}{\nabla_{2}h} )(\lhs{\nabla_{1}g}{h}+\lhs{g}{\nabla_{1}h}) ]\big) \nonumber \\
& \phantom{-} = \tr \big( \lhs{g}{h} [\lhs{\nabla_{1}g}{h}\lhs{\nabla_{2}g}{h} + \lhs{\nabla_{1}g}{h}\lhs{g}{\nabla_{2}h} ] \big) \nonumber \\ 
& \qquad + \tr \big( \lhs{g}{h} [\lhs{g}{\nabla_{1}h}\lhs{\nabla_{2}g}{h} + \lhs{g}{\nabla_{1}h}\lhs{g}{\nabla_{2}h} ] \big) \nonumber \\ 
& \qquad - \tr \big( \lhs{g}{h} [\lhs{\nabla_{2}g}{h}\lhs{\nabla_{1}g}{h} + \lhs{\nabla_{2}g}{h}\lhs{g}{\nabla_{1}h} 
\nonumber \\ 
&  \qquad + \tr \big( \lhs{g}{h} [\lhs{g}{\nabla_{2}h}\lhs{\nabla_{1}g}{h} + \lhs{g}{\nabla_{2}h}\lhs{g}{\nabla_{1}h} ] \big) \nonumber \\
& \phantom{-} = \tr\big( \lhs{\nabla_{1}g}{h}\lhs{\nabla_{2}g}{h} + \lhs{g}{h} \lhs{\nabla_{1}g}{\nabla_{2}h} + \lhs{\nabla_{2}g}{\nabla_{1}h} + \lhs{g}{\nabla_{1}h} \lhs{g}{\nabla_{2}h} \big) \nonumber \\
& \qquad - \tr \big( \lhs{\nabla_{2}g}{h}\lhs{\nabla_{1}g}{h} + \lhs{g}{h} \lhs{\nabla_{2}g}{\nabla_{1}h} + \lhs{\nabla_{1}g}{\nabla_{2}h} + \lhs{g}{\nabla_{2}h} \lhs{g}{\nabla_{1}h} \big) \nonumber \\
& \phantom{-} = \tr\big( \lhs{g}{h} \lhs{\nabla_{1}g}{\nabla_{2}h} + \lhs{\nabla_{2}g}{\nabla_{1}h} \big) \nonumber \\
& \qquad - \tr\big(  \lhs{g}{h} \lhs{\nabla_{2}g}{\nabla_{1}h} + \lhs{\nabla_{1}g}{\nabla_{2}h}  \big) . \label{eq:0103e}
\end{align}
Using Lemma \ref{le:0803a} we continue,
\begin{align*} 
\tr ( p \, [(\partial_{1} p )(\partial_{2}p) & - (\partial_{2}p)(\partial_{1}p) ])  \\
& = - \tr\big( \lhs{g}{\nabla_{1}h} \lhs{g}{\nabla_{2}h} \big)  + \tr\big( \lhs{\nabla_{2}g}{\nabla_{1}h} \big) \nonumber \\
& \qquad + \tr\big(  \lhs{g}{\nabla_{2}h} \lhs{g}{\nabla_{1}h}\big) - \tr \big(\lhs{\nabla_{1}g}{\nabla_{2}h}  \big) \\
& = \tr\big( \lhs{\nabla_{2}g}{\nabla_{1}h} \big) \nonumber - \tr \big(\lhs{\nabla_{1}g}{\nabla_{2}h}  \big) \\
& = - \tr\big( \lhs{(\nabla_{1}\nabla_{2} - \nabla_{2}\nabla_{1}) g}{h} \big) \\
& = - \tr\big( \lhs{ F_{12} g}{h}\big) = - \langle F_{12} g,h\rangle = 2\pi i \, \langle g,h\rangle = 2\pi i q \vert\alpha \beta\vert.
\end{align*}
In the last steps we used that $\tr\big( \lhs{f_{1}}{\nabla_{j}f_{2}} \big) = -\tr\big( \lhs{\nabla_{j}f_{1}}{f_{2}} \big)$ for all $f_{1},f_{2}\in\M{1}{s}(\R\times\Z_{q})$ and $j=1,2$. The  very last equality is due to the Wexler-Raz relations.
\end{proof}

\section{An energy functional for  projections}

Let now $\projection$ be the set of all projections $p\in \ell^{1}_{s}(\Lambda\times\Gamma)$. For $s\ge 1$ we define the \emph{energy-functional}
\[ \energy : \projection \to \R^{+}_{0} , \qquad \energy(p) = \frac{1}{4\pi \, \vert \alpha \beta \vert} \, \tr\big( (\partial_{1}p)^{2} + (\partial_{2}p)^{2}\big) .\]
The energy-functional takes non-negative values: the self-adjointness of all sequences $p\in \projection$ together with the fact that $\tr( p^{*}p)\ge 0$, for all $p\in \ell^{1}_{s}(\Lambda\times\Gamma) $ implies that
\[ 0 \le \tr\big( (\partial_{1}p)^{*} (\partial_{1}p) + (\partial_{2}p)^{*}(\partial_{2}p)\big) = \tr\big( (\partial_{1}p)^{2} + (\partial_{2}p)^{2}\big).\]

The following shows that there is an interesting relationship between the energy-functional $E(p)$ and the Connes-Chern number $c_{1}(p)$ of \emph{projections} in $\ell^{1}(\Lambda\times\Gamma)$.

\begin{lemma} \label{le:0804a} The energy-functional is bounded from below by the Connes-Chern-number:
\[
\energy(p) \ge \vert c_{1}(p) \vert \qquad \mbox{for all} \ \ p\in\projection, \, s\ge 1. 
\]
If $p$ satisfies either of the two self-duality or anti-self duality equations, 
\begin{align*} & (\partial_{1} p + i \partial_{2} p) \, p = 0,  
\qquad \mbox{or} \quad p \, (\partial_{1} p - i \partial_{2} p) = 0,\\
& (\partial_{1} p - i \partial_{2} p) \, p = 0 ,
\qquad \mbox{or} \quad p \, (\partial_{1} p + i \partial_{2} p)  = 0, 
\end{align*}
then $\energy(p) = \vert c_{1}(p) \vert $.
\end{lemma}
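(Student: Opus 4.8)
The plan is to run the standard Bogomolny (``completing the square'') argument inside the involutive Banach algebra $\ell^{1}_{s}(\Lambda\times\Gamma)$. Write $X_{j}:=\partial_{j}p$, $j=1,2$. Since $p$ is a projection we have $p^{2}=p$ and $p^{*}=p$; applying $\partial_{j}$ to $p^{2}=p$ and using that $\partial_{j}$ is a derivation gives $X_{j}p+pX_{j}=X_{j}$. Multiplying this on both sides by $p$ yields $pX_{j}p=0$, and then $(1-p)X_{j}(1-p)=X_{j}-pX_{j}-X_{j}p+pX_{j}p=X_{j}-X_{j}=0$. A one-line check on the defining formulas shows $\partial_{j}(a^{*})=(\partial_{j}a)^{*}$, so $X_{j}^{*}=X_{j}$. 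Consequently each $X_{j}$ splits into off-diagonal corners, $X_{j}=A_{j}+B_{j}$ with $A_{j}:=pX_{j}(1-p)$ and $B_{j}:=(1-p)X_{j}p=A_{j}^{*}$.

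First I would rewrite both functionals through the $A_{j}$. As $p(1-p)=(1-p)p=0$ the ``diagonal'' products vanish, $A_{j}^{2}=B_{j}^{2}=0$, so $X_{j}^{2}=A_{j}B_{j}+B_{j}A_{j}$ and, by cyclicity of $\tr$ together with $B_{j}=A_{j}^{*}$, $\tr(X_{j}^{2})=2\,\tr(A_{j}A_{j}^{*})$. Hence
\[ \energy(p)=\frac{1}{2\pi|\alpha\beta|}\big(\tr(A_{1}A_{1}^{*})+\tr(A_{2}A_{2}^{*})\big). \]
Similarly $X_{1}X_{2}=A_{1}B_{2}+B_{1}A_{2}$, and since $pA_{j}=A_{j}$ and $pB_{j}=0$ one gets $pX_{1}X_{2}=A_{1}B_{2}$ and $pX_{2}X_{1}=A_{2}B_{1}$; using $\tr(A_{2}A_{1}^{*})=\overline{\tr(A_{1}A_{2}^{*})}$ this gives $\tr\big(p[X_{1}X_{2}-X_{2}X_{1}]\big)=\tr(A_{1}A_{2}^{*})-\tr(A_{2}A_{1}^{*})=2i\,\mathrm{Im}\,\tr(A_{1}A_{2}^{*})$, so
\[ c_{1}(p)=\frac{1}{\pi|\alpha\beta|}\,\mathrm{Im}\,\tr(A_{1}A_{2}^{*}). \]

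The inequality is then immediate from positivity of the trace. Expanding
\[ 0\le\tr\big((A_{1}\pm iA_{2})(A_{1}\pm iA_{2})^{*}\big)=\tr(A_{1}A_{1}^{*})+\tr(A_{2}A_{2}^{*})\pm 2\,\mathrm{Im}\,\tr(A_{1}A_{2}^{*}) \]
gives $\tr(A_{1}A_{1}^{*})+\tr(A_{2}A_{2}^{*})\ge 2\,|\mathrm{Im}\,\tr(A_{1}A_{2}^{*})|$, which is exactly $\energy(p)\ge|c_{1}(p)|$. For the equality statement I would use the identities $p(X_{1}-iX_{2})=A_{1}-iA_{2}$, $(X_{1}+iX_{2})p=(A_{1}-iA_{2})^{*}$, $p(X_{1}+iX_{2})=A_{1}+iA_{2}$, $(X_{1}-iX_{2})p=(A_{1}+iA_{2})^{*}$ (all consequences of $pA_{j}=A_{j}$, $pB_{j}=0$, $A_{j}p=0$, $B_{j}p=B_{j}$): the first pair of displayed equations is equivalent to $A_{1}-iA_{2}=0$ and the second pair to $A_{1}+iA_{2}=0$. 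By faithfulness of $\tr$, such vanishing forces the corresponding nonnegative trace above to be zero, hence $\tr(A_{1}A_{1}^{*})+\tr(A_{2}A_{2}^{*})=\mp 2\,\mathrm{Im}\,\tr(A_{1}A_{2}^{*})$; this gives $\energy(p)=c_{1}(p)$ (resp.\ $\energy(p)=-c_{1}(p)$), and since $\energy(p)\ge 0$ in both cases $\energy(p)=|c_{1}(p)|$.

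Everything above is elementary algebra in $\ell^{1}_{s}(\Lambda\times\Gamma)$ — the corner calculus uses only that $p$ and $1-p$ are complementary self-adjoint idempotents — so I do not expect a real obstacle. The points that need care are the derivation identity $\partial_{j}(a^{*})=(\partial_{j}a)^{*}$ together with the corner decomposition of $X_{j}$ (which genuinely uses both $p^{2}=p$ and $p^{*}=p$), and bookkeeping the sign matching between each (anti-)self-duality equation and the sign of $c_{1}(p)$ in the last step.
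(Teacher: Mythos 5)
Your proof is correct and is essentially the paper's own argument written in corner notation: since $pX_{j}p=0$ your corners are simply $A_{j}=p\,\partial_{j}p$ and $A_{j}^{*}=(\partial_{j}p)\,p$, so your identity $\tr(X_{j}^{2})=2\,\tr(A_{j}A_{j}^{*})$ is the paper's $\tr\big((\partial_{j}p)^{2}\big)=2\,\tr\big(p\,(\partial_{j}p)^{2}\big)$, and your positivity of $\tr\big((A_{1}\pm iA_{2})(A_{1}\pm iA_{2})^{*}\big)$ is the paper's positivity of $\tr\big(([\partial_{1}p\pm i\partial_{2}p]\,p)^{*}[\partial_{1}p\pm i\partial_{2}p]\,p\big)$ up to swapping the two sign choices. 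The only differences are harmless bookkeeping (which pair of duality equations selects which sign of $c_{1}(p)$, and the fact that faithfulness of $\tr$ is not actually needed for the equality direction), and the conclusion $\energy(p)=\vert c_{1}(p)\vert$ is unaffected.
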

\begin{proof}
Since $p^{2}=p$ one has $\partial_{j} p = \partial_{j} (p^{2}) = (\partial_{j} p)p + p (\partial_{j} p)$ for $j=1,2.$
This implies that
\[ (\partial_{1} p)^{2} + (\partial_{2}p)^{2}  =  (\partial_{1} p)^{2} \, p + (\partial_{1} p)\, p \, (\partial_{1}p ) + (\partial_{2}p)^{2} \, p+ (\partial_{2}p) \, p \, (\partial_{2} p) . \]
Applying the trace and using its cyclic property, we find
\begin{align*} \tr( (\partial_{1}p)^{2} + (\partial_{2}p)^{2}) & = \tr\Big( (\partial_{1} p)^{2} \, p \Big) + \tr \Big( (\partial_{1} p)\, p \, (\partial_{1} p) \Big) + \tr\Big( (\partial_{2}p)^{2} \, p \Big) + \tr \Big( (\partial_{2}p) \, p \, (\partial_{2} p) \Big) \\
& = 2  \, \tr \Big( p \, [ (\partial_{1}p)^{2} + (\partial_{2}p)^{2} ]\Big).
\end{align*}
This shows that $2 \pi \, \vert \alpha \beta\vert \, \energy(p) = \tr \Big( p \, [ (\partial_{1}p)^{2} + (\partial_{2}p)^{2} ]\Big)$.
Then, the positivity of the trace gives, 
\begin{align}
0 & \le \tr\Big( \big([\partial_{1} p + i \partial_{2} p] \, p\big)^{*} \big([\partial_{1} p + i \partial_{2} p] \, p\big)\Big) \nonumber \\
& = \tr\Big( p^{*} (\partial_{1} p)^{*} (\partial_{1} p) \, p + i \, p^{*} (\partial_{1}p)^{*} (\partial_{2} p) \, p -i \, p^{*} (\partial_{2} p)^{*} (\partial_{1} p) \, p + p^{*} (\partial_{2}p)^{*} (\partial_{2}p) \, p \Big) \nonumber \\
& = \tr\Big( p \, \big[ (\partial_{1}p)^{2} +  (\partial_{2}p)^{2}) \big] \Big) + i \, \tr \Big( p \, \big[ (\partial_{1}p) \, (\partial_{2} p) - (\partial_{2} p) \, (\partial_{1} p) \big]\Big). \label{eq:2802a}
\end{align}
Similarly, one establishes that
\begin{align}  0 & \le \tr\Big( \big([\partial_{1} p - i \partial_{2} p] \, p\big)^{*} \big([\partial_{1} p - i \partial_{2} p] \, p\big)\Big) \nonumber \\
& = \tr\Big( p \, \big[ (\partial_{1}p)^{2} +  (\partial_{2}p)^{2}) \big] \Big) - i \, \tr \Big( p \, \big[ (\partial_{1}p) \, (\partial_{2} p) - (\partial_{2} p) \, (\partial_{1} p) \big]\Big). \label{eq:2802b} \end{align}
Combining \eqref{eq:2802a} and \eqref{eq:2802b} yields the inequality
\[ 2\pi \, \vert \alpha \beta \vert \,\energy(p) = \tr \big( p \, [ (\partial_{1}p)^{2} + (\partial_{2}p)^{2} ]\big) \ge \vert \tr \big( p \, \big[ (\partial_{1}p) \, (\partial_{2} p) - (\partial_{2} p) \, (\partial_{1} p) \big]\big)\vert = 2\pi \, \vert \alpha \beta \, c_{1}(p) \vert. \]
Since $\tr(a^{*} a)=0$ if and only if $a=0$ it is clear that equality holds if either of the two 
equations above are satisfied. 
\end{proof}

\subsection{An energy functional for Gabor frame generators}

Let $\mathcal{G}(\Lambda\times\Gamma)$ denote the set of all functions $g\in \M{1}{s}(\R\times\Z_{q})$, $s\ge 1$,
that generate a Gabor frame $\{\pi(\nu)g\}_{\nu\in\Lambda\times\Gamma}$ for $L^{2}(\R\times\Z_{q})$. If we apply the energy functional $E$ from the previous section to the projection $p=\lhs{g}{S^{-1}_{g}g}$, $g\in \mathcal{G}(\Lambda\times\Gamma)$, then one finds that
\[ E(p) = \frac{\pi}{\vert \alpha \beta \vert} \sum_{\nu\in\Lambda\times\Gamma} 
(\lambda^{2}+\gamma^{2}) \, \vert \mathcal{V}_{g} (S_{g}^{-1}g)(\nu) \vert^{2}.\]

One then has an energy functional for Gabor frame generators $g\in \M{1}{s}(\R\times\Z_{q})$, $s\ge 1$, 
\begin{equation} \label{eq:gabor-functional} E : \mathcal{G}(\Lambda\times\Gamma) \to \R_{0}^{+} , 
\qquad E(g) = \frac{\pi}{\vert \alpha \beta \vert} \sum_{\nu\in\Lambda\times\Gamma} (\lambda^{2}+\gamma^{2}) \, \big\vert \big\langle g, \pi(\nu)  S_{g}^{-1}g\big\rangle \big\vert^{2}\end{equation}
It follows from Lemma \ref{le:0804a} that this is bounded from below by $q$. Moreover, we know that the minimum value $E(g) = q$ is obtained for those functions $g$, where $p=\lhs{g}{S^{-1}_{g}g}$ satisfies either of the two equations in Lemma \ref{le:0804a}. As it turns out, 
the duality principle for Gabor frames allows us to find minimisers of this functional.

As we did earlier when we described the duality principle for Gabor frames, we let $W$ be the closure of $\mbox{span}\{\pi^{\circ}(\nu^{\circ})g\}$ in $\M{1}{s}(\R\times\Z_{q})$.

\begin{theorem} \label{th:soliton-solution} 
If $g\in \mathcal{G}(\Lambda\times\Gamma)$ satisfies either of the following conditions,
\begin{enumerate} 
\item[(i)] $(\nabla_{1}+i\nabla_{2})g\in W$,
\item[(ii)] $(\nabla_{1}-i\nabla_{2})g\in W$,
\end{enumerate}
then $g$ minimizes the energy functional \eqref{eq:gabor-functional}, that is,
\[ E(g) = \frac{\pi}{\vert \alpha \beta \vert} \sum_{\nu\in\Lambda\times\Gamma} (\lambda^{2}+\gamma^{2}) \, \big\vert \big\langle g, \pi(\nu)  S_{g}^{-1}g\big\rangle \big\vert^{2} = q.\]

\end{theorem}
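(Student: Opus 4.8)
The plan is to reduce the statement to Lemma~\ref{le:0804a} together with Proposition~\ref{pr:0103a}. Set $h=S_{g}^{-1}g$ and $p=\lhs{g}{h}$. As shown earlier, $p$ is a projection, so $p\in\projection$; by Proposition~\ref{pr:0103a} its Connes--Chern number is $c_{1}(p)=q$, because $g$ and $h=S_{g}^{-1}g$ form the canonical dual pair; and $E(g)=E(p)$ by the very definition of the functional $E$ on $\mathcal{G}(\Lambda\times\Gamma)$. Hence it suffices to show that hypothesis (i) (resp.\ (ii)) forces $p$ to satisfy one of the four (anti-)self-duality equations of Lemma~\ref{le:0804a}; then $E(g)=E(p)=\lvert c_{1}(p)\rvert=q$ follows at once.

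First I would rewrite the relevant combinations of derivations in terms of the covariant derivatives on the module. The compatibility \eqref{eq:0103b} of $\partial_{j}$ with $\lhs{\cdot}{\cdot}$ gives $\partial_{j}p=\lhs{\nabla_{j}g}{h}+\lhs{g}{\nabla_{j}h}$, and since $\lhs{\cdot}{\cdot}$ is linear in the first entry and conjugate-linear in the second, collecting terms so that $\nabla_{1}+i\nabla_{2}$ acts on $g$ forces $\nabla_{1}-i\nabla_{2}$ to act on $h$:
\[ \partial_{1}p+i\partial_{2}p=\lhs{(\nabla_{1}+i\nabla_{2})g}{h}+\lhs{g}{(\nabla_{1}-i\nabla_{2})h}, \]
with the analogous identity, all signs of $i$ reversed, for $\partial_{1}p-i\partial_{2}p$.

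The core of the argument is the following computation under hypothesis (i), namely $(\nabla_{1}+i\nabla_{2})g\in W$. Multiplying the last display on the left by $p=\lhs{g}{h}$ and invoking the fundamental identity \eqref{eq:janssen} in the form $\lhs{f_{1}}{f_{2}}\twist\lhs{f_{3}}{f_{4}}=\lhs{f_{1}\cdot\rhs{f_{2}}{f_{3}}}{f_{4}}$, the first summand becomes $\lhs{g\cdot\rhs{h}{(\nabla_{1}+i\nabla_{2})g}}{h}$ and the second $\lhs{g\cdot\rhs{h}{g}}{(\nabla_{1}-i\nabla_{2})h}$. Now $\rhs{h}{g}=1$ by the Wexler--Raz relations for the canonical dual pair, and $g\cdot\rhs{h}{(\nabla_{1}+i\nabla_{2})g}=(\nabla_{1}+i\nabla_{2})g$ by the duality-principle identity \eqref{eq:riesz-rep} applied to the element $(\nabla_{1}+i\nabla_{2})g$ of $W$; therefore
\[ p\twist(\partial_{1}p+i\partial_{2}p)=\lhs{(\nabla_{1}+i\nabla_{2})g}{h}+\lhs{g}{(\nabla_{1}-i\nabla_{2})h}=\partial_{1}p+i\partial_{2}p. \]
Since $p^{2}=p$ yields $\partial_{j}p=p(\partial_{j}p)+(\partial_{j}p)p$, subtracting the identity just obtained gives $(\partial_{1}p+i\partial_{2}p)\,p=0$, which is one of the equations of Lemma~\ref{le:0804a}. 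Hypothesis (ii) is handled identically with $i$ replaced by $-i$, producing $(\partial_{1}p-i\partial_{2}p)\,p=0$. In either case Lemma~\ref{le:0804a} gives $E(p)=\lvert c_{1}(p)\rvert=q$, which is the assertion.

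The one step demanding care is the use of \eqref{eq:riesz-rep}: that identity is valid only on $W$, which is exactly why the hypotheses are phrased as the membership conditions $(\nabla_{1}\pm i\nabla_{2})g\in W$ rather than as identities involving $p$ directly; everything else is bookkeeping with the module pairings. A minor, purely technical, point is the regularity: for $\nabla_{j}h$ to land in $\M{1}{s}(\R\times\Z_{q})$ one wants $g$, and hence $h=S_{g}^{-1}g$ by inverse-closedness of $\mathcal{A}_{s}$ in $A_{\theta}$, in $\M{1}{s+1}(\R\times\Z_{q})$ --- but this level of smoothness is already implicit in the definitions of $c_{1}(p)$ and $E(p)$ and entails no extra work.
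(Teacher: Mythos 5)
Your proposal is correct and takes essentially the same route as the paper: reduce to the self-duality equations of Lemma \ref{le:0804a} for the projection $p=\lhs{g}{S^{-1}_{g}g}$ (with $c_{1}(p)=q$ from Proposition \ref{pr:0103a}), using the compatibility of $\partial_{j}$ with $\nabla_{j}$, the Wexler--Raz relation $\rhs{S^{-1}_{g}g}{g}=1$, the fundamental identity, and the duality-principle representation \eqref{eq:riesz-rep} applied to $(\nabla_{1}\pm i\nabla_{2})g\in W$. The only (harmless) variation is that you establish $p\twist(\partial_{1}p\pm i\partial_{2}p)=\partial_{1}p\pm i\partial_{2}p$ and then use the Leibniz identity for $p^{2}=p$ to conclude $(\partial_{1}p\pm i\partial_{2}p)\,p=0$, whereas the paper computes $(\partial_{1}p\pm i\partial_{2}p)\,p$ directly, transferring the derivative off $S^{-1}_{g}g$ via $\partial^{\circ}_{j}\rhs{S^{-1}_{g}g}{g}=0$.
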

\begin{proof} Let $p=\lhs{g}{S^{-1}_{g}g}$. We will show that if (i) is satisfied, then 
$(\partial_{1}p + i \partial_{2}p) p = 0$ 
and if (ii) holds then
$(\partial_{1}p - i \partial_{2}p) p = 0$.
In either case Lemma \ref{le:0804a} implies that $p$ minimizes the energy functional $\energy$. 
Using the fact that $g$ and $S^{-1}g$ are dual frame generators, it is a straightforward computation 
with the help of \eqref{eq:dual-generators-IP-calculation} to show that
\begin{equation} \label{eq:0804c}
(\partial_{1}p \pm  i \partial_{2}p) p = \lhs{(\nabla_{1}\pm i\nabla_{2}) g}{S^{-1}g} + \lhs{g \, \rhs{(\nabla_{1}\mp i \nabla_{2}) S^{-1}g}{g}}{S^{-1}g}
\end{equation}
Since $\rhs{S^{-1}g}{g}=1$ it follows that
\[ \rhs{\nabla_{j}S^{-1}g}{g} + \rhs{S^{-1}g}{\nabla_{j}g} = \partial_{j} \rhs{S^{-1}g}{g} = 0, \quad  j=1,2.\]
Therefore
\[ \rhs{(\nabla_{1}\mp i \nabla_{2}) S^{-1}g}{g} = -\rhs{ S^{-1}g}{(\nabla_{1}\pm i \nabla_{2})g}.\]
With this we continue the calculation in \eqref{eq:0804c} and establish that
\begin{equation} \label{eq:0804d}(\partial_{1}p \pm  i \partial_{2}p) p = \lhs{(\nabla_{1}\pm i\nabla_{2}) g}{S^{-1}g} - \lhs{g \, \rhs{ S^{-1}g}{(\nabla_{1}\pm i \nabla_{2})g}}{S^{-1}g}.\end{equation}
The duality principle for Gabor frames yields that any function $f\in W$ can be written as 
\[ f =  g \, \rhs{S^{-1}g}{f}. \]
The assumption (i) and (ii) ensure that we have such a representation available for the function $(\nabla_{1}\pm i \nabla_{2})g$. Therefore
\[ g \, \rhs{ S^{-1}g}{(\nabla_{1}\pm i \nabla_{2})g} = (\nabla_{1}\pm i \nabla_{2})g.\]
Using this in \eqref{eq:0804d} yields that
\[ (\partial_{1}p \pm  i \partial_{2}p) p = \lhs{(\nabla_{1}\pm i\nabla_{2}) g}{S^{-1}g} - \lhs{(\nabla_{1}\pm i \nabla_{2})g}{S^{-1}g} = 0. \]
Thus concluding the proof. 
\end{proof}

Note that conditions (i) and (ii) in Theorem \ref{th:soliton-solution} are first order differential equations. The following lemma details a solution to these \emph{soliton equations} for any \emph{topological charge} 
$q$, provided the parameters $\alpha,\beta,r$ and $s$ defining the lattices $\Lambda$ and $\Gamma$ are suitably chosen.

\begin{lemma} \label{le:soliton-diff-solution} If $\alpha,\beta,r,s$ and $q$ are such that
\[ (\alpha\beta q^{2})^{-1} + r^{\circ}s^{\circ}/q \in\Z \quad \text{and} \quad \vert\alpha\beta\vert \, q < 1, \]
then, for any non-zero $c\in\C$ and any $\lambda\in \C$ the Schwartz function 
\[ g(x,k) = c \, e^{-\pi x^{2}-i \lambda x}, \ x\in\R, k\in\Z_{q},\]
minimizes the energy functional $E$ for the non-commutative torus with these parameters.
\end{lemma}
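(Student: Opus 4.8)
The plan is to reduce everything to Theorem~\ref{th:soliton-solution}: it suffices to check that $g\in\mathcal{G}(\Lambda\times\Gamma)$ and that one of $(\nabla_{1}\pm i\nabla_{2})g$ lies in $W$, for then $E(g)=q$ by that theorem. So the proof splits into a ``frame'' part and a ``soliton-equation'' part.

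For the frame part, I would observe that $g(\,\cdot\,,k)=\tilde g$ is independent of $k\in\Z_{q}$, with $\tilde g(x)=c\,e^{-\pi x^{2}-i\lambda x}$, and that the first hypothesis $(\alpha\beta q^{2})^{-1}+r^{\circ}s^{\circ}/q\in\Z$ is exactly the parameter condition in Lemma~\ref{le:from-1-to-q-frames}. Hence it is enough to show that $\tilde g$ generates a Gabor frame for $L^{2}(\R)$ with respect to the lattice $\alpha\Z\times q\beta\Z$. Completing the square (writing $\lambda=\lambda_{1}+i\lambda_{2}$ with $\lambda_{1},\lambda_{2}\in\R$) shows that $\tilde g=c'\,E_{-\lambda_{1}/2\pi}T_{\lambda_{2}/2\pi}\varphi$ for a non-zero constant $c'$, where $\varphi(x)=e^{-\pi x^{2}}$; since the Gabor frame property of a generator with respect to a fixed lattice is invariant under applying a time-frequency shift to the generator, $\tilde g$ works if and only if $\varphi$ does. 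By the classical theorem of Lyubarskii and Seip--Wallst\'en on Gabor frames with Gaussian window, $\varphi$ generates a Gabor frame with respect to a rectangular lattice $a\Z\times b\Z$ exactly when $|ab|<1$; with $a=\alpha$ and $b=q\beta$ this reads $q|\alpha\beta|<1$, which is our second hypothesis. This gives $g\in\mathcal{G}(\Lambda\times\Gamma)$.

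For the soliton-equation part, I would compute directly. The covariant derivatives $\nabla_{1},\nabla_{2}$ act only in the continuous variable, so I work with $\tilde g$; since $\tfrac{d}{dx}\tilde g=(-2\pi x-i\lambda)\tilde g$ we get $\nabla_{1}g=2\pi i\,x\,g$ and $\nabla_{2}g=(-2\pi x-i\lambda)g$, so that
\[ (\nabla_{1}+i\nabla_{2})g=\big(2\pi i\,x+i(-2\pi x-i\lambda)\big)g=\lambda\,g .\]
Thus the Gaussian is an eigenfunction of the first-order operator $\nabla_{1}+i\nabla_{2}$ with eigenvalue $\lambda$; in particular $(\nabla_{1}+i\nabla_{2})g$ is a scalar multiple of $g$, and since $g=\pi^{\circ}(0)g$ lies in $W$, we conclude $(\nabla_{1}+i\nabla_{2})g\in W$. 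Condition (i) of Theorem~\ref{th:soliton-solution} is therefore verified, so $g$ minimizes the energy functional \eqref{eq:gabor-functional} and $E(g)=q$. The constant $c$ and the parameter $\lambda$ play no role: $c$ is irrelevant to the frame property and to membership in $W$, while the effect of $\lambda$ on $\tilde g$ is only a time-frequency shift together with a harmless rescaling.

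I expect the only non-routine ingredient to be in the frame part: one must invoke the (deep) Lyubarskii/Seip--Wallst\'en characterization of the lattices admitting the Gaussian as a frame generator, and then---more bookkeeping than difficulty---match its density condition, applied to the lattice $\alpha\Z\times q\beta\Z$ produced by Lemma~\ref{le:from-1-to-q-frames}, with the stated inequality $q|\alpha\beta|<1$. The soliton-equation computation itself is immediate once one notices that the (generalized) Gaussian is an eigenfunction of $\nabla_{1}+i\nabla_{2}$.
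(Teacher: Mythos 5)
Your proof is correct and follows essentially the same route as the paper's: verify the eigenfunction identity $(\nabla_{1}+i\nabla_{2})g=\lambda g$ to get condition (i) of Theorem~\ref{th:soliton-solution}, and combine the classical fact that the Gaussian generates a frame for $\alpha\Z\times q\beta\Z$ precisely when $q|\alpha\beta|<1$ with Lemma~\ref{le:from-1-to-q-frames} to get $g\in\mathcal{G}(\Lambda\times\Gamma)$. You merely spell out two details the paper leaves implicit (the Lyubarskii/Seip--Wallst\'en characterization and the reduction of the complex-$\lambda$ Gaussian to the standard one by a time-frequency shift), which is fine.
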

\begin{proof}
It is straightforward to show that $(\nabla_{1}+i\nabla_{2}) g = \lambda g$, which then implies that $(\nabla_{1}+i\nabla_{2}) g \in W$. The function $g(\,\cdot\,,k)$, for $k\in\Z_{q}$, belongs to the Schwartz space, hence, in particular to all the weighted modulation spaces $\M{1}{s}(\R\times\Z_{q})$ for all $s\ge 0$. Furthermore, the knowledge that the Gabor system $\{ E_{m\beta q}T_{n\alpha} \tilde{g}\}_{m,n\in\Z}$, where $\tilde{g}$ is the generalized Gaussian $g(\cdot, k)$, is a frame for $L^{2}(\R)$ if and only if $\vert \alpha\beta\vert q < 1$ together with Lemma \ref{le:from-1-to-q-frames} shows that $g\in \mathcal{G}(\Lambda\times\Gamma)$. Hence the result follows from Theorem \ref{th:soliton-solution}.
\end{proof}

It is unknown whether there are other functions besides the Gaussian that satisfy the assumptions of Theorem \ref{th:soliton-solution}.

\section{The continuous picture -- the Moyal plane}
We close with the construction of solitons of general topological charge for the Moyal plane $\mathcal{A}$ which extends the results in \cite{dalalu15}. 
So far we have considered Gabor systems of the form $\{\pi(\nu)g\}_{\nu\in\Lambda\times\Gamma}$, where $\Lambda$ and $\Gamma$ are lattices in the time and frequency domain. The presented theory also works for continuous Gabor systems in $L^{2}(\R\times\Z_{q})$, where one takes $\Lambda\times\Gamma$ to be the \emph{entire} time-frequency plane. That is, for some $g\in \M{1}{s}(\R\times\Z_{q})$ we consider the Gabor system of the form 
\[ \{E_{\omega,c}T_{x,l} \, g \, :  \, (x,l,\omega,c) \in\R\times\Z_{q}\times\widehat{\R}\times\widehat{\Z}_{q}\}.\]
Such a Gabor system is a frame for $L^{2}(\R\times\Z_{q})$ if there exists constants $A,B>0$ such that
\[ A \, \Vert f \Vert_{2}^{2} \le \sum_{c,l\in\Z_{q}} \int_{\R^{2}} \vert \langle f, E_{\omega,c}T_{x,l} g \rangle \vert^{2} \, \dd(x,\omega) \le B \, \Vert f \Vert_{2}^{2} \ \ \text{for all} \ \ f\in L^{2}(\R\times\Z_{q}).\]
The theory of continuous Gabor frames is not as intricate as the one of discrete Gabor frames because the Moyal identity states that all functions $g\in\M{1}{s}(\R\times\Z_{q})$ (in fact, all functions in $L^{2}(\R\times\Z_{q})$) generate a continuous Gabor frame with bounds $A=B= q\, \Vert g \Vert_{2}^{2}$. Specifically,
\begin{equation} \label{eq:moyal} 
\sum_{c,l\in\Z_{q}} \int_{\R^{2}} \vert \langle f, E_{\omega,c}T_{x,l} g \rangle \vert^{2} \, \dd(x,\omega)  = q \, \Vert g \Vert_{2}^{2} \, \Vert f \Vert_{2}^{2} \quad \text{for all} \ \ f\in L^{2}(\R\times\Z_{q}). 
\end{equation}

Further, we need the vector space $\M{1}{s}(\R\times\Z_{q}\times\widehat{\R}\times\widehat{\Z}_{q})$.
It becomes an involutive Banach algebra under the twisted convolution and twisted involution given by
\begin{align*} 
(k_{1}\twist k_{2})(\nu) & = \sum_{\Z_{q} \times \Z_{q}}\int_{\R^{2}} k_{1}(\nu') k_{2}(\nu-\nu') \varphi(\nu',\nu-\nu') \, \dd\nu', \quad \nu\in \R\times\Z_{q}\times\widehat{\R}\times\widehat{\Z}_{q}\\
k^{*}(\nu) & = \varphi(\nu,\nu) \overline{k(-\nu)}, \ \ \nu\in \R\times\Z_{q}\times\widehat{\R}\times\widehat{\Z}_{q}.\end{align*}
One can show that the map
\[ I : k\mapsto \sum_{\Z_{q} \times \Z_{q}} \int_{\R^{2}} k(\nu) \, \pi(\nu) \, \dd\nu \]
is an isomorphism from $\M{1}{s}(\R\times\Z_{q}\times\widehat{\R}\times\widehat{\Z}_{q})$ onto the involutive Banach algebra
\[ \mathcal{A}_{s} = \big\{ T : \M{1}{s}(\R\times\Z_{q})\to \M{1}{s}(\R\times\Z_{q}), \ T =\sum_{\Z_{q} \times \Z_{q}} \int_{\R^{2}} k(\nu) \, \pi(\nu) \, \dd\nu, \ k\in \M{1}{s}(\R\times\Z_{q}\times\widehat{\R}\times\widehat{\Z}_{q}) \big\}.\]
Indeed, $\mathcal{A}_{s}$ is an involutive Banach algebra for the norm $\Vert T\Vert_{A_{s}} = \Vert k \Vert_{L^{1}_{s}}$, composition of operators and the involution of $T\in \mathcal{A}_{s}$ being its $L^{2}$-Hilbert space adjoint $T^{*}$.
We define the left action of a function $k\in L^{1}_{s}(\R\times\Z_{q}\times\widehat{\R}\times\widehat{\Z}_{q})$ on a function $f\in \M{1}{s}(\R\times\Z_{q})$ by
\[ k \cdot f = I(k) f = \sum_{\Z_{q} \times \Z_{q}} \int_{\R^{2}} k(\nu) \, \pi(\nu) f\, \dd\nu.\]
The $\M{1}{s}(\R\times\Z_{q}\times\widehat{\R}\times\widehat{\Z}_{q})$-valued inner-product is defined by
\[ \lhs{ \cdot }{ \cdot } : \M{1}{s}(\R\times\Z_{q})\times \M{1}{s}(\R\times\Z_{q}) \to 
\M{1}{s}(\R\times\Z_{q}\times\widehat{\R}\times\widehat{\Z}_{q}), \quad \lhs{f}{g} = \mathcal{V}_{g}f.\]
One can verify analogous of properties \eqref{eq:inner-p-rule}. That is, 
for all $f,g\in \M{1}{s}(\R\times\Z_{q})$ and for all $k\in \M{1}{s}(\R\times\Z_{q}\times\widehat{\R}\times\widehat{\Z}_{q})$ one has, 
\begin{align} \label{eq:inner-p-rule-cont} 
k \twist \lhs{f}{g} &= \lhs{k \cdot f}{g} \ , \quad \lhs{f}{g}\twist k^{*} = \lhs{f}{k \cdot g} \ , 
\quad (\lhs{f}{g})^{*} = \lhs{g}{f} \nn \\
\lhs{f}{f} &\ge 0 \quad\text{and} \quad \lhs{f}{f} = 0 \ \ \Leftrightarrow \ \ f = 0.
\end{align}
The associated enveloping $C^*$-algebra is the Moyal plane $\mathcal{A}$ represented by compact operators on $L^2(\R\times\Z_{q})$, see \cite{dalalu15} for the scalar case $q=1$.

Since we are really considering the short-time Fourier transform, that is Gabor systems with time-frequency shifts along the entire time-frequency plane, the annihilator is just a single point. 
We therefore have $\mathcal{A}_{s}^{\circ} \simeq \C$, where $\C$ takes the role of $\ell^{1}_{s}(\Gamma^{\perp}\times\Lambda^{\perp})$ from earlier:
\[ \mathcal{A}_{s}^{\circ} = \big\{ T: \M{1}{s}(\R\times\Z_{q})\to \M{1}{s}(\R\times\Z_{q}) \, : \, T = b, \, b\in \C \big\}.\]
The right action of elements $b\in \C$ on a function $f\in \M{1}{s}(\R\times\Z_{q})$ is given by
\[ f \cdot b = I(b)f = bf.\]
And the $\C$-valued inner-product on $\M{1}{s}(\R\times\Z_{q})$ is now  
\[ \rhs{ \cdot }{ \cdot } : \M{1}{s}(\R\times\Z_{q})\times\M{1}{s}(\R\times\Z_{q})\to \C \, , \quad \lhs{f}{g} = q\, \langle g,f\rangle.\]

A variation of \cite[Prop. 5.1]{dalalu15} shows that $\M{1}{s}(\R\times\Z_{q})$ is a singly generated projective module over $\mathcal{A}$, that is $\M{1}{s}(\R\times\Z_{q})$ is a line bundle over the Moyal plane and the 
Moyal plane algebra  
$\mathcal{A}_s$ is Morita equivalent to $\mathbb{C}$.

\begin{proposition}
The algebra $\mathcal{A}_s$ is Morita equivalent to $\mathbb{C}$ and 
$\M{1}{s}(\R\times\Z_{q})$ is an equivalence bimodule. Any $g\in \M{1}{s}(\R\times\Z_{q})$ with $\|g\|_{L^2(\R\times\Z_{q})}=1$ generates $\M{1}{s}(\R\times\Z_{q})$, that is 
for any $f\in \M{1}{s}(\R\times\Z_{q})$ we have $f=\lhs{f}{g}g$.
\end{proposition}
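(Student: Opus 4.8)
The plan is to prove the single-generator formula $f=\lhs{f}{g}\, g$ directly from Moyal's identity \eqref{eq:moyal}, and then to read off the equivalence-bimodule structure, the compatibility identity between the two inner products being the one genuinely new ingredient to check.

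First I would unwind $\lhs{f}{g}\cdot g$. Since $\lhs{f}{g}=\mathcal{V}_{g}f$ belongs to $\M{1}{s}(\R\times\Z_{q}\times\widehat{\R}\times\widehat{\Z}_{q})$, the left action gives $\lhs{f}{g}\cdot g=I(\mathcal{V}_{g}f)\,g=\sum_{\Z_{q}\times\Z_{q}}\int_{\R^{2}}\mathcal{V}_{g}f(\nu)\,\pi(\nu)g\,\dd\nu$, a Bochner integral converging in $\M{1}{s}(\R\times\Z_{q})$ because the integrand is dominated by $|\mathcal{V}_{g}f(\nu)|\,(1+|\lambda|+|\gamma|)^{s}\|g\|_{\M{1}{s}}$, which is integrable by the definition of $\M{1}{s}(\R\times\Z_{q}\times\widehat{\R}\times\widehat{\Z}_{q})$. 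The key step is then to pair this with an arbitrary $h\in\M{1}{s}(\R\times\Z_{q})$, move the inner product inside the integral, and recognise the result as the left-hand side of the polarised Moyal identity \eqref{eq:moyal}; with $\|g\|_{2}=1$ and the normalisation of the measures in force one gets
\[
\big\langle \lhs{f}{g}\cdot g,\,h\big\rangle
=\sum_{\Z_{q}\times\Z_{q}}\int_{\R^{2}}\mathcal{V}_{g}f(\nu)\,\overline{\mathcal{V}_{g}h(\nu)}\,\dd\nu
=\langle f,h\rangle .
\]
Since $\M{1}{s}(\R\times\Z_{q})$ is dense in $L^{2}(\R\times\Z_{q})$, this forces $f=\lhs{f}{g}\, g$ for all $f$, which in particular exhibits $\M{1}{s}(\R\times\Z_{q})$ as a singly generated, hence finitely generated projective, left $\mathcal{A}_{s}$-module.

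For the Morita equivalence I would verify that $E:=\M{1}{s}(\R\times\Z_{q})$, with its left $\mathcal{A}_{s}$-action, the trivial right $\C$-action $f\cdot b=bf$, and the inner products $\lhs{\,\cdot\,}{\,\cdot\,}$ and $\rhs{\,\cdot\,}{\,\cdot\,}$, is an $\mathcal{A}_{s}$--$\C$ equivalence bimodule. Positivity, non-degeneracy and the module identities for both inner products are already recorded in \eqref{eq:inner-p-rule-cont} and its $\rhs{\,\cdot\,}{\,\cdot\,}$-analogue, so the only axiom still requiring work is the compatibility $\lhs{f}{g}\cdot h=f\cdot\rhs{g}{h}$; I would obtain it by testing both sides against an arbitrary $\phi\in\M{1}{s}(\R\times\Z_{q})$, whereupon the left side becomes $\langle\mathcal{V}_{g}f,\mathcal{V}_{h}\phi\rangle$ and Moyal's identity, in its two-window form, identifies this with $\overline{\langle g,h\rangle}\,\langle f,\phi\rangle$, matching $\langle f\cdot\rhs{g}{h},\phi\rangle$ up to the chosen normalisation of $\rhs{\,\cdot\,}{\,\cdot\,}$. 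Fullness on the $\C$-side is immediate because $\rhs{g}{g}\neq 0$ for $g\neq 0$ and $\C$ is one-dimensional; for fullness on the $\mathcal{A}_{s}$-side I would use the compatibility identity, which shows $\lhs{f}{g}$ acts as a scalar multiple of the rank-one map $h\mapsto\langle h,g\rangle f$, so that the closed linear span of the $\lhs{f}{g}$ exhausts $\mathcal{A}_{s}$ — here I would import a variant of \cite[Prop.~5.1]{dalalu15}, using that $\mathcal{A}_{s}$ is the smooth analogue of the compact operators on $L^{2}(\R\times\Z_{q})$. Assembling these points gives that $E$ is an equivalence bimodule and hence $\mathcal{A}_{s}$ is Morita equivalent to $\C$.

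I expect the main obstacle to be exactly this last point — the fullness of the left inner product, i.e.\ the density of the (smooth) finite-rank operators inside $\mathcal{A}_{s}$, for which one genuinely has to run the argument of \cite{dalalu15} — together with the bookkeeping needed to make Moyal's identity \eqref{eq:moyal}, the normalisations of $\lhs{\,\cdot\,}{\,\cdot\,}$ and $\rhs{\,\cdot\,}{\,\cdot\,}$, and the condition $\|g\|_{2}=1$ mutually consistent. By contrast, the fundamental identity $\lhs{f}{g}\cdot h=f\cdot\rhs{g}{h}$, which is the conceptual heart of the whole construction, is a one-line consequence of Moyal's identity once the pairing trick is set up.
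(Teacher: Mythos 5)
Your route is essentially the one the paper itself takes: the paper gives no independent argument for this proposition, obtaining the single-generator identity from the continuous Moyal/fundamental identity \eqref{eq:moyal}--\eqref{eq:janssen-cont} and deferring the remaining bimodule and fullness verifications to a variation of \cite[Prop.~5.1]{dalalu15}, exactly as you propose. The one point you should nail down rather than wave at (``the normalisation of the measures in force'') is the factor of $q$: with the paper's plain sums over $\Z_{q}\times\widehat{\Z}_{q}$ in \eqref{eq:moyal} and in the displayed identity $\lhs{f}{g}\cdot h = q\,\langle h,g\rangle\, f$, your pairing computation yields $\lhs{f}{g}\cdot g = q\,\Vert g\Vert_{2}^{2}\, f$, so either the $\widehat{\Z}_{q}$-sum in the action must carry the weight $q^{-1}$ or the hypothesis should read $\Vert g\Vert_{2}^{2}=q^{-1}$ (the self-dual Wexler--Raz normalisation $\langle g,g\rangle=q^{-1}$) --- a bookkeeping inconsistency already present in the paper's conventions rather than a flaw in your strategy.
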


Since the annihilator of the full time-frequency plane only consists of one point the main results of the Gabor frame theory reduce to well known facts:

The \textbf{fundamental identity of Gabor analysis}:\\
just states that for all $f,g,h\in \M{1}{s}(\R\times\Z_{q})$ one has that
\begin{equation} \label{eq:janssen-cont} 
\lhs{f}{g} \cdot h = f \cdot \rhs{g}{h}
\end{equation}
that is, $\displaystyle \sum_{\Z_{q}}\int_{\R^{2}} \langle f, \pi(\nu) g \rangle \,  \pi(\nu) h \, \dd\nu = q\, \langle h, g\rangle \, f$, a version of the Moyal identity \eqref{eq:moyal}.


The \textbf{Wexler-Raz biorthogonality relations}: \\ 
just characterises when two functions $g,h\in \M{1}{s}(\R\times\Z_{q})$ generate dual continuous Gabor frames. This is the case if and only if
\[ \rhs{g}{h} = 1, \quad \text{that is} \ \langle h, g\rangle = q^{-1}, \]
and the construction of a pair of generators for dual continuous Gabor frames is trivial. 

The \textbf{duality principle for Gabor frames}:\\
just becomes the simple statement, that for $g$ and $h$ such that $\rhs{g}{h}=1$, we have 
\begin{equation} \label{eq:riesz-rep-cont} 
f = g \cdot \rhs{h}{f} \qquad \text{for all} \ \ f\in W = \mbox{span}\, g. 
\end{equation}

As in Section \ref{se:dcc}, we have covariant derivatives on the line bundle $\M{1}{s}(\R\times\Z_{q})$:
\begin{itemize}
\item[] $\nabla_{1}: \M{1}{s+1}(\R\times\Z_{q}) \to \M{1}{s}(\R\times\Z_{q}), 
\quad \nabla_{1}f(\, \cdot \, , k) = 2\pi \, i \, M f(\, \cdot \, , k), \quad k\in \Z_{q}$,
\item[] $\nabla_{2}: \M{1}{s+1}(\R\times\Z_{q}) \to \M{1}{s}(\R\times\Z_{q}), \quad
\nabla_{2}f(\, \cdot \,, k) = D f(\, \cdot\, , k), \quad k\in \Z_{q}$.
\end{itemize}
On $L^{1}_{s}(\R\times\Z_{q}\times\widehat{\R}\times\widehat{\Z}_{q})$, for $s\ge 0$, 
we define derivations $\partial_{1}$ and $\partial_{2}$, 
$$
\partial_{j}: L^{1}_{s+1}(\R\times\Z_{q}\times\widehat{\R}\times\widehat{\Z}_{q}) 
\to L^{1}_{s}(\R\times\Z_{q}\times\widehat{\R}\times\widehat{\Z}_{q}), \quad j+1,2 
$$
given, for $(x,l,\omega,c)\in \R\times\Z_{q}\times\widehat{\R}\times\widehat{\Z}_{q})$ by, 
$$
(\partial_{1}k)(x,l,\omega,c) = 2\pi i x \, k(x,l,\omega,c), \qquad  
(\partial_{2}k)(x,l,\omega,c) = 2 \pi i \omega \, k(x,l,\omega,c) .
$$
On $\C$ there are just the trivial derivations:
$$\
\partial_{j}^{\circ}: \C\to\C, \qquad \partial_{j}^{\circ}b = 0, \ b\in \C , \quad j=1,2 .
$$
Clearly, the isomorphisms $I$ and $I^{\circ}$ between $\M{1}{s}(\R\times\Z_{q}\times\widehat{\R}\times\widehat{\Z}_{q})$ and $\mathcal{A}_{s}$ and between $\C$ and $\mathcal{A}_{s}^{\circ}$, respectively, allow us define the derivations on $\mathcal{A}_{s}$ and $\mathcal{A}_{s}^{\circ}$, such that
\[ \partial_{j} : \mathcal{A}_{s+1} \to \mathcal{A}_{s} \ \ \text{and} \ \ \partial_{j}^{\circ} : \mathcal{A}_{s+1}^{\circ} \to \mathcal{A}_{s}^{\circ} \ \ \text{for} \ \ j=1,2.\]

In parallel with what happens for the discrete case, 
from the definition of the operators $\nabla_{j}$ and the derivations $\partial_{j}$, $j=1,2$ one establishes the Leibniz rule,
\begin{align}\label{eq:0103b-cont}
\nabla_{j}( a \cdot  f ) = (\partial_{j} a) \cdot f + a\cdot \nabla_{j} f \quad\text{for all} \, f\in \M{1}{1}(\R\times\Z_{q}), \ a\in L^{1}_{s}(\R\times\Z_{q}\times\widehat{\R}\times\widehat{\Z}_{q})
\end{align}
and that there is compatibility with the $\M{1}{s}(\R\times\Z_{q}\times\widehat{\R}\times\widehat{\Z}_{q})$ 
valued inner-product $\lhs{ \cdot }{ \cdot }$, 
\begin{align} \partial_{j} (\lhs{f}{g} ) =  
\lhs{\nabla_{j} f}{g} + \lhs{f}{\nabla_{j} g} \qquad \text{for all} \ \ f,g\in \M{1}{1}(\R\times\Z_{q}) \label{eq:0103a-cont} .\end{align}

Combining these two equationswe find that, for all $f,g,h\in \M{1}{1}(\R\times\Z_{q})$,
\[ \nabla_{j}(\lhs{f}{g} \, h) = \lhs{\nabla_{j} f}{g} \, h + \lhs{f}{\nabla_{j} g} \, h + \lhs{f}{g}  \, \nabla_{j} h \qquad j=1,2.\]
Similar statements hold with $\partial^{\circ}_{j}$
 and $\rhs{ \cdot }{ \cdot }$ instead of $\partial_{j}$ and $\lhs{\,\cdot\,}{\,\cdot\,}$.

The curvature of the covariant derivatives is, as earlier, given by
\[ F_{12} = \nabla_{1}\nabla_{2} - \nabla_{2}\nabla_{1}\]
and is the linear and bounded operator computed to be, 
\[ F_{12}: \M{1}{s}(\R\times\Z_{q}) \to \M{1}{s}(\R\times\Z_{q}), \quad F_{12} f = -2\pi i \, \mbox{Id}.\]
The functional
\[ 
\tr : \M{1}{s}(\R\times\Z_{q}\times\widehat{\R}\times\widehat{\Z}_{q}) \to \C, \quad \tr(k) = k(0),
\]
is a faithful trace on $\M{1}{s}(\R\times\Z_{q}\times\widehat{\R}\times\widehat{\Z}_{q})$.
Naturally, this trace extends to $\mathcal{A}_{s}$ by the isomorphism $I$. 
Similarly, 
\[ \tr^{\circ} : \C \to \C, \quad \tr^{\circ} (b) = q^{-1} b, \ b\in\C \]
defines a faithful trace on $\C$. 

For all $f,g\in\M{1}{s}(\R\times\Z_{q})$, $k\in L^{1}(\R\times\Z_{q}\times\widehat{\R}\times\widehat{\Z}_{q})$, $b\in\C$ and $j=1,2$,
\[ \tr(\partial_{j}k) = 0, \quad \tr^{\circ}(\partial_{j}^{\circ}b) = 0 \quad \mbox{and} \quad \tr( \lhs{f}{g}  ) = \tr^{\circ}( \rhs{g}{f} ). \]

If $p\in L^{1}_{s}(\R\times\Z_{q}\times\widehat{\R}\times\widehat{\Z}_{q})$, $s\ge 1$, is a projection, $p^2=p=p^*$, 
its Connes-Chern number is now given by
\[ c_{1}(p) = \frac{q^{2}}{2\pi i} \, \tr\big( p [ (\partial_{1}p)(\partial_{2}p) - (\partial_{2}p)(\partial_{1}p)] \big). \]
Note the difference when compared to the one used earlier for discrete Gabor systems.
In a way similar to the prove of Proposition \ref{pr:0103a}, one shows the following.
\begin{proposition}
If $g,h\in \M{1}{s}(\R\times\Z_{q})$, $s\ge 1$, 
generate dual continuous Gabor frames, that is, $\langle h,g\rangle = q^{-1}$, then, for $p=\lhs{g}{h}$ one finds
\[ c_{1}(p) = q. \]
\end{proposition}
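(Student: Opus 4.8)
The plan is to mirror the proof of Proposition~\ref{pr:0103a} almost verbatim; the only genuinely new input is keeping track of the three normalizations peculiar to the Moyal picture, namely $\rhs{f}{g} = q\,\langle g,f\rangle$, $\tr^{\circ}(b) = q^{-1}b$, and the constant $q^{2}/(2\pi i)$ in the definition of $c_{1}$.

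First I would record the consequences of the Wexler--Raz relations. The hypothesis $\langle h,g\rangle = q^{-1}$ is precisely $\rhs{g}{h} = 1$, and applying $(\,\cdot\,)^{*}$ together with $(\rhs{g}{h})^{*} = \rhs{h}{g}$ gives $\rhs{h}{g} = 1$ as well. Combining the fundamental identity \eqref{eq:janssen-cont} with the bimodule rule $\lhs{k\cdot f_{1}}{f_{2}} = k\twist\lhs{f_{1}}{f_{2}}$ then yields the analogue of \eqref{eq:dual-generators-IP-calculation},
\[ \lhs{f_{1}}{h}\twist\lhs{g}{f_{2}} = \lhs{\lhs{f_{1}}{h}\cdot g}{f_{2}} = \lhs{f_{1}\cdot\rhs{h}{g}}{f_{2}} = \lhs{f_{1}}{f_{2}} \]
for all $f_{1},f_{2}\in\M{1}{s}(\R\times\Z_{q})$. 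Next I would prove the continuous version of Lemma~\ref{le:0803a}: using \eqref{eq:janssen-cont} once more,
\[ \lhs{f_{1}}{\nabla_{j}g}\twist\lhs{h}{f_{2}} + \lhs{f_{1}}{g}\twist\lhs{\nabla_{j}h}{f_{2}} = \lhs{f_{1}\cdot\big(\rhs{\nabla_{j}g}{h} + \rhs{g}{\nabla_{j}h}\big)}{f_{2}} = 0, \]
where $\rhs{\nabla_{j}g}{h} + \rhs{g}{\nabla_{j}h} = \partial_{j}^{\circ}\rhs{g}{h} = 0$ because $\partial_{j}^{\circ}$ is the zero derivation on $\C$; concretely this is just the skew-adjointness of $\nabla_{1} = 2\pi i M$ and $\nabla_{2} = D$ on $L^{2}(\R\times\Z_{q})$. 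This step is in fact easier here than in the discrete case.

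With these identities in hand I would carry out the trace computation verbatim. Setting $p = \lhs{g}{h}$ and using the Leibniz rule \eqref{eq:0103a-cont}, $\partial_{j}p = \lhs{\nabla_{j}g}{h} + \lhs{g}{\nabla_{j}h}$, one expands $\tr\big(p\,[(\partial_{1}p)(\partial_{2}p) - (\partial_{2}p)(\partial_{1}p)]\big)$ exactly along the chain of equalities culminating in \eqref{eq:0103e}: the cyclicity of $\tr$ together with the first identity above collapses each product of the form $\lhs{\,\cdot\,}{h}\twist\lhs{g}{\,\cdot\,}$, and the continuous Lemma~\ref{le:0803a} annihilates the remaining cross terms. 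What survives is
\[ \tr\big(p\,[(\partial_{1}p)(\partial_{2}p) - (\partial_{2}p)(\partial_{1}p)]\big) = -\tr\big(\lhs{(\nabla_{1}\nabla_{2} - \nabla_{2}\nabla_{1})g}{h}\big) = -\tr\big(\lhs{F_{12}g}{h}\big), \]
where the next-to-last step uses $\tr(\lhs{f_{1}}{\nabla_{j}f_{2}}) = -\tr(\lhs{\nabla_{j}f_{1}}{f_{2}})$, itself immediate from $\tr\circ\partial_{j} = 0$ and $\tr(\lhs{f_{1}}{f_{2}}) = \langle f_{1},f_{2}\rangle$. Since $F_{12}g = -2\pi i\,g$ and $\langle g,h\rangle = \overline{\langle h,g\rangle} = q^{-1}$, the right-hand side equals $2\pi i\,q^{-1}$, and therefore
\[ c_{1}(p) = \frac{q^{2}}{2\pi i}\cdot 2\pi i\,q^{-1} = q. \]

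The main (and essentially only) obstacle is bookkeeping: one must verify that the long cancellation leading to \eqref{eq:0103e} transcribes with no change --- it does, since it uses nothing beyond cyclicity of the trace, the Leibniz rule, and the two biorthogonality identities recorded above --- and that the normalization constants of the Moyal picture combine to $q$ rather than to something else. There is no analytic subtlety, as everything takes place in the Banach algebra $\mathcal{A}_{s}$ and its equivalence bimodule $\M{1}{s}(\R\times\Z_{q})$, for which all the algebraic rules invoked were established earlier in this section.
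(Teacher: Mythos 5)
Your proposal is correct and follows exactly the route the paper intends: the paper's "proof" is just the remark that one argues as in Proposition \ref{pr:0103a}, and you carry out precisely that transcription, with the right use of the Moyal-picture normalizations ($\rhs{g}{h}=1 \Leftrightarrow \langle h,g\rangle=q^{-1}$, $\tr(\lhs{f_1}{f_2})=\langle f_1,f_2\rangle$, $F_{12}=-2\pi i\,\mathrm{Id}$, and the prefactor $q^{2}/2\pi i$) yielding $c_1(p)=\frac{q^2}{2\pi i}\cdot 2\pi i\,q^{-1}=q$. The observation that the continuous analogue of Lemma \ref{le:0803a} reduces to $\partial_j^{\circ}=0$ on $\C$, i.e.\ skew-adjointness of $\nabla_1,\nabla_2$, is a correct and slightly cleaner way to see that step.
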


Let now $\projection$ be the set of all projections $p\in \M{1}{s}(\R\times\Z_{q}\times\widehat{\R}\times\widehat{\Z}_{q})$. For $s\ge 1$, we define the \emph{energy-functional}
\[ \energy : \projection \to \R^{+}_{0} , \quad \energy(p) = \frac{q^{2}}{4\pi} \, \tr\big( (\partial_{1}p)^{2} + (\partial_{2}p)^{2}\big) .\]
Similarly to the discrete case, one shows that energy-functional is bounded from below by the Connes-Chern-number. Specifically,
\begin{equation} \label{eq:2802c-continuous} \energy(p) \ge \vert c_{1}(p) \vert \ \ \mbox{for all} \ \ p\in\projection, \, s\ge 1. \end{equation}
And, if $p$ satisfies either of the two equations
\begin{align*} 
& (\partial_{1} p + i \partial_{2} p) \, p = 0,  \quad \mbox{or} \quad p \, (\partial_{1} p - i \partial_{2} p) = 0,\\
& (\partial_{1} p - i \partial_{2} p) \, p = 0 \quad \mbox{or} \quad p \, (\partial_{1} p + i \partial_{2} p)  = 0,
\end{align*}
then $\energy(p) = \vert c_{1}(p) \vert $.
Let $\mathcal{G}$ denote the set of all functions $g\in \M{1}{s}(\R\times\Z_{q})$, $s\ge 1$,
which generate a continuous Gabor frame $\{\pi(\nu)g\}_{\nu\in\R\times\Z_{q}\times\widehat{\R}\times\widehat{\Z}_{q}}$ for $L^{2}(\R\times\Z_{q})$. Note that this set comprises \emph{all} functions in $\M{1}{s}(\R\times\Z_{q})$. If we apply the energy functional $E$ to the projection $p=\lhs{g}{S^{-1}_{g}g}=\Vert g \Vert_{2}^{-2} \lhs{g}{g}$, $g\in \mathcal{G}$, we find that
\begin{equation} \label{eq:cont-gabor-functional} E : \mathcal{G} \to \R_{0}^{+} , \ E(g) = \frac{q^{2}\,\pi}{\Vert g \Vert_{2}^{2}} \sum_{\Z_{q} \times \Z_{q}} \int_{\R^{2}} (x^{2}+\omega^{2}) \, \vert \mathcal{V}_{g} g(\nu) \vert^{2} \, \dd\nu,\end{equation}
where $\nu=(x,l,\omega,c)\in \R\times\Z_{q}\times\widehat{\R}\times\widehat{\Z}_{q}$. 
This functional is bounded below by $q$. 

Next one shows the analogue of Theorem \ref{th:soliton-solution}.
\begin{proposition}
The \emph{unique} solution among all functions $g\in \M{1}{s}(\R\times\Z_{q})$ to either of the conditions
\[ 
(\nabla_{1}\pm i\nabla_{2})g\in \mathrm{span}\,g
\]
is the generalized Gaussian
\[ g(x,k) = c_{k} \, e^{-\pi x^{2}-i \lambda x}, \ x\in\R, k\in\Z_{q}, \ \ \{c_{k}\}\in\C^{q}.\]
\end{proposition}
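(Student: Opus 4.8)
The argument runs parallel to Lemma \ref{le:soliton-diff-solution}, except that now $W$ has collapsed to the one-dimensional space $\mathrm{span}\,g$, so the implication becomes an equivalence and we obtain a rigidity statement. The plan is to reduce the operator conditions to an elementary scalar ODE. Since $\mathrm{span}\,g$ is one-dimensional, for a non-zero $g\in \M{1}{s}(\R\times\Z_{q})$ the hypothesis $(\nabla_{1}\pm i\nabla_{2})g\in\mathrm{span}\,g$ is equivalent to the existence of a scalar $\mu\in\C$ with $(\nabla_{1}\pm i\nabla_{2})g=\mu\,g$. Recalling from the definition of the covariant derivatives that $\nabla_{1}f(\,\cdot\,,k)=2\pi i\,Mf(\,\cdot\,,k)$ and $\nabla_{2}f(\,\cdot\,,k)=Df(\,\cdot\,,k)$ act only in the $\R$-variable and leave the $\Z_{q}$-component untouched, this identity decouples over $k\in\Z_{q}$ into the first-order linear ODE
\[ 2\pi i\,x\,g(x,k)\;\pm\;i\,\partial_{x}g(x,k)\;=\;\mu\,g(x,k),\qquad x\in\R,\ k\in\Z_{q}. \]

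First I would solve this ODE for each fixed $k$. For the upper sign it rearranges to $\partial_{x}g(x,k)=-(2\pi x+i\mu)\,g(x,k)$, whose solutions are $g(x,k)=c_{k}\,e^{-\pi x^{2}-i\mu x}$ with $c_{k}\in\C$; for the lower sign one gets instead $\partial_{x}g(x,k)=(2\pi x+i\mu)\,g(x,k)$, so $g(x,k)=c_{k}\,e^{+\pi x^{2}+i\mu x}$. The next step is the membership check in $\M{1}{s}(\R\times\Z_{q})\subseteq L^{2}(\R\times\Z_{q})$. In the upper-sign case $|e^{-\pi x^{2}-i\mu x}|=e^{-\pi x^{2}+(\mathrm{Im}\,\mu)\,x}$ has Gaussian decay, so the function $x\mapsto e^{-\pi x^{2}-i\mu x}$ and all its polynomial multiples decay rapidly; hence it lies in $\Schw(\R)\subseteq\M{1}{s}(\R)$ for every $s\ge 0$ and every $\mu\in\C$, so $g\in\M{1}{s}(\R\times\Z_{q})$ and these are genuine solutions. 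In the lower-sign case the factor $e^{+\pi x^{2}}$ is not square-integrable, so $g\notin L^{2}(\R\times\Z_{q})$ unless all $c_{k}=0$. Writing $\mu=:\lambda$, this shows that the non-zero solutions of either condition are precisely the generalized Gaussians $g(x,k)=c_{k}\,e^{-\pi x^{2}-i\lambda x}$, $\lambda\in\C$, $\{c_{k}\}\in\C^{q}$, which is the assertion. (Any such $g$ is a non-zero element of $\M{1}{s}(\R\times\Z_{q})$, hence generates a continuous Gabor frame by the Moyal identity \eqref{eq:moyal}, and — arguing exactly as in Theorem \ref{th:soliton-solution} — it minimizes the energy functional \eqref{eq:cont-gabor-functional}.)

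I do not expect a serious obstacle: the core is a separable linear ODE. The points that require care are (a) translating ``lies in the one-dimensional space $\mathrm{span}\,g$'' into the scalar relation $(\nabla_{1}\pm i\nabla_{2})g=\mu g$, bearing in mind that a priori $\mu$ is an arbitrary complex number and that $g\equiv0$ is always a trivial solution; (b) the growth dichotomy between the two signs, which is exactly what picks out the decaying Gaussian, together with the observation that a complex $\lambda$ is still admissible because the $e^{-\pi x^{2}}$ factor dominates the exponential $e^{(\mathrm{Im}\,\lambda)\,x}$; and (c) noting that since $\nabla_{1}\pm i\nabla_{2}$ is only defined on $\M{1}{s+1}(\R\times\Z_{q})$, the hypothesis already places $g$ in the domain, so passing from the operator identity to the classical ODE raises no regularity issue.
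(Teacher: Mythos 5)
Your proof is correct and is essentially the argument the paper intends (the paper leaves this proof implicit, as the computation parallels Lemma \ref{le:soliton-diff-solution}): you reduce the span condition for nonzero $g$ to the eigenvalue equation $(\nabla_{1}\pm i\nabla_{2})g=\mu g$, which decouples over $k\in\Z_{q}$ into a first-order linear ODE in the $\R$-variable with the same $\mu$ for every $k$, and then discard the growing branch by membership in $\M{1}{s}(\R\times\Z_{q})\subseteq L^{2}(\R\times\Z_{q})$. Your remark that the lower-sign equation has only the trivial solution in $\M{1}{s}(\R\times\Z_{q})$, so that all nonzero solutions of either condition arise from the upper sign and are exactly the generalized Gaussians $c_{k}e^{-\pi x^{2}-i\lambda x}$, is a correct (and welcome) sharpening of the statement as printed.
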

\noindent
Hence, for the continuous Gabor transform and its associated soliton equation, the only solution that we can produce is the generalized Gaussian. Note the major difference here compared to the one for the discrete Gabor frames considered before: there one needs $(\nabla_{1}\pm i\nabla_{2})g$ to lie in the space spanned by all time-frequency shifts of the adjoint lattice of the generator $g$. It is therefore reasonable to conjecture the existence of more functions besides the Gaussian that solve the soliton equation for discrete Gabor frames as described in the previous sections.

\vspace{0.5cm}
\noindent\textbf{Acknowledgements.}
The second author gratefully acknowledges that this work was 
carried out during the tenure of an ERCIM 'Alain Bensoussan` Fellowship
Programme at NTNU.


\end{document}